\definecolor{dkgreen}{rgb}{0,0.6,0}
\definecolor{gray}{rgb}{0.5,0.5,0.5}
\definecolor{mauve}{rgb}{0.58,0,0.82}
\newtheorem{theorem}{Theorem}[section]
\newtheorem{assumption}{Assumption}
\newtheorem{definition}{Definition}
\newtheorem{proposition}{Proposition}[section]
\newtheorem{corollary}{Corollary}[section]
\newtheorem{lemma}{Lemma}[section]
\theoremstyle{remark}
\newtheorem{remark}{Remark}[section]
\newcommand{\law}{\mathrm{P}^{(T)}}
\newcommand{\prob}{\mathbb{P}}
\newcommand{\wto}{\Rightarrow}
\newcommand\indp{\protect\mathpalette{\protect\independenT}{\perp}}
\def\independenT#1#2{\mathrel{\rlap{$#1#2$}\mkern2mu{#1#2}}}
\newcommand{\E}{\mathbb{E}}
\newcommand{\rE}{\mathrm{E}^{(T)}}
\newcommand{\rEf}{\mathrm{E}_f}
\newcommand{\rElim}{\mathbb{E}}
\newcommand{\rL}{\mathrm{L}}
\newcommand{\rH}{\mathrm{H}}
\newcommand{\rP}{\mathrm{P}}
\newcommand{\F}{\mathfrak{F}}
\newcommand{\var}{\operatorname{Var}}
\newcommand{\cov}{\operatorname{Cov}}
\newcommand{\rd}{\mathrm{d}} % for integration and derivatives
\newcommand{\SN}{\mathbb{N}} % positive integers
\newcommand{\SR}{\mathbb{R}} % reals
\newcommand{\indicator}{\mathbbm{1}}
\newcommand{\trans}{\prime}
\newcommand{\mi}{\mathcal{M}}
\newcommand{\s}{s}
\newcommand{\stat}{S}
\newcommand{\test}{\varphi}
\newcommand{\cv}{\kappa}
\newcommand{\siglevel}{\alpha}
\newcommand{\transfor}{\mathfrak{g}}
\newcommand{\transgroup}{\mathfrak{G}}
\newcommand{\diag}{{\rm diag}}
\newcommand{\standardb}{\lpinte^{\star}}
\newcommand{\standardc}{c^{\star}}
\newcommand{\betaform}{\delta}
\newcommand{\fpert}{h}     % function perturbation
\newcommand{\pcons}{\mu}      % parameter constant
\newcommand{\pinte}{\beta}    % parameter interest
\newcommand{\ppers}{\gamma}   % parameter persistent
\newcommand{\lpinte}{b}    % local parameter interest
\newcommand{\lppers}{c}    % local parameter interest
\newcommand{\e}{\varepsilon}
\newcommand{\et}{\varepsilon_t}
\newcommand{\ey}{\varepsilon^y}
\newcommand{\ex}{\varepsilon^x}
\newcommand{\eyt}{\varepsilon^y_t}
\newcommand{\ext}{\varepsilon^x_t}
\newcommand{\score}{\ell}
\newcommand{\llr}{\mathrm{LLR}}
\newcommand{\lllr}{\mathcal{L}}
\newcommand{\cs}{\Delta}
\newcommand{\qt}{\mathcal{Q}}
\newcommand{\yt}{y_t}
\newcommand{\coveg}{\bm{\sigma}_{\varepsilon g}}
\newcommand{\J}{J}
\newcommand{\Jf}{J_f}
\newcommand{\Jfyb}{J_{f_{y}\fpert}}
\newcommand{\Jfxb}{J_{f_{x}\fpert}}
\newcommand{\Jfybk}{J_{f_{y}\fpert_k}}
\newcommand{\Jfxbk}{J_{f_{x}\fpert_k}}
\newcommand{\Jg}{J_g}
\newcommand{\Jfyfy}{J_{f_{yy}}}
\newcommand{\Jfyfx}{J_{f_{yx}}}
\newcommand{\Jfxfx}{J_{f_{xx}}}
\newcommand{\Jgy}{J_{g_y}}
\newcommand{\Jgx}{J_{g_x}}
\newcommand{\Jc}{J_p}
\newcommand{\Jcycy}{J_{p_{yy}}}
\newcommand{\Jcycx}{J_{p_{yx}}}
\newcommand{\Jcxcx}{J_{p_{xx}}}
\newcommand{\Wf}{W_{\ell_{f}}}
\newcommand{\Wfy}{W_{\ell_{f_y}}}
\newcommand{\Wfx}{W_{\ell_{f_x}}}
\newcommand{\We}{W_{\varepsilon}}
\newcommand{\Wb}{W_{\fpert}}
\newcommand{\Zf}{Z_{\ell_{f}}}
\newcommand{\Zfy}{Z_{\ell_{f_y}}}
\newcommand{\Zfx}{Z_{\ell_{f_x}}}
\newcommand{\Zg}{Z_{\ell_{g}}}
\newcommand{\Ze}{Z_{\varepsilon}}
\newcommand{\Zb}{Z_{\fpert}}
\newcommand{\Bfy}{B_{\ell_{f_y}}}
\newcommand{\Bfx}{B_{\ell_{f_x}}}
\newcommand{\Bf}{B_{\ell_{f}}}
\newcommand{\Be}{B_{\varepsilon}}
\newcommand{\Bb}{B^{\Wb}}
\newcommand{\Bg}{B_{\ell_{g}}}
\newcommand{\Bgy}{B_{\ell_{g_y}}}
\newcommand{\Bgx}{B_{\ell_{g_x}}}
\newcommand{\By}{B_{g_y}}
\newcommand{\BTg}{B_{\ell_{g}}^{(T)}}
\newcommand{\BTgy}{B_{\ell_{g_y}}^{(T)}}
\newcommand{\BTgx}{B_{\ell_{g_x}}^{(T)}}
\newcommand{\W}{W}
\newcommand{\Wg}{W_{\ell_{g}}}
\newcommand{\Wy}{W_{g_y}}
\newcommand{\WTf}{W^{(T)}_{\ell_{f}}}
\newcommand{\WTfy}{W^{(T)}_{\ell_{f_y}}}
\newcommand{\WTfx}{W^{(T)}_{\ell_{f_x}}}
\newcommand{\WTb}{W^{(T)}_{\fpert}}
\newcommand{\WTe}{W^{(T)}_{\varepsilon}}
\newcommand{\WTk}{W^{(T)}_{\fpert_{k}}}
\newcommand{\Wperp}{W_\perp}
\newcommand{\Bperp}{B_\perp}
\newcommand{\SequenceExperiment}{\mathcal{E}^{(T)}\left(f\right)}
\newcommand{\LimitExperiment}{\mathcal{E}\left(f\right)}
\newcommand{\LimitExperimentM}{\mathcal{E}_\mi\left(f\right)}
\newcommand{\opone}{o_\rP(1)}
\newcommand{\refcorrm}{\mathbf{R}_g}
\newcommand{\refcorrp}{\rho_g}
\title{Semiparametric Testing with Highly Persistent Predictors\thanks{We thank Gaia Becheri for significant input on an earlier version of this paper. We also thank Peter Boswijk, Feike Drost, Ramon van den Akker, two referees, the associate editor, and participants at the European Conferences of the Econometrics Community (EC2) conference, Amsterdam, Dec 2017; Aarhus University, Sep 2018 for helpful comments.}}
\author[1]{Bas J.M. Werker}
\author[2]{Bo Zhou}
\affil[1]{Econometrics and Finance Group, Tilburg University}
\affil[2]{Department of Economics and Finance, Durham University}
\date{}
\begin{document}

\maketitle

\abstract{\noindent We address the issue of semiparametric efficiency in the bivariate regression problem with a highly persistent predictor, where the joint distribution of the innovations is regarded an infinite-dimensional nuisance parameter. Using a structural representation of the limit experiment and exploiting invariance relationships therein, we construct invariant point-optimal tests for the regression coefficient of interest. This approach naturally leads to a family of feasible tests based on the component-wise ranks of the innovations that can gain considerable power relative to existing tests under non-Gaussian innovation distributions, while behaving equivalently under Gaussianity. When an i.i.d.\ assumption on the innovations is appropriate for the data at hand, our tests exploit the efficiency gains possible. Moreover, we show by simulation that our test remains well behaved under some forms of conditional heteroskedasticity.

\textbf{JEL classification:} 
C12, C14

\textbf{Keywords:}~
predictive regression,
limit experiment,
LABF,
maximal invariant,
rank statistics.

\clearpage
\section{Introduction}\label{sec:Introduction}
\noindent Over the past two decades, inference for the bivariate regression model with a highly persistent predictor has been well studied under the assumption of bivariate Gaussian innovations. Several procedures have been proposed in the econometric literature, see  \citet{CavanaghElliotStock1995}, \citet{CampbellYogo2005}, \citet{JanssonMoreira2006}, \citet{EMW2015}, and \citet{MoreiraMourao2017}. These inference procedures are all constructed based on the assumption of Gaussian innovations and, while their validity has been established under weaker assumptions, the asymptotic power of all these procedures cannot go beyond the Gaussian power envelope.

In the present paper we show that, when the application supports an additional assumption of serially independent innovations, sizable power gains are possible beyond the Gaussian power envelope. We establish this result by studying in detail the invariance structures that are present in the limiting experiment associated with the predictive regression model. This leads to a semiparametric power envelop which, under non-Gaussian innovation distributions, lies above the Gaussian power envelope. In that case, even without knowing the innovation distribution, our method dominates existing QMLE-based methods.

Our results precisely quantify the statistical efficiency gains from non-Gaussian innovation distributions when innovations are serially independent in predictive regression models. Under such, arguably restrictive assumption, we construct semiparametrically optimal (in a sense to be made precise later) tests. Whether in concrete applications the assumption of serially independence is warranted, is an empirical question. When it is, it can, as our results show, be exploited leading to sizable power gains (of, as Section~\ref{sec:MonteCarlo} shows, up to 30\% under Student-$t_3$ innovation distributions).
% It is, in concrete applications, up to the researcher to, as always, balance efficiency and robustness of the inference adopted. However, to make an informed choice here the possible efficiency gains or costs must be known in the first place. 
Symmetrically, to make an informed choice, we study the behavior of our test when the innovations are not i.i.d.\ but exhibit conditional heteroskedasticity as often found in (financial) applications. Section~\ref{sec:MCCH} shows that, for the deviations studied, our test still has desirable size and power properties.

We note that our conceptual ideas reach further. We could, for instance, allow for serial dependence along the lines of \citet{ZvdAW2016} where an AR-type model on the error is imposed. Conditional heterogeneity could formally be addressed along the lines of \citet{LingMcaleer2003} where a GARCH-type structure on the error is imposed; or following \citet{Boswijk2005} where the (potentially nonstationary) volatility is estimated nonparametrically. These relaxations would technically be non-trivial and are left for future research. Note that, in view of the robustness-efficiency trade-off \citep[see, e.g.,][]{Muller2011}, an i.i.d.\ assumption on the innovations ultimately driving the error term is not avoidable. Our test gives the empirical researchers an additional option: an improved power when innovations are i.i.d.\ and non-Gaussian. 

The study of (optimal) semiparametric inference in the predictive regression model is complicated by the nonstandard asymptotic behavior induced by the local-to-unity asymptotics on the persistence parameter. More precisely, the associated likelihood ratios are of the \textit{Locally Asymptotically Brownian Functional} (LABF) form in \citep[see][]{Jeganathan1995} and henceforth outside the conventional \textit{Locally Asymptotically Normality} (LAN) world. As a consequence, the usual semiparametric approach based on projecting the score of the parameter of interest on the tangent space of nuisance scores is not straightforward. In particular, the model does not feature an adaptiveness property, which complicates its analysis. \citet{Jansson2008} deals with the unit root testing problem, which also admits the LABF form, by guessing and then proving a least favorable direction of parametric submodels. An alternative approach has been proposed for the unit root testing problem in \citet{ZvdAW2016} and generalized to other common types of limiting experiments in \citet{Zhou2020}. In the present paper we apply these techniques to the predictive regression model.

The key idea is to exploit invariance structures in a so-called ``structural'' representation of the limit experiment. This approach sets us apart from most of the statistical and econometric literature where invariance arguments are used in the sequence of experiments. Instead, we obtain procedures which are invariant in the \textit{limit} experiment, thereby making the analysis tractable and applicable to many models. Furthermore, the unique bivariate nature of the predictive regression model leads to a nonstandard multivariate structure in the associated limit experiment (see Theorem~\ref{thm:StructuralLimitExperiment}). Therefore, we present the approach in detail in the present paper.
%Furthermore, the unique bivariate nature of the predictive regression model renders the associated limit experiment into a nonstandard multivariate structure. Therefore, regarding it as a key feature, we present this approach fairly detailedly in the present paper.

Our contribution is twofold. First, we derive the semiparametric power envelope for (asymptotically) invariant tests in case the predictor's persistence level is assumed to be known, based on the structural LABF limit experiments. More precisely, Girsanov's theorem, combined with the limiting likelihood ratios for LABF experiments, leads to a description of the limit experiment by stochastic differential equations (SDEs). The observations in the limit experiment correspond to the limits of partial-sum processes of the innovations and score functions in the predictive regression model. In this structural representation of the limit experiment, we find that the nuisance parameters induced by the density function of the innovations only appear in the drifts of the driving Brownian motions. This leads to an invariance restriction by taking the Brownian bridges (which are invariant with respect to these drifts) of these processes, and allows us to remove the nonparametric nuisance parameter (the density $f$ of the innovations). We show that this also generates the \textit{maximal invariant}. In this way, we avoid the problem of explicitly finding the least-favorable submodel. The likelihood of the maximal invariant immediately, by the Neyman-Pearson lemma, leads to the semiparametric power envelope. 

Second, we propose a family of semiparametric feasible tests that has desirable properties. These tests are constructed using (asymptotically) sufficient statistics that are based on the increments of innovations, their component-wise ranks, and a pair of chosen marginal reference densities for both innovations including a reference correlation parameter. The ranks appear naturally as rank-based partial-sum score processes which weakly converge to the Brownian bridge that is invariant w.r.t.\ the density perturbation parameters. To further eliminate the remaining nuisance parameter, namely the predictor's persistence level, we employ the \textit{Approximate Least Favorable Distribution} (ALFD) approach proposed by~\citet{EMW2015}. We also follow their suggestion to switch to standard asymptotic approximations when the persistence parameter is far from unity. This helps to control the size of our tests uniformly under both non-stationarity and stationarity, see Appendix~\ref{app:Switching}. The tests thus obtained are semiparametric in the sense that they have correct asymptotic sizes (under all innovation densities allowed) regardless of the choices of the marginal reference densities or the reference correlation.

Next to their uniform (relative to our model) validity, our test are more powerful than existing tests when the true innovation density is non-Gaussian. In particular, we compare our test to \citet{EMW2015} (henceforth denoted as EMW), which is based on Gaussian likelihood ratios \citep[see also][]{JanssonMoreira2006}. Our asymptotic analysis using invariance arguments shows that, under non-Gaussian innovations, the EMW test actually is measurable with respect to an invariant in the limit that is not \emph{maximally} invariant. As a result, under non-Gaussianity, we can construct tests that outperform the Gaussian power envelope and, thus, outperform the EMW test; see Remark~\ref{rem:GaussianInvariant}. The power improvement depends on the choices of the marginal reference densities: when they are ``closer'' to the true marginal densities, we gain more power (and, again, while always having the desired size). Additionally, if one fixes the marginal reference densities to be Gaussian, our test is generally still more powerful than the EMW test under non-Gaussian innovation density; while under Gaussian innovation density, our test performs equivalently to the EMW test. This property is often referred to as the Chernoff-Savage result (see \citet{ChernoffSavage1958}). In the present LABF setting we have not been able to formally prove this Chernoff-Savage result, but our simulations indicate that this property nevertheless may hold. 
%Note, once more, that the price to pay is that we need to impose serial independence of the innovations while the EMW test has correct size under innovations that are merely martingale increments. Of course, under Gaussianity, the martingale restriction is equivalent to serial independence.

Our rank-based test can be regarded a generalized version of quasi-likelihood ratio tests which take the reference density to be Gaussian. The extra freedom to choose the reference density also comes with the cost of actually choosing it. However, we note that, in line with traditional quasi-likelihood methods, one can always choose the Gaussian reference density. Based on the classical Chernoff-Savage result, we conjecture that our rank-based procedure will then always outperform the quasi-likelihood procedure. This is confirmed by simulations and intuition, but, as discussed below, given the non-standard limiting experiment structure, we have not been able to prove this formally. Alternative, one could study a plug-in estimator where the reference density is nonparametrically estimated. We do not study this formally in the present paper; however, see Section~\ref{sec:MCEstimatedG} for some simulation results. Similarly, one may envision an approach where one pre-tests the residuals for, e.g., high kurtosis and chooses a references density based on that pre-test result.

The paper is organized as follows. In Section~\ref{sec:Model}, we introduce the model and testing problem under consideration. In Section~\ref{sec:PowerEnvelope}, we develop the asymptotic power envelope for test that are (asymptotically) invariant with respect to the innovation density $f$, assuming the predictor's persistence parameter $\gamma$ is known. This development is based on the theory of limit experiments (see, e.g., \citet{LeCam1986} and \citet{vdVaart2000}) and a structural version for models of LABF likelihood ratios (see \citet{ZvdAW2016}). In particular, this section explains where our power gains come from, see Remark~\ref{rem:GaussianInvariant}. In Section~\ref{sec:FeasibleTests}, we employ the ALFD approach proposed by \citet{EMW2015}, among several available choices in the literature, to eliminate the nuisance parameter $\ppers$. In Section~\ref{sec:MonteCarlo}, we report large- and small-sample performances of our tests under both i.i.d.\ and conditional heteroskedastic errors. Section~\ref{sec:Conclusions} concludes. All proofs are gathered in the appendix.

\section{Model}\label{sec:Model}

\noindent Let $y_t$ denote a random variable, observable at time $t$, that we wish to predict at time $t-1$ using an observable explanatory variable $x_{t-1}$. We consider the predictive regression model
\begin{align}
\label{eqn:model_1}
y_t &= \pcons + \pinte  x_{t-1} + \eyt,\\
\label{eqn:model_2}
x_t-\alpha &= \ppers (x_{t-1}-\alpha) + \ext, 
\end{align}
with $x_0=0$.\footnote{Note that this assumption on the initial value $x_0$ could possibly be relaxed to the weaker assumption $T^{-1/2}x_0=\opone$ under $\pinte=0$ and $\ppers=1$. One can possibly proceed along the lines of \citet{MullerElliott2003}; see also a remark on this point in Section~4 of \citet{JanssonMoreira2006}. We keep the assumption $x_0=0$ for simplicity.} The parameter space is given by $\pcons\in\SR$, $\alpha\in\SR$, $\pinte\in\SR$, and $\ppers\in(-1,1]$. We have observations available for $t=1,\dots,T$.

Equation~(\ref{eqn:model_2}) features, along the lines of \citet{CavanaghElliotStock1995} and \citet{JanssonMoreira2006}, an intercept $\alpha$. However, as $\pcons$ is a nuisance parameter in our model, the intercept $\alpha$ can be subsumed in $\pcons$ without affecting inference on $\pinte$. Indeed, our test statistics will only depend on the \emph{increments} of $x_t$, denoted by $\Delta x_t$, and their associated ranks and, thus, they are invariant with respect to $\alpha$. We therefore omit $\alpha$ in the rest of this paper.

To eliminate the nuisance intercept parameter $\pcons$ in~(\ref{eqn:model_1}), one can directly impose an invariance restriction in the sequence of predictive regression experiments. For instance, the \citet{JanssonMoreira2006} test is based on the maximal invariant statistic $(y_2- y_1, y_3 - y_1, \dots, y_T - y_1)^\prime$. In the present paper, our statistic is only based on $y_t$'s through their ranks and, thus, also enjoys finite-sample invariance w.r.t.\ $\mu$. To simplify notation, we set $\pcons=0$ throughout the paper and nowhere assume $\rEf(\eyt)=0$. We will need to impose $\rEf(\ext)=0$: allowing for deterministic trends in $x_t$ would lead to an entirely different asymptotic analysis.

Summarizing, as outlined in the introduction, we assume that the innovations $\et=(\eyt,\ext)^\trans$ are independent and identically distributed (i.i.d.) with (bivariate) density $f$ satisfying the following condition.
\begin{assumption}\label{ass:density_f}
\begin{enumerate}
\item[(a)] $\rEf(\ext)=0$ and $\var_f(\et)=\begin{pmatrix}\sigma_{y}^2 & \rho\sigma_{y}\sigma_{x} \\ \rho\sigma_{y}\sigma_{x} & \sigma_{x}^2\end{pmatrix}$ is a finite positive-definite matrix.
\item[(b)] The density $f$ is absolutely continuous with a.e.\ derivative $\dot{f}=\begin{pmatrix}\dot{f}_y \\ \dot{f}_x\end{pmatrix}$.
\item[(c)] The (standardized) Fisher information for location,
\begin{align*}
\Jf=\begin{pmatrix}\Jfyfy & \Jfyfx \\ \Jfyfx & \Jfxfx\end{pmatrix}= \rEf\left(\score_f \score_f^\trans\right),
\end{align*} 
where $\score_f$ is the (standardized) location score function
\begin{align*}
\score_f
=\begin{pmatrix}\sigma_y\score_{f_y} \\ \sigma_x\score_{f_x}\end{pmatrix}
=\begin{pmatrix}-\sigma_y\dot{f}_y/f \\ -\sigma_x\dot{f}_x/f\end{pmatrix},
\end{align*}
is finite.\footnote{Being a Fisher information for location, $\Jf$ is automatically nonsingular and positive definite, see \citet[Theorem 2.3]{MayerWolf1990}.}
\item[(d)]  $f>0$.
$\hfill \Box$
\end{enumerate}
\end{assumption}

Let $\F$ denote the set of densities satisfying Assumption~\ref{ass:density_f}.

The Fisher information $\Jf$ and scores $\score_f$ for location are standardized in the sense that they are actually those related to $\eyt/\sigma_y$ and $\ext/\sigma_x$. As a result, $\score_f$ and $\Jf$ do not depend on $\sigma_y$ or $\sigma_x$. Note, however, that they both still depend on the correlation between the innovations $\eyt$ and $\ext$, i.e., they still depend on $\rho$.

We are interested in (optimal) tests for the (composite) null hypothesis
\begin{align}
\rH_0:\, \pinte=0,\,\ppers\in(-1,1],\, f\in\F,
\end{align}
versus the one-sided alternative
\begin{align}
\rH_1:\, \pinte>0,\,\ppers\in(-1,1],\, f\in\F.
\end{align}

As the literature focuses on test derived using an assumed Gaussian innovation density, we will throughout this paper consider Gaussian densities as a special case. This will allow us to make explicit where the power improvements come from in the case of non-Gaussian, serially independent, innovations $(\ey,\ex)$.

\begin{remark}[Gaussian $f$]\label{rem:Gaussian1}
In case $f$ is zero-mean bivariate Gaussian with correlation matrix $\mathbf{R} = \begin{pmatrix}
1 & \rho \\ \rho & 1\end{pmatrix}$, Assumption~\ref{ass:density_f} is satisfied with $\score_f(\ey,\ex) = \mathbf{R}^{-1}\begin{pmatrix} \ey/\sigma_y \\ \ex/\sigma_x \end{pmatrix}$ and $\Jf=\mathbf{R}^{-1}$.
\end{remark}

\subsection{Local perturbations}\label{subsec:LocalPerturbations}

\noindent Following the by now standard approach in the literature, we study the limit experiment in the sense of H\a'{a}jek-Le Cam by considering local alternatives for all model parameters, that is, for both the parameter of interest $\pinte$ and the nuisance parameters ($\ppers$ and $f$). For $\pinte$ and $\ppers$ the appropriate rates of convergence are well known, see, e.g., \citet{ElliottStock1994}, \citet{CampbellYogo2005}, or \cite{JanssonMoreira2006}. More precisely, we consider a $T^{-1}$-localization rate for $\pinte$ and $\ppers$, i.e.,
\begin{align} \label{eqn:localization_1}
\pinte=\pinte^{(T)}(\lpinte)=\frac{\lpinte}{T}\frac{\sigma_y}{\sigma_x}, ~~~ \ppers=\ppers^{(T)}(\lppers)=1+\frac{\lppers}{T},
\end{align}
with $\lpinte\in\SR$ and $\lppers\in(-\infty,0]$.\footnote{We use here the common approach in the literature to restrict the nuisance parameter $\lppers$ to $(-\infty,0]$. We conjecture that all results remain valid, with the obvious modifications, in case one would choose the larger parameter space $c\in\SR$; see, e.g., \citet{MoreiraMourao2017}.} Observe that the local perturbation for $b$ features a scaling by $\sigma_y/\sigma_x$. This ensures that the limit experiment will not depend on $\sigma_y$ and $\sigma_x$ (although it still depends on $\rho$).

The nuisance parameter $f$ is infinite dimensional, so it is somewhat more involved to describe its relevant local perturbations. Introduce the separable Hilbert space
\begin{equation}
\rL_2^{0,f}=\rL_2^{0,f}(\SR^2,\mathcal{B})
=\left\{  \fpert\in \rL_2^f(\SR^2,\mathcal{B})\, \left|\,  
\rEf \fpert(\e)=0,\, \rEf \ex\fpert(\e)=0\right.\right\}, 
\end{equation}
where $ \rL_2^f(\SR^2,\mathcal{B})$ denotes, the space of Borel-measurable functions $\fpert:\,\SR^2\to\SR$ satisfying $\rEf \fpert^2(\e)=\int_{\SR^2}\fpert^2(\e) f(\e)\rd\e<\infty$. The model assumption $\rEf(\ext)=0$ induces the restriction that local perturbations for $f$ are orthogonal to the first component of $\e$: $\rEf \ex\fpert(\e)=0$.

The separability of the Hilbert space $\rL_2^{0,f}$ ensures the existence of a countable orthonormal basis $\fpert_k$, $k\in\SN$, such that each $\fpert_k$ is bounded and two times continuously differentiable with bounded derivatives; see, e.g., \citet[Theorem~3.14]{Rudin1987}. Therefore, any function $\fpert\in \rL_2^{0,f}$ can be written as $\fpert= \sum_{k=1}^\infty  \eta_k \fpert_k $, for some $\eta = (\eta_k)_{k\in\SN} \in \ell_2=\{ (z_k)_{k\in\SN}\, | \, \sum_{k=1}^\infty z_k^ 2<\infty\}$. Besides the space $\ell_2$, we also need the space $c_{00}$ which is defined as the subset of sequences with finite support, i.e.,
\begin{align} \label{eqn:c00}
c_{00}=\left\{ (z_k)_{k\in\SN}\in\SR^\SN  \, \left|\, \sum_{k=1}^\infty 1\{z_k\neq 0\}<\infty\right.  \right\}. 
\end{align}

Observe that $c_{00}$ is a dense subspace of $\ell_2$. It is introduced only in the asymptotic analysis to avoid convergence of infinite-dimensional processes and possibly induced mathematical complications, see Section~\ref{subsec:PartialSumProcesses}. However, the restriction $\eta\in c_{00}$ will not affect our conclusions. Indeed, considering $\eta\in c_{00}$ restricts our analysis to a subset of all semiparametric models which potentially makes the obtained upper bound higher. However, as we are able to show that this higher upper bound is (point-wisely) attainable by feasible tests for arbitrary innovation density in sequence, see Remark~\ref{rem:attainability}, it constitutes the semiparametric power envelope and the test is semiparametrically optimal.

We  model local perturbations to the innovation density $f$ as
\begin{align} \label{eqn:dens_perturb}
f_{\eta}^{(T)}(e)
 =
f(e)\left(1+\frac{1}{\sqrt{T}}\sum_{k=1}^\infty \eta_k \fpert_k(e)\right)
\mbox{ for all }e\in\SR^2,
\end{align} 
where $\eta\in c_{00}$. We thus use a standard localization rate $T^{-1/2}$ for the bivariate density $f$. Indeed, Proposition~\ref{proposition_LAQ} below shows that all the above rates are appropriate in the sense that they lead to contiguous alternatives for the induced probability measures as $T$ tends to infinity.

In order to show that the above localization of the innovation density is valid, we need to establish that $f^{(T)}_{\eta}\in\F$. This is the content of the next proposition.
\begin{proposition} \label{prop:disturbanceinF}
Let $f\in\F$ and $\eta\in c_{00}$, then there exists a finite integer $\widetilde{T}$ such that for all $T\geq\widetilde{T}$ we have $f^{(T)}_{\eta}\in\F$.
\end{proposition}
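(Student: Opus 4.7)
The plan is to verify the four parts of Assumption~\ref{ass:density_f} for $f^{(T)}_\eta$ in sequence, using the crucial observation that $\eta\in c_{00}$ makes $\fpert_\eta(e):=\sum_{k=1}^\infty \eta_k\fpert_k(e)$ a \emph{finite} sum. Since each $\fpert_k$ is bounded with bounded derivatives, so are $\fpert_\eta$ and $\dot\fpert_\eta$; set $M:=\|\fpert_\eta\|_\infty<\infty$. For $T\geq\widetilde T:=4M^2+1$, the factor $g_T(e):=1+\fpert_\eta(e)/\sqrt T$ lies in $[1/2,3/2]$ uniformly in $e$, which immediately yields $f^{(T)}_\eta(e)=f(e) g_T(e)>0$ (part~(d)) and, using $\rEf\fpert_k(\e)=0$, $\int f^{(T)}_\eta \rd e=1$, so $f^{(T)}_\eta$ is a bona fide density.

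For part~(a), the orthogonality $\rEf[\ex\fpert_k(\e)]=0$ built into the definition of $\rL_2^{0,f}$ yields $\int\ex f^{(T)}_\eta\rd\e=0$ termwise. Finiteness of every entry of $\var_{f^{(T)}_\eta}(\e)$ follows from the pointwise domination $f^{(T)}_\eta\leq (3/2)f$ together with the second-moment assumption for $f$. For positive-definiteness, dominated convergence with envelope $(3/2)\|\e\|^2 f$ gives $\var_{f^{(T)}_\eta}(\e)\to\var_f(\e)$ as $T\to\infty$; since the limit is PD by Assumption~\ref{ass:density_f}(a), enlarging $\widetilde T$ if necessary preserves PD. Part~(b) is immediate from the product rule: $f$ is absolutely continuous and $g_T\in C^1$ with bounded derivatives, so $f^{(T)}_\eta$ is absolutely continuous with a.e.\ gradient $\dot f^{(T)}_\eta=\dot f\, g_T+f\,\dot\fpert_\eta/\sqrt T$.

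For part~(c), dividing through by $f^{(T)}_\eta$ gives the (unstandardized) score
\[
-\dot f^{(T)}_\eta/f^{(T)}_\eta=-\dot f/f-(\dot\fpert_\eta/\sqrt T)/g_T,
\]
whose second summand is uniformly bounded by $2\|\dot\fpert_\eta\|_\infty/\sqrt T$. Using $(a+b)^2\leq 2a^2+2b^2$ and the domination $f^{(T)}_\eta\leq (3/2)f$,
\[
\rE_{f^{(T)}_\eta}\bigl\|\dot f^{(T)}_\eta/f^{(T)}_\eta\bigr\|^2\leq 3\,\rEf\|\dot f/f\|^2 + 8\|\dot\fpert_\eta\|_\infty^2/T<\infty,
\]
so the (unstandardized) Fisher information of $f^{(T)}_\eta$ is finite; rescaling by the finite marginal standard deviations shown in part~(a) yields a finite $\Jf^{(T)}_\eta$, which is automatically positive-definite as a location Fisher information (see the footnote referring to \citet{MayerWolf1990}). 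The only step requiring any care is this last one, where one must avoid circularity since the score of $f^{(T)}_\eta$ contains the unstandardized $\dot f/f$, which is only $f$-square-integrable \emph{a priori}; this is the main obstacle and it is handled cleanly by the pointwise bound $f^{(T)}_\eta\leq (3/2)f$ that the choice of $\widetilde T$ guarantees.
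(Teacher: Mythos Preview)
Your proof is correct and follows exactly the standard route the paper has in mind: the paper omits its own proof entirely, deferring to Proposition~3.1 in \citet{ZvdAW2016} with the remark that the argument is identical but on $\SR^2$ instead of $\SR$. Your verification of Assumption~\ref{ass:density_f}(a)--(d) via the uniform bound $g_T\in[1/2,3/2]$ (exploiting that $\eta\in c_{00}$ makes $\fpert_\eta$ a bounded $C^1$ function) is precisely that argument spelled out.
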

The proof uses exactly the same arguments as in the proof of Proposition~3.1 in \citet{ZvdAW2016}, but with support $\SR^2$ instead of $\SR$. It is therefore omitted.

In terms of the local parameters $\lpinte$, $\lppers$, and $\eta$, the hypothesis of interest becomes
\begin{align}\label{eqn:hypothesis_local}
\rH_0:\, \lpinte=0,\,\lppers\in\SR,\,\eta\in c_{00},
\end{align}
versus the one-sided alternative
\begin{align}
\rH_1:\, \lpinte>0,\,\lppers\in\SR,\,\eta\in c_{00}.
\end{align}

\subsection{Partial-sum processes}\label{subsec:PartialSumProcesses}

\noindent In order to derive the limiting experiment for the predictive regression model, we need to introduce some partial-sum processes and study their asymptotic behavior. We denote by $\law_{\lpinte,\lppers,\eta;f}$ the law of $(y_1,x_1)^\trans,\dots,(y_T,x_T)^\trans$ under the model (\ref{eqn:model_1})--(\ref{eqn:model_2}), where the parameters $\pinte$ and $\ppers$ are given by~(\ref{eqn:localization_1}) and the innovation density is given by~(\ref{eqn:dens_perturb}). Formally, we define the sequence of experiments of interest as 
\begin{equation}
\SequenceExperiment
 :=
\left(\Omega^{(T)},\mathcal{F}^{(T)},\left\{\law_{\lpinte,\lppers,\eta;f}:\lpinte,\lppers\in\SR,\eta\in c_{00}\right\}\right), ~~~ T\in\SN,
\end{equation}
where $\Omega^{(T)}:=\SR^{2\times T}$ and $\mathcal{F}^{(T)}:=\mathcal{B}(\SR^{2\times T})$. We denote the expectation taken under the measure $\law_{0,0,0;f}$ by $\rE$.

Let us already mention that we will also introduce a collection of probability measures $\prob_{\lpinte,\lppers,\eta}$, defined on a probability space $(\Omega,\mathcal{F})$, representing the limit experiment $\LimitExperiment$ in Section~\ref{subsec:limitexperiment} below; see~(\ref{eqn:LimitExperiment}). We will denote he expectation taken under the measure $\prob_{0,0,0}$ by $\rElim$. That is, $\law$ and $\rE$ refer to finite-sample distributions in the sequence of experiments, while $\prob$ and $\rElim$ refer to distributions in the limit experiment.

As a final ingredient for our analysis, we introduce some partial-sum processes that we use throughout to link the sequence of experiments $\SequenceExperiment$ to the limit experiment $\LimitExperiment$. In particular, define, with $\Delta x_t := x_t - x_{t-1}$, the partial-sum processes\footnote{One may consider partial sum processes that start at $t=2$ in order to make them exactly invariant to translations in $x_t$. This would, clearly, have no effect on our asymptotic results.}
\begin{align}
\label{eqn:WTedef}
\WTe(s)
 &:=
\frac{1}{\sqrt{T}}\sum_{t=1}^{\lfloor sT\rfloor}\frac{\Delta x_t}{\sigma_x},\\
\label{eqn:WTfydef}
\WTfy(s)
 &:=
\frac{1}{\sqrt{T}}\sum_{t=1}^{\lfloor sT\rfloor}\sigma_y\score_{f_y}(\yt,\Delta x_t),\\
\label{eqn:WTfxdef}
\WTfx(s)
 &:=
\frac{1}{\sqrt{T}}\sum_{t=1}^{\lfloor sT\rfloor}\sigma_x\score_{f_x}(\yt,\Delta x_t),\\
\label{eqn:WTkdef}
\WTk(s)
 &:=
\frac{1}{\sqrt{T}}\sum_{t=1}^{\lfloor sT\rfloor}\fpert_{k}(\yt,\Delta x_t), ~~~ k \in \SN.
\end{align}
Here we standardize the first three partial-sum processes by the standard deviations $\sigma_y$ and $\sigma_x$ in order to make their limits scale invariant.
%\textcolor{red}{We denote by $\SK$ the set for every $k\in\SN$ that makes $\eta_k\neq 0$, and by $M = \sum_{k=1}^\infty 1\{\eta_k\neq 0\}$ the dimension of $\SK$. Define the infinite-dimensional process $\WTb=\big(\WTk\big)_{k\in\SN}^\trans$, the $M$-dimensional process $\WT_{\fpert,\SK}=\big(\WTk\big)_{k\in\SK}^\trans$, and $\eta_{\SK} := (\eta_k)_{k\in\SK}$. Therefore, by construction, we have \begin{align} \label{eqn:equivalence_T} \eta'\WTk = \eta_{\SK}'\WT_{\fpert,\SK}.\end{align}}
Under $\law_{0,0,0;f}$, by the Functional Central Limit Theorem (see also Lemma~\ref{lem:partialSum}), we have
\begin{align} \label{eqn:partialsumconvergence}
\begin{pmatrix}
\WTe(\s) \\ \WTfy(\s) \\ \WTfx(\s) \\ \WTb(\s) 
\end{pmatrix} 
\wto
\begin{pmatrix}
\We(\s) \\ \Wfy(\s) \\ \Wfx(\s) \\ \Wb(\s)
\end{pmatrix} 
, ~~~ \s\in[0,1],
\end{align}
where the Brownian motions $\We$, $\Wfy$, $\Wfx$ and $\Wb$ are defined on the common probability space $\left(\Omega,\mathcal{F},\prob_{0,0,0}\right)$. We have to be precise about the notion of weak convergence adopted in~(\ref{eqn:partialsumconvergence}) as $\Wb$ is infinite dimensional. In line with stochastic process theory, we mean that all finite-dimensional subprocesses of $\WTb$ weakly converges in the space $D^{M+3}[0,1]$ with the uniform topology, where $M$ is the dimension of the finite-dimensional subprocess considered. This is precisely because we take the local parameter $\eta$ to be in $c_{00}$. For the sake of convenient notation, we write the seemingly infinite-dimensional convergence~(\ref{eqn:partialsumconvergence}). As argued above, we are ultimately able to attain the semiparametric power envelope induced under the restriction $\eta\in c_{00}$ so that we can claim semiparametric optimality.

Next,
% we introduce an infinite-dimensional Brownian motion $\Wb$ as a ``refilled'' version of $\W_{\fpert, \SK}$, where for each dimension $j \in \{1,\dots, M\}$ we have $\Wb^{\eta_{\SK}(j)} = \W_{\fpert, \SK}^{j}$. More formally,
define the column vectors $\Jfyb=(\Jfybk)_{k\in\SN}$ and $\Jfxb=(\Jfxbk)_{k\in\SN}$, where $\Jfybk:=\rEf\left[\sigma_y\score_{f_y}(\et)\fpert_k(\et)\right]$ and $\Jfxbk:=\rEf\left[\sigma_x\score_{f_x}(\et)\fpert_k(\et)\right]$. As we have the equalities $\rEf\left[\ext\score_{f_y}(\et)\right]=-\sigma_y\int_{\SR^2}\ex\frac{\dot{f}_y(\e)}{f(\e)}f(\e)\rd\e=-\sigma_y\int_{\SR^2}\ex\dot{f}_y(\e)\rd\e=0$ and $\rEf\left[\ext\score_{f_x}(\et)\right]=-\sigma_x\int_{\SR^2}\ex\frac{\dot{f}_x(\e)}{f(\e)}f(\e)\rd\e=-\sigma_x\int_{\SR^2}\ex\dot{f}_x(\e)\rd\e=\sigma_x\int_{\SR^2}f(\e)\rd\e=\sigma_x$, the behavior of the Brownian motions $\We$, $\Wfy$, $\Wfx$ and $\Wb$ is described by the covariance matrix
\begin{align} \label{eqn:covariancematrix}
\var\begin{pmatrix} \We(1) \\ \Wfy(1) \\ \Wfx(1) \\ \W_{\fpert}(1) \end{pmatrix} = 
\begin{pmatrix}
1        & 0        & 1          & 0 \\
0        & \Jfyfy   & \Jfyfx     & \Jfyb^\trans \\
1        & \Jfyfx   & \Jfxfx     & \Jfxb^\trans \\
0        & \Jfyb    & \Jfxb      & I_{\infty}
\end{pmatrix},
\end{align}
where $I_{\infty}$ denotes the $\infty$-dimensional identity matrix. The scaling by $\sigma_x$ and $\sigma_y$ introduced in~(\ref{eqn:WTedef})--(\ref{eqn:WTkdef}) is indeed such that the covariance matrix~(\ref{eqn:covariancematrix}) does not depend on $\sigma_x$ or $\sigma_y$. Again, it still depends on $\rho$ through the various $J$ matrices.

%\textcolor{red}{By design, the joint behavior of $\big(\We, \Wfy, \Wfx, \W_{\fpert, \SK}\big)'$ can be implied by (\ref{eqn:covariancematrix}) and, as a limiting version of (\ref{eqn:equivalence_T}), it holds that
%\begin{align} \label{eqn:equivalence_limit}
%\eta'\Wk = \eta_{\SK}'\W_{\fpert,\SK}.
%\end{align} 
%This equality, together with (\ref{eqn:equivalence_T}), ensures the proofs relying on $\eta$, $\WTb$ and $\Wb$ valid.}

Recall that the functions $h_k$ form an orthonormal basis for all zero-mean finite-variance functions that are orthogonal to $\ext$. In view of the covariance matrix~(\ref{eqn:covariancematrix}), we may thus write, for $\s\in[0,1]$,
\begin{align}
\label{eqn:decomposition_Wfy}
\Wfy(\s)
 &=
\Jfyb^\trans\Wb(\s),\\
\label{eqn:decomposition_Wfx}
\Wfx(\s)
 &=
\We(\s) + \Jfxb^\trans\Wb(\s).
\end{align}
Consequently, we also have
\begin{align}
\label{eqn:decomposition_Jfy}
\var\big[\Wfy(1)\big]
 &=
\Jfyfy = \Jfyb^\trans\Jfyb,\\
\label{eqn:decomposition_Jfx}
\var\big[\Wfx(1)\big]
 &=
\Jfxfx = 1+\Jfxb^\trans\Jfxb,\\
\label{eqn:decomposition_Jfyx}
\cov\big[\Wfy(1),\Wfx(1)\big]
 &=
\Jfyfx = \Jfyb^\trans\Jfxb.
\end{align} 

We again consider the special case of a Gaussian density $f$.

\begin{remark}[Gaussian $f$]\label{rem:Gaussian2}
In the situation of Gaussian $f$ as discussed in Remark~\ref{rem:Gaussian1}, we may write the decomposition~(\ref{eqn:decomposition_Jfx}) as $\Wfx=\We-\frac{\rho}{\sqrt{1-\rho^2}}\Wperp$ where $\Wperp$ is the standard Brownian motion generated by the increments $\left(\ey/\sigma_y-\rho\ex/\sigma_x\right)/\sqrt{1-\rho^2}$. Indeed, $\We$ and $\Wperp$ are independent (calculate the correlation of the increments that generate both processes). Thus, we also find $\Jfxb^\trans\Wb(\s)=-\rho\Wperp$ and the decomposition~(\ref{eqn:decomposition_Jfx}) becomes $\Jfxfx=1+\frac{\rho^2}{1-\rho^2}=\frac{1}{1-\rho^2}=\Jfyfy$. Moreover, we have $\Wfy=\frac{1}{\sqrt{1-\rho^2}}\Wperp$ and $\Jfyfx=-\frac{\rho}{1-\rho^2}$. 
\end{remark}

\section{Eliminating the nuisance parameter \lowercase{$f$} by invariance}\label{sec:PowerEnvelope}
\noindent We first focus on eliminating the nuisance parameter $f$ from the testing problem outlined in Section~\ref{sec:Model}. We will see that this can be handled using invariance arguments in the \emph{limit} experiment, which we derive in Section~\ref{subsec:limitexperiment}. In Section~\ref{sec:FeasibleTests}, we consider the nuisance parameter $\ppers$.

We take the following steps in this section:
\begin{enumerate}
	\item Provide a structural representation of the limit experiment (Section~\ref{subsec:limitexperiment}).
	\item Characterize maximally invariant test statistics in this limit experiment (Section~\ref{subsec:MaximalInvariant}).
	\item Provide a structural representation of the invariant limit experiment (Section~\ref{subsec:invariantlimitexperiment}).
	%\item Obtain asymptotically point-optimal invariant tests in this limit experiment (Section~\ref{subsec:PowerEnvelope}).
	\item Provide a feasible version of the asymptotically invariant test statistics to be applied in the sequence of predictive regression experiments (Section~\ref{subsec:RankBasedStatistics}).
\end{enumerate}
These steps also show that, to eliminate the nuisance parameter $f$, instead of studying invariance restrictions in the \emph{sequence} of finite-sample experiments, we only impose them in the \emph{limit} experiment. Unlike for the location parameter $\mu$ (of $\eyt$), this limiting invariance property of the parameter $f$ does not follow directly from exact finite-sample invariance properties. Notably, the existing tests in the literature share this feature, as they also (implicitly) impose the invariance restriction in the limit, though not in the sequence; see Remark~\ref{rem:GaussianInvariant}. As far as we know, all existing tests belong to the class of asymptotically invariant (w.r.t.\ $f$) tests, while our test is semiparametrically optimal in the model we study. Section~\ref{sec:MonteCarlo} shows that this approach leads to considerable power gains in case the innovations are non-Gaussian, while no power is lost under Gaussianity.

\subsection{A Structural Representation of the Limit Experiment}\label{subsec:limitexperiment}
\noindent We consider the limit experiment corresponding to the predictive regression model~(\ref{eqn:model_1})--(\ref{eqn:model_2}) using the local perturbations~(\ref{eqn:localization_1}) and~(\ref{eqn:dens_perturb}), i.e., the limit of the experiments $\SequenceExperiment$ indexed by $T$, by studying the asymptotic behavior of the induced likelihood ratios. We expand the likelihood ratio around $(\pinte,\ppers,\eta)=(0,1,0)$ and derive its limit in the following proposition, which can be interpreted as a generalization of Lemma~4 in \citet{JanssonMoreira2006} by including non-Gaussian distributions and perturbations thereof.\footnote{As preparation for the results in Section~\ref{sec:FeasibleTests}, we allow in this proposition for local perturbations with respect to $\ppers$ even though, in the present section, $\ppers$ is assumed to be known.}
\begin{proposition}\label{proposition_LAQ}
Fix $f\in\F$. Consider the local parameters $\lpinte\in\SR$, $\lppers\in\SR$, and $\eta\in c_{00}$. Then,
\begin{itemize}
\item[(i)] Under $\law_{0,0,0;f}$, the log-likelihood ratio of the predictive regression experiment satisfies, as $T\to\infty$,
\begin{align}\label{eqn:LAQ}
\log\frac{\rd\law_{\lpinte,\lppers,\eta;f}}{\rd\law_{0,0,0;f}} = \cs^{(T)}(\lpinte,\lppers,\eta) - \frac12\qt^{(T)}(\lpinte,\lppers,\eta) + \opone,
\end{align}
where
\begin{align*}
\cs^{(T)}(\lpinte,\lppers,\eta) =&~ \frac{\lpinte}{T}\sum_{t=1}^T \frac{x_{t-1}}{\sigma_x} \sigma_y\score_{f_y}(\yt, \Delta x_t) + \frac{\lppers}{T}\sum_{t=1}^T x_{t-1} \score_{f_x}(\yt, \Delta x_t) \\
&~+ \frac{1}{\sqrt{T}}\sum_{t=1}^T \sum_{k}\eta_k\fpert_{k}(\yt, \Delta x_t), \\
\qt^{(T)}(\lpinte,\lppers,\eta) =&~  \left(\lpinte^2\Jfyfy+\lppers^2\Jfxfx+2\lpinte\lppers \Jfyfx\right)\frac{1}{T^2}\sum_{t=1}^T \frac{x_{t-1}^2}{\sigma_x^2} \\
&~ + \left(2b\Jfyb^\trans\eta+2c\Jfxb^\trans\eta\right)\frac{1}{T^{3/2}}\sum_{t=1}^{T}\frac{x_{t-1}}{\sigma_x} + \eta^\trans\eta.
\end{align*}
\item[(ii)] Still under $\law_{0,0,0;f}$, as $T\to\infty$, we have
\begin{align} \label{eqn:LLRlimit}
\log\frac{\rd\law_{\lpinte,\lppers,\eta;f}}{\rd\law_{0,0,0;f}} \wto \lllr(\lpinte,\lppers,\eta) = \cs(\lpinte,\lppers,\eta) - \frac12\qt (\lpinte,\lppers,\eta),
\end{align}
where
\begin{align*}
\cs(\lpinte,\lppers,\eta)
 =&~ \lpinte\int_0^1 \We(\s)\rd\Wfy(\s) + \lppers\int_0^1 \We(\s)\rd\Wfx(\s) + \eta^\trans\Wb(1)\\
 =&~ \int_0^1 \We(\s)\left(\lpinte\Jfyb+\lppers\Jfxb\right)^\trans\rd\Wb(\s) + \lppers\int_0^1 \We(\s)\rd\We(\s) + \eta^\trans\Wb(1), \\
\qt(\lpinte,\lppers,\eta)
 =&~
\left(\lpinte^2\Jfyfy+\lppers^2\Jfxfx+2\lpinte\lppers\Jfyfx\right)\int_0^1 \We(\s)^2\rd\s \\
 &~
+ \eta^\trans\eta + \left(2\lpinte\Jfyb^\trans\eta+2\lppers\Jfxb^\trans\eta\right)\int_0^1\We(\s)\rd\s\\
 =&~
\int_0^1\left|\left(\lpinte\Jfyb+\lppers\Jfxb\right)\We(\s)+\eta\right|^2\rd\s
	+ \lppers^2\int_0^1\We(\s)^2\rd\s.
\end{align*}
\item[(iii)] For every $\lpinte,\lppers\in\SR$ and $\eta\in c_{00}$, under $\prob_{0,0,0}$, $\E[\exp\left(\lllr(\lpinte,\lppers,\eta)\right)]=1$.
\end{itemize}
\end{proposition}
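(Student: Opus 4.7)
The plan is to prove the three parts of the proposition sequentially. Part (i) is a Taylor expansion of the log-likelihood ratio; part (ii) passes to the limit via the FCLT and convergence of stochastic integrals; part (iii) identifies the limit as the exponent of a Dol\'eans--Dade martingale.

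For (i), I would write
\begin{align*}
\log\frac{\rd\law_{\lpinte,\lppers,\eta;f}}{\rd\law_{0,0,0;f}}
 =
\sum_{t=1}^T \log\frac{f^{(T)}_\eta(e_t-s_t)}{f(e_t)},
\end{align*}
where $e_t=(y_t,\Delta x_t)^\trans$ are the null innovations and $s_t=((\lpinte/T)(\sigma_y/\sigma_x)x_{t-1},(\lppers/T)x_{t-1})^\trans$ are the implied local shifts; under $\law_{0,0,0;f}$ the $e_t$ are i.i.d.\ with density $f$ and $x_{t-1}=O_\rP(\sqrt T)$, hence $s_t=O_\rP(T^{-1/2})$. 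Each summand decomposes as $[\log f(e_t-s_t)-\log f(e_t)]+\log[1+T^{-1/2}\sum_k\eta_k\fpert_k(e_t-s_t)]$ and I would Taylor-expand both pieces to second order. Four classes of contributions arise: (a) the first-order-in-$s_t$ terms sum to $\sum_t s_t^\trans\score_f(e_t)$, giving the two shift terms of $\cs^{(T)}$; (b) the first-order-in-$\eta$ terms at $e_t$ produce the perturbation piece of $\cs^{(T)}$; (c) the pure second-order terms yield, via the identities $\rEf[-\nabla^2\log f]=\Jf$ and $\rEf[\fpert_k\fpert_\ell]=\delta_{k\ell}$ combined with an LLN, the $\lpinte^2,\lppers^2,\lpinte\lppers$ and $\eta^\trans\eta$ entries of $-\tfrac12\qt^{(T)}$; (d) the shift-by-perturbation cross-term, coming from $\fpert_k(e_t-s_t)=\fpert_k(e_t)-s_t^\trans\nabla\fpert_k(e_t)+O(|s_t|^2)$, yields the $2\lpinte\Jfyb^\trans\eta+2\lppers\Jfxb^\trans\eta$ coefficient of $-\tfrac12\qt^{(T)}$, using the integration-by-parts identities $\rEf[\partial_y\fpert_k(\e)]=\Jfybk/\sigma_y$ and $\rEf[\partial_x\fpert_k(\e)]=\Jfxbk/\sigma_x$. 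Remainders are $\opone$ by quadratic-mean differentiability of $f$ (guaranteed by Assumption~\ref{ass:density_f}) and the smoothness built into the $\fpert_k$.

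For (ii), I would apply the joint partial-sum convergence~(\ref{eqn:partialsumconvergence}) together with the continuous mapping theorem. Riemann-sum-type functionals $T^{-2}\sum_t x_{t-1}^2/\sigma_x^2$ and $T^{-3/2}\sum_t x_{t-1}/\sigma_x$ converge to $\int_0^1\We^2\rd s$ and $\int_0^1\We\rd s$. The stochastic-integral-like pieces, such as $T^{-1}\sum_t(x_{t-1}/\sigma_x)\sigma_y\score_{f_y}(y_t,\Delta x_t)$, converge to $\int_0^1\We(s)\rd\Wfy(s)$ by the Kurtz--Protter convergence-of-stochastic-integrals theorem: under $\law_{0,0,0;f}$, $\WTfy$ and $\WTfx$ are martingales with quadratic variations tight in probability, hence ``good'' sequences in their sense. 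Substituting produces the first displayed form of $(\cs,\qt)$; the equivalent second form follows by substituting the decompositions~(\ref{eqn:decomposition_Wfy})--(\ref{eqn:decomposition_Wfx}). For (iii), inserting the same decompositions into $\cs$ gives
\begin{align*}
\cs(\lpinte,\lppers,\eta)
 =
\int_0^1\bigl[(\lpinte\Jfyb+\lppers\Jfxb)\We(s)+\eta\bigr]^\trans\rd\Wb(s)
+\lppers\int_0^1\We(s)\rd\We(s)
 =:
M_1,
\end{align*}
a continuous local martingale driven by the orthogonal Brownian pair $(\Wb,\We)$ (orthogonality because $\rEf[\ex\fpert_k(\e)]=0$ by the definition of $\rL_2^{0,f}$). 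Its predictable quadratic variation is exactly $\qt(\lpinte,\lppers,\eta)$ in the second displayed form, so $\exp(\lllr)=\mathcal{E}(M)_1$ is a Dol\'eans--Dade exponential. Under the candidate measure $\rd\widetilde\prob=\mathcal{E}(M)_1\rd\prob_{0,0,0}$, Girsanov's theorem endows $\We$ with the linear drift $\lppers\We$, giving a unique non-exploding Ornstein--Uhlenbeck equation on $[0,1]$; $\Wb$ gains a linear Gaussian drift of the same type. Non-explosion together with the linear-growth drift criterion for stochastic exponentials (see the standard LABF arguments of \citet{Jeganathan1995}) guarantees that $\mathcal{E}(M)$ is a true martingale, so $\rElim[\mathcal{E}(M)_1]=1$.

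The main obstacle is the book-keeping of the cross-term in (i): the shifts $s_t$ are not constant but depend on the random walk $x_{t-1}$, so matching the cross contribution with the off-diagonal coefficients of $\qt^{(T)}$ requires a joint weak-law-of-large-numbers argument for $T^{-3/2}\sum_t x_{t-1}\sum_k\eta_k\partial_j\fpert_k(e_t)$, combined with the integration-by-parts identities above. A secondary technical point is the passage to the limit in (ii) for the product terms, which is handled once one recognises the martingale structure of the score partial sums under $\law_{0,0,0;f}$ and checks the Kurtz--Protter ``goodness'' condition; everything else in the proof reduces to continuous-mapping arguments.
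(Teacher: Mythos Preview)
Your proposal is correct and mirrors the paper's proof: the same split of the log-likelihood into a location-shift piece and a density-perturbation piece, DQM for the former, Taylor expansion of the smooth $\fpert_k$ for the latter with integration by parts for the cross terms, FCLT plus convergence of stochastic integrals for (ii), and the Dol\'eans--Dade exponential identification for (iii). The one refinement in the paper is that the location-shift expansion in Part~(i) is formalized by verifying the conditions of Proposition~1 in \citet{HvdAW2015} rather than a direct second-order Taylor expansion invoking $\rEf[-\nabla^2\log f]=\Jf$, since Assumption~\ref{ass:density_f} does not guarantee that $\nabla^2\log f$ exists; your closing appeal to quadratic-mean differentiability indicates you already have this in mind, but the Hessian identity should be replaced by the DQM-based quadratic term in a rigorous write-up.
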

A proof of Proposition~\ref{proposition_LAQ} is provided in Appendix~\ref{app:Proofs}, but let us give a brief sketch here. Part~(i) is immediate from an informal Taylor expansion of the log-likelihood ratios and, formally, follows from \citet{HvdAW2015}, which provides generally applicable sufficient conditions for the quadratic expansion of likelihood ratios with densities that are differentiable in quadratic mean (DQM). This DQM condition is implied, for location models, by the absolutely continuity of the innovation density function and finiteness of the associated Fisher information, i.e., precisely the content of Assumption~\ref{ass:density_f}. A detailed discussion can be found in \citet[Section 17.3]{LeCam1986} or \citet[Section 7.3]{LeCamYang2000}. Part~(ii) follows from the continuous mapping theorem applied to the weak convergence in~(\ref{eqn:partialsumconvergence}).
%\textcolor{red}{and the equalities in (\ref{eqn:equivalence_T}) and (\ref{eqn:equivalence_limit})}.
Both forms of the central sequence $\cs$ and quadratic term $\qt$ follow from~(\ref{eqn:decomposition_Wfy}) and~(\ref{eqn:decomposition_Wfx}).  Part~(iii) follows from standard stochastic calculations concerning Dol\a'{e}ans-Dade exponentials. To see this, note that $\Wb$ and $\We$ are independent in view of~(\ref{eqn:covariancematrix}) and, thus, have vanishing quadratic covariation.

Part~(iii) of Proposition~\ref{proposition_LAQ} ensures that we can introduce a collection of probability measures $\prob_{\lpinte,\lppers,\eta}$ on the measurable space $\left(\Omega,\mathcal{F}\right)$ (on which the Brownian motions $\We$, $\Wfy$, $\Wfx$ and $\Wb$ are defined) by the Radon-Nikodym derivative
\begin{align} \label{eqn:RadonNikodym}
\frac{\rd\prob_{\lpinte,\lppers,\eta}}{\rd\prob_{0,0,0}} = \exp\lllr(\lpinte,\lppers,\eta),
\end{align}
where $\lllr(\lpinte,\lppers,\eta)$ is defined in~(\ref{eqn:LLRlimit}). Then, in the sense of H\a'{a}jek-Le Cam (see, for instance, \citet{vdVaart2000}, Chapter~9), the sequence of predictive regression experiments, indexed by sample size $T$, weakly converges to the limit experiment described by the measures $\prob_{\lpinte,\lppers,\eta}$. We formally define this limit experiment by
\begin{align}\label{eqn:LimitExperiment}
\LimitExperiment
 :=
\Big(\Omega, \mathcal{F}, \Big\{\prob_{\lpinte,\lppers,\eta}:\lpinte,\lppers\in\SR,\eta\in c_{00}\Big\}\Big),
\end{align}
where $\Omega:=C[0,1]\times C[0,1]\times C[0,1]\times C^{\SN}[0,1]$ and $\mathcal{F}:=\mathcal{B_C}\otimes\mathcal{B_C}\otimes\mathcal{B_C}\otimes(\otimes_{k=1}^\infty\mathcal{B_C})$.

The following statement is an immediate consequence of Proposition~\ref{proposition_LAQ}. 
\begin{corollary}
Let $f\in\F$, then the sequence of experiments $\SequenceExperiment$ converges to the limit experiment $\LimitExperiment$ as $T\to\infty$.
\end{corollary}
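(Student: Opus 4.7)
The plan is to verify the Hájek--Le Cam definition of weak convergence of experiments: it suffices to show that, for every finite collection of parameter values $(\lpinte_i,\lppers_i,\eta_i)_{i=1}^m$ in the index set, the joint law under the base measure $\law_{0,0,0;f}$ of the likelihood ratio vector
\begin{align*}
\left(\frac{\rd\law_{\lpinte_i,\lppers_i,\eta_i;f}}{\rd\law_{0,0,0;f}}\right)_{i=1}^m
\end{align*}
converges weakly to the joint law under $\prob_{0,0,0}$ of
$\left(\exp\lllr(\lpinte_i,\lppers_i,\eta_i)\right)_{i=1}^m$. Once this is established, convergence of the sequence of experiments to $\LimitExperiment$ follows from the standard result (see, e.g., \citet[Ch.~9]{vdVaart2000}).

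First, I would fix such a finite collection. Because each $\eta_i$ lies in $c_{00}$, the union of the supports of $\eta_1,\ldots,\eta_m$ is contained in some finite set $\{1,\ldots,M\}$, so only finitely many basis perturbations $\fpert_1,\dots,\fpert_M$ actually enter the central sequences and quadratic terms. Inspection of the expressions for $\cs^{(T)}$ and $\qt^{(T)}$ in Proposition~\ref{proposition_LAQ}(i) shows that, once truncated to these $M$ basis functions, each $\cs^{(T)}(\lpinte_i,\lppers_i,\eta_i)-\tfrac12\qt^{(T)}(\lpinte_i,\lppers_i,\eta_i)$ is a continuous functional (with respect to the uniform topology on $D[0,1]$) of the finite-dimensional partial-sum process
$\bigl(\WTe,\WTfy,\WTfx,\WTo,\ldots,\WT_{\fpert_M}\bigr)$.
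The finite-dimensional weak convergence in~(\ref{eqn:partialsumconvergence}), combined with the continuous mapping theorem, therefore yields joint weak convergence of the log-likelihood ratio vector to $\bigl(\lllr(\lpinte_i,\lppers_i,\eta_i)\bigr)_{i=1}^m$ under $\prob_{0,0,0}$. Exponentiating (another continuous map) gives convergence of the Radon--Nikodym vector itself. Proposition~\ref{proposition_LAQ}(i) together with part~(ii) supplies precisely this convergence for single indices, and the argument above lifts it to the joint statement because all $m$ log-likelihood ratios are continuous functionals of \emph{the same} vector of partial-sum processes.

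Second, I would check that the candidate limit measures $\prob_{\lpinte,\lppers,\eta}$ in~(\ref{eqn:RadonNikodym}) are bona fide probability measures, i.e., that the limiting Radon--Nikodym derivative integrates to one under $\prob_{0,0,0}$. This is exactly Proposition~\ref{proposition_LAQ}(iii), which is cited as already established (via a Doléans-Dade argument using independence of $\We$ and $\Wb$). Uniform integrability of $\rd\law_{\lpinte,\lppers,\eta;f}/\rd\law_{0,0,0;f}$ along the sequence is automatic since the ratios are nonnegative and have mean one, so weak convergence of the likelihood ratio vector together with mean-one of the limit also certifies contiguity (by Le Cam's first lemma), which is what makes the identification of the limit experiment meaningful.

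I do not expect a genuinely hard step: the corollary is essentially a packaging of Proposition~\ref{proposition_LAQ}. The only delicate point to articulate carefully is the handling of the infinite-dimensional coordinate $\eta$; restricting to $c_{00}$ reduces every test of convergence to a finite-dimensional statement, and ensures that the product-space topology on $\Omega=C[0,1]^3\times C^{\SN}[0,1]$ interacts correctly with the weak convergence in~(\ref{eqn:partialsumconvergence}) as interpreted in the excerpt (all finite subprocesses converge in $D^{M+3}[0,1]$ under the uniform topology). With this point noted, the conclusion follows directly.
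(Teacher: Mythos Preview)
Your proposal is correct and follows the same route as the paper, which simply records the corollary as ``an immediate consequence of Proposition~\ref{proposition_LAQ}''; you have merely spelled out the standard Le~Cam argument (joint convergence of finitely many likelihood ratios, $c_{00}$ reducing to a finite-dimensional problem, and Part~(iii) guaranteeing the limit is a probability measure). One small caveat: the stochastic-integral pieces of $\cs^{(T)}$ are not literally continuous functionals of the partial-sum process under the uniform topology, so the continuous-mapping step as stated is imprecise; the paper handles this in the proof of Proposition~\ref{proposition_LAQ}(ii) by invoking a separate stochastic-integral convergence argument (cf.\ \citet{ChanWei1988}, \citet{Hansen1992}), and since you already cite Part~(ii) for the marginal convergence, the same argument delivers the joint statement without change.
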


Although the log-likelihood ratios $\lllr(\lpinte,\lppers,\eta)$ formally describe the limiting experiment, it is more insightful to provide, what we call, a structural representation. This structural representation provides a fixed-horizon continuous-time model for which the likelihoods are exactly equal to $\exp\left(\lllr(\lpinte,\lppers,\eta)\right)$. From a statistical point of view, the induced experiments are thus equal. The result follows from an immediate application of Girsanov's theorem to the Radon-Nikodym derivates~(\ref{eqn:LLRlimit}). Its proof is therefore omitted.
\begin{theorem}\label{thm:StructuralLimitExperiment}
Fix $f\in\F$. Let, under $\prob_{0,0,0}$, $\Ze$, and $\Zb$ be zero-drift Brownian motions with covariance according to the first and last row and column of~(\ref{eqn:covariancematrix}). The limit experiment $\LimitExperiment$ can be described as: observe $\left\{\left(\We(s),\Wb(s)\right):s\in[0,1]\right\}$ generated by
\begin{align}
\label{eqn:StrucLimitWe}
\rd\We(s)
 &=
\lppers \We(s)\rd s + \rd \Ze(s),\\
\label{eqn:StrucLimitWb}
\rd\Wb(s)
 &=
(\lpinte\Jfyb+\lppers\Jfxb)\We(s)\rd s + \eta\rd s + \rd \Zb(s).
\end{align}
\end{theorem}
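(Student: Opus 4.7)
The goal is to exhibit a continuous-time observation scheme whose likelihood ratios, computed as Radon--Nikodym derivatives, coincide exactly with $\exp\lllr(\lpinte,\lppers,\eta)$ from~(\ref{eqn:LLRlimit}). My plan is to start from the canonical probability space $(\Omega,\mathcal{F},\prob_{0,0,0})$ on which $\We$ and $\Wb$ are already defined as driftless Brownian motions with the covariance structure of the first and last row/column of~(\ref{eqn:covariancematrix}). Define $\Ze:=\We$ and $\Zb:=\Wb$ under $\prob_{0,0,0}$. The task then reduces to showing that under $\prob_{\lpinte,\lppers,\eta}$ (given by~(\ref{eqn:RadonNikodym})), the same processes satisfy the claimed SDEs.

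The first step is to rewrite $\lllr(\lpinte,\lppers,\eta)$ as the logarithm of a Dol\a'{e}ans--Dade exponential driven by $(\We,\Wb)$. Using $\Wb(0)=0$, the deterministic term $\eta^\trans\Wb(1)=\int_0^1\eta^\trans\rd\Wb(s)$ can be absorbed into the stochastic integral against $\Wb$. Combined with the first representation of $\Delta$ in Proposition~\ref{proposition_LAQ}(ii) and the third representation of $\qt$, I obtain the clean decomposition
\begin{align*}
\lllr(\lpinte,\lppers,\eta)
 =&~ \int_0^1 \big[(\lpinte\Jfyb+\lppers\Jfxb)\We(s)+\eta\big]^\trans \rd\Wb(s) - \frac12\int_0^1 \big|(\lpinte\Jfyb+\lppers\Jfxb)\We(s)+\eta\big|^2\rd s\\
 &~+ \int_0^1 \lppers\We(s)\,\rd\We(s) - \frac12\int_0^1 \lppers^2 \We(s)^2\,\rd s.
\end{align*}
Because the covariance block between $\We$ and $\Wb$ is zero in~(\ref{eqn:covariancematrix}), these two stochastic integrals have vanishing quadratic covariation, so the exponential factors as a product of two Dol\a'{e}ans--Dade exponentials, one against $\We$ and one against $\Wb$.

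The next step is Girsanov's theorem. Part~(iii) of Proposition~\ref{proposition_LAQ} already guarantees that $\exp\lllr$ has expectation one under $\prob_{0,0,0}$, so the martingale property holds and no separate Novikov verification is required. Applying Girsanov to the change of measure $\rd\prob_{\lpinte,\lppers,\eta}/\rd\prob_{0,0,0}=\exp\lllr$ yields that, under $\prob_{\lpinte,\lppers,\eta}$, the shifted processes
\begin{align*}
\Ze(s) &:= \We(s) - \int_0^s \lppers\We(u)\,\rd u,\\
\Zb(s) &:= \Wb(s) - \int_0^s\big[(\lpinte\Jfyb+\lppers\Jfxb)\We(u)+\eta\big]\rd u
\end{align*}
are (local) martingales. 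Quadratic (co)variations are preserved under an equivalent change of measure, hence $(\Ze,\Zb)$ retains exactly the covariance structure of the first and last row/column of~(\ref{eqn:covariancematrix}), making them driftless Brownian motions of the required type by L\a'{e}vy's characterization. Rearranging the displayed definitions gives precisely the SDEs~(\ref{eqn:StrucLimitWe})--(\ref{eqn:StrucLimitWb}).

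The main obstacle is the infinite dimensionality of $\Wb$: one must ensure that the stochastic integral $\int_0^1(\lpinte\Jfyb+\lppers\Jfxb)\We(s)^\trans\rd\Wb(s)$ and the corresponding Girsanov shift are well defined. This is handled by observing that $\lpinte\Jfyb+\lppers\Jfxb\in\ell_2$, since $\Jfyb^\trans\Jfyb=\Jfyfy<\infty$ and $\Jfxb^\trans\Jfxb=\Jfxfx-1<\infty$ by~(\ref{eqn:decomposition_Jfy})--(\ref{eqn:decomposition_Jfx}), while $\eta\in c_{00}$ has finite support. Hence one can either first project onto any finite-dimensional subspace spanned by the coordinates where $\eta_k\neq 0$ together with finitely many additional coordinates, apply finite-dimensional Girsanov, and then pass to the $\ell_2$-limit; or invoke the cylindrical/Hilbert-space version of Girsanov directly. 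Either route closes the argument without further complications.
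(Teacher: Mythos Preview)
Your proposal is correct and follows exactly the approach the paper indicates: the paper states that ``the result follows from an immediate application of Girsanov's theorem to the Radon-Nikodym derivatives~(\ref{eqn:LLRlimit})'' and omits the proof entirely. You have simply filled in the details of that Girsanov application---rewriting $\lllr$ as a Dol\a'{e}ans--Dade exponential, invoking Part~(iii) of Proposition~\ref{proposition_LAQ} for the martingale property, and handling the infinite-dimensional coordinate via the $\ell_2$-summability of $\Jfyb$ and $\Jfxb$---all of which is sound.
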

A few remarks can be made in relation to Theorem~\ref{thm:StructuralLimitExperiment}. First, note that for $b=c=0$ and $\eta=0$, we obtain $\We=\Ze$ and $\Wb=\Zb$.
Secondly, the theorem essentially states that while $\left(\We,\Wb^\trans\right)^\trans$ is a zero-drift Brownian motion under $\prob_{0,0,0}$, it becomes an Ornstein-Uhlenbeck process under $\prob_{\lpinte,\lppers,\eta}$, where the log-likelihood ratio $\log\left(\rd\prob_{\lpinte,\lppers,\eta}/\rd\prob_{0,0,0}\right)$ equals $\lllr(\lpinte,\lppers,\eta)$. Observe in particular that local perturbations of the innovation density $f$, as described by $\eta$, only affect the drift in~(\ref{eqn:StrucLimitWb}). We will consider inference procedures that are invariant with respect to $\eta$ in the limit experiment. In terms of the (sequence of) predictive regression model(s) this consequently translates into invariance with respect to (local perturbations in) the innovation density $f$.

In view of~(\ref{eqn:decomposition_Wfy})--(\ref{eqn:decomposition_Wfx}), we may also write
\begin{align}
\label{eqn:dWlfy}
\rd\Wfy(s)
 &=
(\lpinte\Jfyfy+\lppers\Jfyfx) \We(s)\rd s + \Jfyb^\trans\eta\rd s + \rd \Zfy(s), \\
\label{eqn:dWlfx}
\rd\Wfx(s)
 &=
(\lpinte\Jfyfx+\lppers\Jfxfx) \We(s)\rd s + \Jfxb^\trans\eta\rd s + \rd \Zfx(s),
\end{align}
where $\Zfx$ and $\Zfy$ are zero-drift Brownian motions under $\prob_{0,0,0}$. However, these equations do not contain any additional information, precisely given~(\ref{eqn:decomposition_Wfy}) and~(\ref{eqn:decomposition_Wfx}). Nevertheless, they will turn out useful when describing the likelihood ratio of the maximal invariant $\mi$ to be introduced below in~(\ref{eqn:maximalinvariant}).

\subsection{Maximal Invariant} \label{subsec:MaximalInvariant}

\noindent In the limit experiment $\LimitExperiment$, the parameter $\lpinte\in\SR$ is the parameter of interest, while $\lppers\in\SR$ and $\eta\in c_{00}$ are nuisance parameters. Observe that the nuisance parameter $\eta$ appears only in the drift of the SDEs in Theorem~\ref{thm:StructuralLimitExperiment}. This suggests an invariance restriction in line with the approach in~\citet{ZvdAW2016} for unit root testing.

To be specific, we first introduce, for $\eta\in c_{00}$, the transformations $\transfor_{\eta}:C^\SN[0,1]\to C^\SN[0,1]$ by
\begin{align}
[\transfor_{\eta}(W)](s) = W(s) - \eta\s,  
\end{align}
for $W\in C^{\SN}[0,1]$ and all $\s\in[0,1]$. The transformation $\transfor_{\eta}$ adds a drift $\s\mapsto-\eta\s$ to $W$. Thus, Theorem~\ref{thm:StructuralLimitExperiment} implies that the law of $\left(\We,(\transfor_{\eta}(\Wb))^\trans\right)^\trans$ under $\prob_{\lpinte,\lppers,0}$ is the same as the law of $\left(\We,\Wb^\trans\right)^\trans$ under $\prob_{\lpinte,\lppers,\eta}$.\footnote{By~(\ref{eqn:decomposition_Wfx}) and~(\ref{eqn:decomposition_Wfy}), the same holds for $\Wfx$ and $\Wfy$.} Denote by $\transgroup_{\eta}$ the group of transformations $\transfor_{\eta}$ for $\eta\in c_{00}$. We can now characterize the maximal invariant with respect to $\transgroup_{\eta}$ in the limit experiment $\LimitExperiment$.

For any process $W$, we define the associated \textit{bridge process} by 
\begin{align} \label{eqn:bridgeprocessoperator}
B^W(\s):=W(\s)-\s W(1),
\end{align}
for all $\s\in[0,1]$. Then, one readily verifies
\begin{align*}
B^{\transfor_{\eta}(W)}(\s) 
=&~ [\transfor_{\eta}(W)](\s) - \s[\transfor_{\eta}(W)](1)\\
=&~ W(\s)-\eta\s - \s(W(1)-\eta) \\
=&~ W(\s) - \s W(1) \\
=&~ B^{W}(\s).
\end{align*}
As a result, the bridges $B^{\Wb}$ are invariant under the transformations $\transfor_{\eta}$.

Define the mapping $M$ by $M(\We,\Wb):=(\We,\Bb)$. It then follows that statistics that are measurable with respect to the $\sigma$-field
\begin{align} \label{eqn:maximalinvariant}
\mi = \sigma\left(M(\We,\Wb)\right) = \sigma\left(\We,\Bb\right),
\end{align}
are invariant with respect to $\transfor_{\eta}$ for all $\eta\in c_{00}$. Moreover, in the following theorem, we show $\mi$ to be \emph{maximally} invariant. Its proof is, again, provided in Appendix~\ref{app:Proofs}.
\begin{theorem}\label{thm:MaximalInvariant}
In the limit experiment $\LimitExperiment$, for $\eta\in c_{00}$, the $\sigma$-field $\mi$ in~(\ref{eqn:maximalinvariant}) is maximally invariant with respect to $\transgroup_{\eta}$. 
\end{theorem}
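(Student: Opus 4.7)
The plan is to verify maximal invariance of $\mi = \sigma(\We, \Bb)$ in two steps. Invariance has already been established in the body via the identity $B^{\transfor_\eta(\Wb)} = \Bb$, so every $\mi$-measurable statistic is $\transgroup_\eta$-invariant. I therefore focus on the converse: every $\mathcal{F}$-measurable $T: \Omega \to \SR$ invariant under $\transgroup_\eta$ must be $\mi$-measurable.

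The key idea is that although $c_{00}$ only permits shifts of finitely many coordinates of $\Wb$ at a time, one can nevertheless cancel any finite block of coordinates onto its bridge simultaneously. For each $N \in \SN$, I would introduce the $\omega$-dependent shift
\begin{equation*}
\eta^{(N)}(\omega) := \bigl(W_{\fpert_1}(1)(\omega), \dots, W_{\fpert_N}(1)(\omega), 0, 0, \dots\bigr) \in c_{00}.
\end{equation*}
A direct computation gives $\transfor_{\eta^{(N)}(\omega)}(\Wb(\omega)) = (B^{W_{\fpert_1}}, \dots, B^{W_{\fpert_N}}, W_{\fpert_{N+1}}, W_{\fpert_{N+2}}, \dots)(\omega)$, and invariance of $T$ forces $T(\omega) = T(\We(\omega), \transfor_{\eta^{(N)}(\omega)}(\Wb(\omega)))$. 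Hence $T$ is measurable with respect to
\begin{equation*}
\mathcal{H}_N := \sigma\bigl(\We,\, B^{W_{\fpert_1}}, \dots, B^{W_{\fpert_N}},\, W_{\fpert_{N+1}}, W_{\fpert_{N+2}}, \dots\bigr)
\end{equation*}
for every $N$, and therefore with respect to the decreasing intersection $\bigcap_N \mathcal{H}_N$. To conclude, I would identify this intersection with $\mi$ either via a $\pi$--$\lambda$/monotone-class reduction to invariant cylinder sets---for which coordinate-wise $c_{00}$-shifts immediately force each dependence on $W_{\fpert_k}$ to factor through $B^{W_{\fpert_k}}$---or, working $\prob_{0,0,0}$-a.s., by invoking a Kolmogorov-type 0-1 argument on the jointly independent endpoints $(W_{\fpert_k}(1))_{k \in \SN}$.

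The main obstacle is precisely this final passage from the finite-$N$ coordinate cancellation to identifying $\bigcap_N \mathcal{H}_N$ with $\mi$. Because $c_{00}$ does not contain the natural full endpoint shift $\Wb(1) \in \SR^{\SN}$, the orbits of $\transgroup_\eta$ are strictly smaller than the level sets of $(\We, \Bb)$ pointwise, and strict pointwise maximal invariance can fail (for instance, tail functionals of $(W_{\fpert_k}(1))_k$ are invariant under every element of $\transgroup_\eta$ yet are not $\mi$-measurable). Maximality must therefore be phrased either up to $\prob_{0,0,0}$-null sets via a tail 0-1 argument---using that the endpoint coordinates are jointly independent outside the support of any $\eta \in c_{00}$---or through a careful measurability-based reduction, and making this step rigorous while preserving the convention $\eta \in c_{00}$ adopted throughout Section~\ref{subsec:LocalPerturbations} is the delicate point.
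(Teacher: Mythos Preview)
The paper takes a much shorter, direct route. Rather than showing that every invariant function is $\mi$-measurable, it simply verifies the Lehmann--Romano criterion for the \emph{map} $M(\We,\Wb)=(\We,\Bb)$: if $M(\We,\Wb)=M(\widetilde W_\e,\widetilde W_\fpert)$, then $\We=\widetilde W_\e$ and $\Wb(s)-\widetilde W_\fpert(s)=\big(\Wb(1)-\widetilde W_\fpert(1)\big)s$, so the two points lie in the same $\transgroup_\eta$-orbit with $\eta=\Wb(1)-\widetilde W_\fpert(1)$. That is the entire argument; no $\sigma$-field intersections, no tail $0$--$1$ laws.

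You are right that this argument, taken literally, does not verify $\eta\in c_{00}$: the paper simply writes $\eta=c_\transfor$ and declares $\transfor_\eta\in\transgroup_\eta$ without checking that the required shift has finite support. In other words, the subtlety you flagged is real, and the paper glosses over it rather than resolving it. Recall, however, that $c_{00}$ was introduced in Section~\ref{subsec:LocalPerturbations} purely to avoid infinite-dimensional weak-convergence technicalities, not to restrict the invariance group in any substantive way; the paper's proof is written as if the group were all of $\SR^{\SN}$ (or $\ell_2$), for which the orbit argument is exact.

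So your proposal is more scrupulous but also more complicated, and it stalls at precisely the step---identifying $\bigcap_N\mathcal H_N$ with $\mi$---that the paper sidesteps by working with orbits of $M$ directly. If you want to match the paper, drop the $\sigma$-field machinery and give the two-line Lehmann--Romano argument; if you want to be more rigorous than the paper, you would need to either enlarge the group to $\ell_2$ (which is harmless for the downstream results) or complete the tail argument you sketched.
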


\subsection{A Structural Representation of the Invariant Limit Experiment}\label{subsec:invariantlimitexperiment}
\noindent Theorem~\ref{thm:MaximalInvariant} implies that any inference invariant with respect to $\transgroup_{\eta}$ must be measurable with respect to $\mi$; see, e.g., \citet[Theorem 6.2.1]{LehmannRomano2005}. Therefore, by the Neyman-Pearson lemma, inference based on the likelihood ratio with respect to $\mi$ yields the power envelope for invariant tests in the limit experiment $\LimitExperiment$. The following result provides this likelihood ratio.
\begin{theorem} \label{thm:LAQ_M}
Fix $f\in\F$. Then the likelihood ratios in the limit experiment $\LimitExperiment$ restricted to the maximal invariant $\mi$ are given by
\begin{align} \label{eqn:LAQ_M}
\exp{\lllr_\mi(\lpinte,\lppers)}
:=
\frac{\rd\prob_{\lpinte,\lppers}^{\mi}}{\rd\prob_{0,0}^{\mi}}
 =
\rElim\left[\frac{\rd\mathbb{P}_{\lpinte,\lppers,\eta}}{\rd\mathbb{P}_{0,0,0}}|\mi\right]
 =
\exp\left(\cs_{\mi}(\lpinte,\lppers)-\frac{1}{2}\qt_{\mi}(\lpinte,\lppers)\right),
\end{align}
where
\begin{align}
\label{eqn:DeltaM}
\cs_{\mi}(\lpinte,\lppers)
 &=
\int_0^1\We(\s)\left(\lpinte\Jfyb+\lppers\Jfxb\right)^\trans\rd\Bb(\s) 
	+\lppers\int_0^1\We(\s)\rd\We(\s)\\
 &=
\nonumber
\lpinte\int_0^1\We(\s)\rd\Bfy(\s) + \lppers\left(\int_0^1\We(\s)\rd\Bfx(\s)
	+\We(1)\overline{\We}\right),\\
\label{eqn:QM}
\qt_{\mi}(\lpinte,\lppers) 
 &=
\left(\lpinte\Jfyb+\lppers\Jfxb\right)^2\int_0^1\left(\We(\s)-\overline{\We}\right)^2\rd\s
+ \lppers^2\int_0^1\We(\s)^2\rd\s\\
 &=
\nonumber
\left(\lpinte^2\Jfyfy+\lppers^2(\Jfxfx-1)+2\lpinte\lppers\Jfyfx\right)
	\left(\overline{\We^2}-(\overline{\We})^2\right)
	+ \lppers^2\left(\overline{\We}\right)^2,
\end{align}
with $\overline{\We^2} = \int_0^1\We(\s)^2\rd\s$ and $\overline{\We} = \int_0^1\We(\s)\rd\s$. 
%Old, alternative formulation: where, with $\overline{\We}=\int_0^1\We(\s)\rd\s$ and $\overline{\We^2}=\int_0^1\We^2(\s)\rd\s$,
%\begin{align}
%\label{eqn:DeltaM}
%\cs_{\mi}(\lpinte,\lppers)
%&=
%\lpinte\int_0^1\We(\s)\rd\Bfy(\s) + \lppers\int_0^1\We(\s)\rd\Bfx(\s) + \lppers\We(1)\overline{\We},\\
%\label{eqn:QM}
%\qt_{\mi}(\lpinte,\lppers) 
%&=
%\left(\lpinte^2\Jfyfy+\lppers^2\Jfxfx+2\lpinte\lppers\Jfyfx\right)\left[\overline{\We^2} - \left(\overline{\We}\right)^2\right] + \lppers^2 \left(\overline{\We}\right)^2.
%\end{align}
\end{theorem}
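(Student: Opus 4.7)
The plan is to obtain the restricted likelihood ratio via the standard conditional expectation formula and then compute that conditional expectation explicitly using the Gaussian structure of the increments of $\Wb$.

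First, for any sub-$\sigma$-field $\mi \subset \mathcal{F}$ and any choice of $\eta \in c_{00}$, the Radon-Nikodym derivative restricted to $\mi$ satisfies $\rd\prob^{\mi}_{\lpinte,\lppers}/\rd\prob^{\mi}_{0,0} = \rElim\!\left[\rd\prob_{\lpinte,\lppers,\eta}/\rd\prob_{0,0,0}\,|\,\mi\right]$. Theorem~\ref{thm:MaximalInvariant} guarantees that this quantity is invariant in $\eta$, so we may set $\eta = 0$ for convenience. By~(\ref{eqn:LLRlimit}), what remains is to compute
\begin{align*}
\rElim\!\left[\exp\!\big(\lllr(\lpinte,\lppers,0)\big)\,\big|\,\mi\right].
\end{align*}

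Next, I would use the bridge decomposition $\Wb(s) = \Bb(s) + s\Wb(1)$, giving $\rd\Wb(s) = \rd\Bb(s) + \Wb(1)\rd s$, so that $\int_0^1 \We(s)\,\rd\Wb(s) = \int_0^1 \We(s)\,\rd\Bb(s) + \overline{\We}\,\Wb(1)$. Substituting via~(\ref{eqn:decomposition_Wfy}) and~(\ref{eqn:decomposition_Wfx}), and writing $v := \lpinte\Jfyb + \lppers\Jfxb$, the central sequence $\cs(\lpinte,\lppers,0)$ splits as
\begin{align*}
\cs(\lpinte,\lppers,0) = v^\trans\!\int_0^1 \We(s)\,\rd\Bb(s) + \lppers\!\int_0^1 \We(s)\,\rd\We(s) + v^\trans \Wb(1)\,\overline{\We},
\end{align*}
while the quadratic term equals $(|v|^2 + \lppers^2)\,\overline{\We^2}$ after using the identities~(\ref{eqn:decomposition_Jfy})--(\ref{eqn:decomposition_Jfyx}) to write $\lpinte^2\Jfyfy + \lppers^2\Jfxfx + 2\lpinte\lppers\Jfyfx = |v|^2 + \lppers^2$. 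Every term except $v^\trans\Wb(1)\,\overline{\We}$ is $\mi$-measurable.

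The final step is a standard Gaussian computation. Under $\prob_{0,0,0}$, the endpoint $\Wb(1)$ is an i.i.d.\ $N(0,1)$ sequence that is independent of $\Bb$ (by the standard Brownian bridge argument: zero covariance between $\Bb(s)$ and $\Wb(1)$, jointly Gaussian) and of $\We$ (the blocks of~(\ref{eqn:covariancematrix}) show $\We \perp \Wb$). Hence $\Wb(1) \perp \mi$. Since $v \in \ell_2$ (because $|v|^2 = \lpinte^2\Jfyfy + \lppers^2(\Jfxfx-1) + 2\lpinte\lppers\Jfyfx < \infty$), the linear functional $v^\trans\Wb(1)$ is $N(0,|v|^2)$, so
\begin{align*}
\rElim\!\left[\exp\!\big(v^\trans\Wb(1)\,\overline{\We}\big)\,\big|\,\mi\right]
= \exp\!\left(\tfrac{1}{2}|v|^2\,\overline{\We}^{\,2}\right).
\end{align*}
Multiplying this by the $\mi$-measurable factor and regrouping yields $\exp(\cs_\mi - \tfrac12\qt_\mi)$ with $\qt_\mi = |v|^2(\overline{\We^2} - \overline{\We}^{\,2}) + \lppers^2\,\overline{\We^2}$, which is the first form in~(\ref{eqn:QM}); the alternative forms of $\cs_\mi$ and $\qt_\mi$ follow by expressing $\int_0^1\We\,\rd\Bb$ in terms of $\int_0^1\We\,\rd\Bfy$ and $\int_0^1\We\,\rd\Bfx$ via the bridge versions of~(\ref{eqn:decomposition_Wfy})--(\ref{eqn:decomposition_Wfx}).

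The only real obstacle is bookkeeping for the infinite-dimensional component $\Wb$: one must verify that $v^\trans\Wb(1)$ converges in $L^2$ (guaranteed by finiteness of the Fisher information) and that the independence $\Wb(1) \perp (\We,\Bb)$ transfers to the $L^2$-limit so the Gaussian MGF applies. Once that is in hand, the rest is the algebraic reorganization above.
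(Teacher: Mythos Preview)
Your proof is correct and follows essentially the same route as the paper: decompose the central sequence into an $\mi$-measurable piece plus a term of the form $(\overline{\We}\,v+\eta)^\trans\Wb(1)$, use that $\Wb(1)$ is independent of $\mi=\sigma(\We,\Bb)$ under $\prob_{0,0,0}$, and apply the Gaussian moment-generating function. The only cosmetic difference is that the paper carries a general $\eta$ through the computation and lets the $\eta$-dependence cancel in the final step, whereas you invoke invariance (Theorem~\ref{thm:MaximalInvariant}) to set $\eta=0$ at the outset; both are valid and lead to the same calculation.
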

The proof is provided in Appendix~\ref{app:Proofs}. The first ways to write $\cs_{\mi}(\lpinte,\lppers)$ and $\qt_{\mi}(\lpinte,\lppers)$ make explicit that the likelihood factorizes in a conditional likelihood given $\We$ and the marginal likelihood of $\We$. Both second ways to write $\cs_{\mi}(\lpinte,\lppers)$ and $\qt_{\mi}(\lpinte,\lppers)$ follow from~(\ref{eqn:decomposition_Wfy})--(\ref{eqn:decomposition_Wfx}) and~(\ref{eqn:decomposition_Jfy})--(\ref{eqn:decomposition_Jfyx}). Those are the versions that we use below to construct our feasible test statistics. Theorem~\ref{thm:LAQ_M} also immediately yields the semiparametric power envelope, still for fixed $c$, that we do not present in detail for brevity.

The restriction to invariant tests removes the nuisance parameter $\eta$ from the testing problem. Indeed, the likelihood ratio~(\ref{eqn:LAQ_M}) no longer depends on $\eta$. Therefore, we can formally define the limit experiment restricted to the maximal invariance $\mi$ as
\begin{equation}\label{eqn:LimitExperiment_M}
\LimitExperimentM
 :=
\Big(\Omega, \mi, \Big\{\prob_{\lpinte,\lppers}^{\mi}:\lpinte,\lppers\in\SR\Big\}\Big).
\end{equation}

Again, the likelihood ratios $\rd\mathbb{P}_{\lpinte,\lppers}^{\mi}/\rd\mathbb{P}_{0,0}^{\mi}$ can also be interpreted as Girsanov transformations. We state this as a corollary as the result follows immediately from calculating the bridges corresponding to $\Wfy$ and $\Wfx$ in Theorem~\ref{thm:LAQ_M}.
\begin{corollary}\label{corollary:StructuralLimitExperiment_M}
Fix $f\in\F$. Let, under $\prob_{0,0}^{\mi}$, $\Ze$ and $\Zb$ be zero-drift Brownian motions with covariance according to the first and last row and column of~(\ref{eqn:covariancematrix}). The limit experiment $\LimitExperimentM$ can be described as follows: we observe, with $\Bb(s)=\Wb(s)-s\Wb(1)$, $\left\{\left(\We(s),\Bb(s)\right):s\in[0,1]\right\}$ with $\left(\We,\Wb\right)$ generated by
\begin{align}
\label{eqn:StrucLimitInvariantWe}
\rd\We(s)
&=
\lppers \We(s)\rd s + \rd \Ze(s),\\
\label{eqn:StrucLimitInvariantWb}
\rd\Wb(s)
&=
(\lpinte\Jfyb+\lppers\Jfxb)\We(s)\rd s + \rd \Zb(s).
\end{align}
%\begin{align*}
%\rd\We(s)
% &=
%\lppers\We(s)\rd s + \rd \Ze(s),\\
%\rd \Bfy(s)
% &=
%(\lpinte\Jfyfy+\lppers\Jfyfx)\We^{\mu}(\s)\rd s + \rd\big[\Zfy(s)-s\Zfy(1)\big],\\
%\rd\Bfx(s)
% &=
%(\lpinte\Jfyfx+\lppers\Jfxfx)\We^{\mu}(\s)\rd s + \rd\big[\Zfx(s)-s\Zfx(1)\big],
%\end{align*}
%where $\We^{\mu}(\s)=\We(\s)-\overline{\We}$.
\end{corollary}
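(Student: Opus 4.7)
The plan is to derive the reduced structural representation directly from Theorem~\ref{thm:StructuralLimitExperiment} by substituting $\Bb$ for $\Wb$ in the observation, and then to verify via Girsanov that the Radon-Nikodym derivative of the resulting reduced experiment on $\mi=\sigma(\We,\Bb)$ agrees with $\exp(\lllr_\mi(\lpinte,\lppers))$ from Theorem~\ref{thm:LAQ_M}. The key observation is that the bridge operator $W\mapsto B^{W}$ annihilates any linear drift $s\mapsto \eta s$, so that $\eta$ drops out of the observation and the SDE system in the statement may equivalently be written with $\eta=0$.

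For the first step I would integrate the full SDE~(\ref{eqn:StrucLimitWb}) to obtain
\begin{equation*}
\Wb(s)=(\lpinte\Jfyb+\lppers\Jfxb)\int_0^s\We(u)\rd u+\eta s+\Zb(s),
\end{equation*}
subtract $s\Wb(1)$, and observe that the linear $\eta s$ drift cancels out in
\begin{equation*}
\Bb(s)=(\lpinte\Jfyb+\lppers\Jfxb)\Bigl[\int_0^s\We(u)\rd u-s\int_0^1\We(u)\rd u\Bigr]+B^{\Zb}(s).
\end{equation*}
Hence the joint law of $(\We,\Bb)$ under $\prob_{\lpinte,\lppers,\eta}$ is free of $\eta$, so the reduced experiment is already generated by the $\eta=0$ SDE system stated in the corollary. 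For the second step, since the $(\We,\Wb)$-block of the covariance matrix~(\ref{eqn:covariancematrix}) is diagonal, $\Ze$ and $\Zb$ are independent and the Dol\a'{e}ans-Dade exponential attached to the $\eta=0$ SDEs equals $\exp(\cs(\lpinte,\lppers,0)-\tfrac{1}{2}\qt(\lpinte,\lppers,0))$, which is precisely the $\eta=0$ specialisation of Proposition~\ref{proposition_LAQ}(ii). Substituting $\rd\Wb=\rd\Bb+\Wb(1)\rd s$ into the $\Wb$-integral in the central sequence splits it as
\begin{equation*}
\cs(\lpinte,\lppers,0)=\cs_\mi(\lpinte,\lppers)+(\lpinte\Jfyb+\lppers\Jfxb)^\trans\Wb(1)\,\overline{\We},
\end{equation*}
of which only the first summand is $\mi$-measurable.

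The main step is to verify that the $\mi$-conditional expectation under $\prob_{0,0}$ absorbs the residual $\Wb(1)$-term into the quadratic part. Under $\prob_{0,0}$, the standard Brownian decomposition into bridge and endpoint together with the zero $(\We,\Wb)$-covariance in~(\ref{eqn:covariancematrix}) makes $\Wb(1)$ independent of $\mi=\sigma(\We,\Bb)$ and distributed as standard Gaussian with identity covariance. A Gaussian completion of the square then yields $\rElim\bigl[\exp\bigl((\lpinte\Jfyb+\lppers\Jfxb)^\trans\Wb(1)\,\overline{\We}\bigr)\,\bigl|\,\mi\bigr]=\exp\bigl(\tfrac{1}{2}|\lpinte\Jfyb+\lppers\Jfxb|^2(\overline{\We})^2\bigr)$. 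Subtracting this correction from $\tfrac{1}{2}\qt(\lpinte,\lppers,0)$ produces exactly $\tfrac{1}{2}|\lpinte\Jfyb+\lppers\Jfxb|^2\int_0^1(\We-\overline{\We})^2\rd s+\tfrac{1}{2}\lppers^2\int_0^1\We^2\rd s=\tfrac{1}{2}\qt_\mi(\lpinte,\lppers)$, which matches Theorem~\ref{thm:LAQ_M} and closes the identification.
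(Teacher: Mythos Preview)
Your argument is correct. Both your route and the paper's hinge on the same two ingredients: the bridge operator kills the $\eta s$ drift, and the Girsanov density of the $\eta=0$ SDE system restricted to $\sigma(\We,\Bb)$ coincides with $\exp\lllr_\mi(\lpinte,\lppers)$.

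The difference is one of direction and economy. The paper treats the corollary as an immediate reformulation of Theorem~\ref{thm:LAQ_M}: once $\lllr_\mi$ is in hand, one simply reads off the SDE system whose Girsanov density it is (taking bridges of the $\Wfy,\Wfx$ equations, equivalently of~(\ref{eqn:StrucLimitWb})). You instead start from Theorem~\ref{thm:StructuralLimitExperiment}, pass to $\eta=0$ via the bridge cancellation, and then re-derive the conditional expectation $\rElim[\exp\cs_\perp\,|\,\mi]$ to land on $\lllr_\mi$. That last computation---the Gaussian integration over $\Wb(1)$ with the $(\overline{\We})^2$ correction---is precisely the proof of Theorem~\ref{thm:LAQ_M} itself, so your ``verification step'' duplicates work already done. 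Nothing is wrong, but once Theorem~\ref{thm:LAQ_M} is available you can stop after your first step: the bridge cancellation alone shows that the observation $(\We,\Bb)$ generated by the stated $\eta$-free SDEs has, under each $\prob_{\lpinte,\lppers,\eta}$, the law it has under $\prob_{\lpinte,\lppers,0}$, and Theorem~\ref{thm:LAQ_M} already identifies the corresponding likelihood ratio on $\mi$.
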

The difference between Corollary~\ref{corollary:StructuralLimitExperiment_M} and Theorem~\ref{thm:StructuralLimitExperiment} is twofold. First, besides the process $\We$, the observation in the invariant limit experiment in Corollary~\ref{corollary:StructuralLimitExperiment_M} is only the Brownian bridge $\Bb$ and not the complete Brownian motion $\Wb$. Second, as a consequence of this, the nuisance parameter $\eta$ disappeared from~(\ref{eqn:StrucLimitInvariantWb}).

Corollary~\ref{corollary:StructuralLimitExperiment_M} does not provide, as far as we know, a further invariance structure that can be used to eliminate the nuisance parameter $\lppers$. As a result, we rely, in Section~\ref{sec:FeasibleTests}, on the so-called Approximate Least Favorable Distribution method to deal with this last nuisance parameter.

We conclude this section by again considering the special case of a Gaussian innovation density $f$. This also shows where exactly our power gains, under serially independent innovations, come from relative to the Gaussian procedures in, for instance, \citet{JanssonMoreira2006}.

\begin{remark}[Attainability of the Semiparametric Power Envelope] \label{rem:attainability}
One may expect the semiparametric power envelope to be formally attainable by a likelihood-ratio test constructed using a nonparametric estimate of the score function $\score_f$. Intuitively, the argument is as follows. Rewrite $\int_0^1\We(\s)\rd\Bfy(\s) = \int_0^1\left(\We(\s) - \overline{\We}\right)\rd\Wfy(\s)$. Hence, even though there is a bias $a$ (at rate $\sqrt{T}$) in the estimated score function, this bias will be canceled out automatically since $\int_0^1\left(\We(\s) - \overline{\We}\right)\rd\left(a\s+\Wfy(\s)\right) = \int_0^1\left(\We(\s) - \overline{\We}\right)\rd\Wfy(\s)$. The same argument applies to the term $\int_0^1\We(\s)\rd\Bfx(\s)$. Compare the discussion in \citet[Section 6]{Jansson2008} for the unit root testing problem and \citet[Section 2]{Zhou2020} for general LAN, LAMN, and LABF experiments. 
%We provide in Section~\ref{sec:MCEstimatedG} results based on estimated reference densities.
\end{remark}

%\begin{remark}[Gaussian $f$]\label{rem:GaussianInvariant}
%In the situation of Gaussian $f$ as discussed in Remark~\ref{rem:Gaussian1},
%the log-likelihood ratio in Theorem~\ref{thm:LAQ_M} becomes, besides its dependence on $\We$, linear in $\Bfy$ and $\Bfx$. Under Gaussianity, this implies that the optimal invariant procedures are measurable with respect to $\We$ and $\Bperp$, the Brownian bridge generated by $\Wperp$ as introduced in Remark~\ref{rem:Gaussian2}. As $\Bperp$ is spanned by $\Bb$, the $\sigma$-field $\mi_{\rm Gaussian}=\sigma\left(\We,\Bperp\right)$ is also invariant, but it is not \emph{maximally} invariant. As a consequence, under non-Gaussianity, this leads to an efficiency loss in statistical inference. Using a likelihood based on the generally non-maximally invariant $\mi_{\rm Gaussian}$ leads essentially to the inference based on Lemma~3 in \citet{JanssonMoreira2006}.  
%\end{remark}

\begin{remark}[Gaussian $f$]\label{rem:GaussianInvariant}
In the situation of Gaussian $f$, Remark~\ref{rem:Gaussian1} and Remark~\ref{rem:Gaussian2} imply that $\Bfy$ and $\Bfx$ are linear combinations of $\Be$ and $\Bperp$ (the Brownian bridges generated by $\We$ and $\Wperp$, respectively). As a result, the optimal invariant procedures are measurable with respect to $\We$ and $\Bperp$. Using the same conditional expectation calculation, the associated log-likelihood ratio of the Gaussian $\sigma$-field, $\mi_{\rm Gaussian}=\sigma\left(\We,\Bperp\right)$, leads to the Gaussian log-likelihood ratio in \citet[Lemma~3]{JanssonMoreira2006}. As $\Bperp$ is spanned by $\Bb$, the $\sigma$-field $\mi_{\rm Gaussian}$ is also invariant w.r.t $\eta$ (or $f$), but it is not \emph{maximally} invariant. As a consequence, under non-Gaussianity, this leads to an efficiency loss in statistical inference. 

Note that all existing tests in the literature are (essentially) based on the Gaussian likelihood of the generally non-maximally invariant $\mi_{\rm Gaussian}$, e.g., \citet{JanssonMoreira2006} and \citet{EMW2015}. Therefore, these tests belong to the class of asymptotically invariant tests. This invariance imposed in the limiting experiment is associated to invariance w.r.t.\ the innovation density $f$ in the sequence as $\eta$ represents local perturbations precisely of $f$. Indeed, we have the convergence $\WTe(\s) \wto \We(\s)$ and the one associated to $\Wperp$ for all $f \in \F$, hence, $\eta$ will not enter the associated equation (\ref{eqn:StrucLimitWe}) in the limiting experiment. See \citet{Muller2011} for a more comprehensive analysis of this convergence. 
\end{remark}

%\subsection{Semiparametric Power Envelope}\label{subsec:PowerEnvelope}
%\noindent We can now consider the the log-likelihood ratio tests, still for given $\lppers\in(-\infty,0]$,
%\begin{equation}
%\label{eqn:NPTestInvariant}
%\test(\lpinte,\lppers)=\indicator\{\lllr_\mi(\lpinte,\lppers)>\cv(\lpinte,\lppers,f;\siglevel)\}
%\end{equation}
%where $\cv(\lpinte,\lppers,f;\siglevel)$ is implicitly defined by the level restriction $\rElim\left[\test(\lpinte,\lppers)\right]=\siglevel$. By the Neyman-Pearson lemma, we know that the test $\test(\bar{\lpinte},\lppers)$ is point-optimal for the hypothesis $\rH_0:\lpinte=0$ versus $\rH_1:\lpinte=\bar{\lpinte}$, under the assumption of a given $\lppers$. The power envelope is given by
%\begin{equation}\label{eqn:powerenvelope}
%\power(\lpinte,\lppers)
% :=
%\E\left[\test(\lpinte,\lppers)
%	\frac{\rd\prob_{\lpinte,\lppers,\eta}}{\rd\prob_{0,0,0}}\right].
%\end{equation}

\subsection{Rank-based asymptotically invariant statistics} \label{subsec:RankBasedStatistics}

\noindent The elimination of the nuisance parameter $\eta$ is performed in the limit experiment $\LimitExperiment$ and leads to $\LimitExperimentM$. We now show how this elimination can be mimicked in the actual predictive regression model of interest, i.e., in $\SequenceExperiment$. It is reasonable to expect that exploiting the asymptotic invariance structures also works ``well'' for the sequence of experiments. The claim will be substantiated by the simulation results in Section~\ref{sec:MonteCarlo}.

In line with the vast literature on rank-based inference, the appearance of the Brownian Bridges $\Bfx$ and $\Bfy$ in Corollary~\ref{corollary:StructuralLimitExperiment_M}, naturally suggest to use statistics that are based on ranks of the innovations $\eyt$ and $\ext$ in the predictive regression model. Indeed, we will follow that route. However, in the present situation we deal with bivariate innovations $(\eyt,\ext)$ which complicates the analysis considerably relative to models with univariate innovations that are mostly studied in the literature.

As the true innovation density $f$ is unknown, we actually base our test statistic on an assumed (so-called \textit{reference}) density $g$ that also satisfies Assumption~\ref{ass:density_f}. Let $g_y$ and $g_x$ denote the marginal densities for the first, respectively, second component of $g$. The bivariate nature of the innovations $(\eyt,\ext)$ implies that we cannot deal with a completely general reference bivariate density $g$. Thus, we choose marginal reference densities $g_y$ and $g_x$, and a reference correlation parameter $\refcorrp$. For the marginal reference densities, we impose the standard condition in the rank-based inference literature, see, e.g., Theorem~13.5 in \citet{vdVaart2000}.
\begin{assumption}\label{ass:ApproximateScores}
	The marginal \emph{reference densities} $g_i$, $i=\{y,x\}$, are strictly positive, absolutely continuous with derivative $\dot{g}_i$ and $J_{g_i}:=\int\left(\dot{g}_i/g_i\right)^2g_i<\infty$. Moreover, we have
	\begin{equation}\label{eqn:ApproximateScores}
	\lim_{T\to\infty}\frac{1}{T}\sum_{t=1}^T
	\left(-\frac{\dot{g}_i}{g_i}\left(G_i^{-1}\left(\frac{t}{T+1}\right)\right)\right)^2
	=
	J_{g_i},
	\end{equation}
	where $G_i^{-1}$ is the inverse cumulative distribution function associated to $g_i$.
\end{assumption}
Moreover, given an additionally chosen reference correlation $\refcorrp\in(-1,1)$, we define the associated bivariate reference score function 
\begin{align} \label{eqn:rankbasedscore_g}
\score_g(\ey,\ex) := \left(\score_{g_y}(\ey,\ex),\score_{g_x}(\ey,\ex)\right)^\trans
\end{align} 
where
%\begin{align*}
%\score_{g_y}(\ey,\ex)
%&=
%-\left(\frac{\dot{g}_y}{g_y}(\ey)-\refcorrp\frac{\dot{g}_x}{g_x}(\ex)\right)\Big/\sqrt{1-\refcorrp^2},\\
%\score_{g_x}(\ey,\ex)
%&=
%-\left(\frac{\dot{g}_x}{g_x}(\ex)-\refcorrp\frac{\dot{g}_y}{g_y}(\ey)\right)\Big/\sqrt{1-\refcorrp^2}.
%\end{align*}
\begin{align*}
\score_{g_y}(\ey,\ex)
&=
-\left(\frac{\dot{g}_y}{g_y}(\ey)-\refcorrp\frac{\dot{g}_x}{g_x}(\ex)\right)\Big/(1-\refcorrp^2),\\
\score_{g_x}(\ey,\ex)
&=
-\left(\frac{\dot{g}_x}{g_x}(\ex)-\refcorrp\frac{\dot{g}_y}{g_y}(\ey)\right)\Big/(1-\refcorrp^2).
\end{align*}
The linearity of the reference score functions $\score_{g_y}$ and $\score_{g_x}$ is key to the analysis that follows. It implies that, when using component-wise ranks of the innovations $(\ey,\ex)$, the resulting rank-based processes converge to a bivariate Brownian bridge. Despite its seemingly restrictive nature, the linearity allows use to fully exploit the invariance structures embedded in the predictive regression model of interest, leading to sizable power gains (see Section~\ref{sec:MonteCarlo}).

Now, let $R_{y,t}$ denote the rank of $y_t$ (among $y_1,\ldots,y_T$), while $R_{x,t}$ denotes the rank of $\Delta x_t=x_t-x_{t-1}$ (among $\Delta x_1,\ldots,\Delta x_T$). Note that the pairs $\left(R_{y,t},R_{x,t}\right)$ equal the (component-wise) ranks of $\left(\eyt,\ext\right)$ under $\pinte=0$ and $\ppers = 0$. We define the bivariate partial sum process of the rank-based scores by
\begin{align}\label{eqn:rankscorepartialsum}
\BTg(s) 
  =&~ \left(\BTgy(s),\BTgx(s)\right)^\trans  \nonumber \\
 :=&~ \frac{1}{\sqrt{T}}\sum_{t=1}^{\lfloor sT\rfloor} \score_g\left(G^{-1}_y\left(\frac{R_{y,t}}{T+1}\right),G^{-1}_{x}\left(\frac{R_{x,t}}{T+1}\right)\right),
\end{align}
for $s\in[0,1]$. The following result establishes the limiting behavior of $\BTg$ under $\law_{0,0,\eta;f}$. Its proof is again provided in Appendix~\ref{app:Proofs}.
\begin{proposition} \label{prop:limitbehavior_BTg}
Suppose $\et=(\eyt,\ext)^\trans$ are i.i.d.\ innovations with density $f\in\F$. Let $g_y$ and $g_x$ be reference densities that satisfy Assumption~\ref{ass:ApproximateScores} and fix the reference correlation $\refcorrp$. Then, under $\law_{0,0,\eta;f}$, we have
\begin{align} \label{eqn:partialsumconvergence_Bg}
\BTg\wto\Bg,
\end{align}
where $\Bg$ is a bivariate Brownian bridge, i.e., $\Bg(s)=\Wg(s)-s\Wg(1)$, with $\Wg$ a zero-drift Brownian motion. The covariance of $\Wg$ with $\We$ and $\Wf:=(\Wfy,\Wfx)^\trans$ is given by
\begin{equation}\label{eqn:covariancematrix_g}
\var\begin{pmatrix}
	\We(1)\\\Wf(1)\\\Wg(1)
\end{pmatrix}
 =
\begin{pmatrix}
1&\textbf{e}_1^\trans&\coveg^\trans\\\textbf{e}_1&\J_f&\J_{fg}\\\coveg&\J_{gf}&\J_g
\end{pmatrix},
\end{equation}
where
\begin{align*}
\textbf{e}_1
 &=
(0,1)^\trans, \\
\coveg
 &=
(\sigma_{\e g_y},\sigma_{\e g_x})^\trans
 =
\rEf\left[ \ext \score_g\left(G_y^{-1}(F_y(\eyt)),G_x^{-1}(F_x(\ext))\right) \right],\\
\J_{fg}
 &=
\J_{gf}^\trans
 =
\rEf\left[ \score_f(\eyt,\ext)\score_g\left(G_y^{-1}(F_y(\eyt)),G_x^{-1}(F_x(\ext))\right)^\trans \right],\\
\J_g
 &=
\rEf\left[ \score_g\left(G_y^{-1}(F_y(\eyt)),G_x^{-1}(F_x(\ext))\right)\score_g\left(G_y^{-1}(F_y(\eyt)),G_x^{-1}(F_x(\ext))\right)^\trans \right].
\end{align*}
\end{proposition}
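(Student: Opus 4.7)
The plan is to proceed in three steps: (a) a H\'ajek-type asymptotic linearization of the rank-based partial-sum process $\BTg$ under $\law_{0,0,0;f}$, (b) a bivariate FCLT to obtain the Brownian-bridge limit together with the covariance structure~(\ref{eqn:covariancematrix_g}), and (c) a Le Cam third-lemma argument to transfer the convergence from $\law_{0,0,0;f}$ to $\law_{0,0,\eta;f}$.

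For step (a), the crucial observation is that, as emphasized in the text preceding the proposition, each component of $\score_g$ is a linear combination of $\dot g_y/g_y(\ey)$ and $\dot g_x/g_x(\ex)$ with coefficients depending only on $\refcorrp$. Consequently, $\BTg(s)$ decomposes into a linear combination of two univariate rank-score partial-sum processes, one built from the $R_{y,t}$'s and one from the $R_{x,t}$'s. Each is a classical univariate rank-score process to which, under Assumption~\ref{ass:ApproximateScores}, the process version of H\'ajek's projection theorem applies; for $i\in\{y,x\}$ this yields, uniformly in $s\in[0,1]$,
\begin{align*}
\frac{1}{\sqrt{T}}\sum_{t=1}^{\lfloor sT\rfloor}\frac{\dot g_i}{g_i}\!\left(G_i^{-1}\!\left(\tfrac{R_{i,t}}{T+1}\right)\right)
&=
\widetilde V^{(T)}_i(s) - \frac{\lfloor sT\rfloor}{T}\widetilde V^{(T)}_i(1) + o_P(1),
\end{align*}
where $\widetilde V^{(T)}_i(s):=T^{-1/2}\sum_{t=1}^{\lfloor sT\rfloor}(\dot g_i/g_i)(G_i^{-1}(F_i(\varepsilon^i_t)))$ is a partial-sum process with i.i.d.\ summands. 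The subtracted term reflects that, at $s=1$, the rank-score sum is deterministic and asymptotically negligible: $T^{-1/2}\sum_{r=1}^T(\dot g_i/g_i)(G_i^{-1}(r/(T+1)))\to 0$ as $T\to\infty$ by Assumption~\ref{ass:ApproximateScores} combined with $\int\dot g_i=0$. Linearity of $\score_g$ then gives $\BTg(s) = \widetilde W^{(T)}_g(s) - s\widetilde W^{(T)}_g(1) + o_P(1)$ uniformly in $s$, where $\widetilde W^{(T)}_g(s):=T^{-1/2}\sum_{t=1}^{\lfloor sT\rfloor}\score_g(G_y^{-1}(F_y(\eyt)), G_x^{-1}(F_x(\ext)))$ has i.i.d.\ $\SR^2$-valued summands with mean zero under $f$ (since $F_y(\eyt)$ and $F_x(\ext)$ are marginally $U(0,1)$ and $\int\dot g_y=\int\dot g_x=0$).

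For step (b), the bivariate FCLT applied to $\widetilde W^{(T)}_g$, jointly with $(\WTe,\WTfy,\WTfx)$, yields weak convergence in $D^4[0,1]$ to a zero-drift Brownian motion on $\SR^4$; the covariance entries $\coveg$, $\J_{fg}$ and $\J_g$ in~(\ref{eqn:covariancematrix_g}) follow from direct computation of the summand-level cross-moments under $f$. Subtracting $s\widetilde W^{(T)}_g(1)$ from the representation in step (a) converts the Brownian-motion limit $\Wg$ of $\widetilde W^{(T)}_g$ into the Brownian bridge $\Bg$, proving the convergence under $\law_{0,0,0;f}$. For step (c), the contiguity of $\law_{0,0,\eta;f}$ with respect to $\law_{0,0,0;f}$ (Le Cam's first lemma via Proposition~\ref{proposition_LAQ}(iii)) together with Le Cam's third lemma implies that, under $\law_{0,0,\eta;f}$, the limit of $\BTg$ differs from $\Bg$ only by a deterministic shift proportional to $\cov(\Bg(\cdot),\Wb(1))$; but since $\Wg$ and $\Wb$ are joint Brownian motions, $\cov(\Wg(s),\Wb(1)) = s\cov(\Wg(1),\Wb(1))$ and hence $\cov(\Bg(s),\Wb(1)) = 0$, so the shift vanishes and $\BTg\wto\Bg$ under $\law_{0,0,\eta;f}$ as well. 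The main obstacle, in my view, is step~(a): one needs the uniform-in-$s$ H\'ajek projection under the rather weak Assumption~\ref{ass:ApproximateScores}, together with joint control of the two component processes (which are dependent through the bivariate law of $(\eyt,\ext)$ under $f$). Both become tractable precisely because the linearity of $\score_g$ reduces the problem to two classical univariate rank processes, whose joint limiting covariance is then computed at the summand level.
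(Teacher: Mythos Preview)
Your proposal is correct and follows essentially the same route as the paper: both exploit the linearity of $\score_g$ to reduce $\BTg$ to two univariate rank-score partial-sum processes, apply the H\'ajek projection (the paper invokes Theorem~13.5 of \citet{vdVaart2000}) to linearize each, and then appeal to the FCLT jointly with $(\WTe,\WTf)$ to obtain the Brownian-bridge limit and the covariance~(\ref{eqn:covariancematrix_g}). Your step~(c), using Le Cam's third lemma together with $\cov(\Bg(s),\Wb(1))=0$ to transfer the convergence from $\law_{0,0,0;f}$ to $\law_{0,0,\eta;f}$, is more explicit than the paper's short sketch, which does not spell out this point.
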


The above result is classical for univariate rank statistics. In the present paper, we use component-wise bivariate ranks. One complication is that the matrix $\J_g$ depends on $f$ through its copula. This implies that, like $\J_g$, it will have to be estimated in applications; compare also to the discussion of Theorem~3.1 in \citet{Zhou2020}.

We use the rank-based processes $\BTg$ to replace $\Bf$ in the likelihood ratio in Theorem~\ref{thm:LAQ_M}; see Section~\ref{sec:PuttingItAllTogether} for details. In line with Remark~\ref{rem:attainability}}, one could contemplate to use reference densities $\hat{f}$ based on a non-parametric estimate of the true innovation density, but we leave a formal analysis for future work. As we will see in Section~\ref{sec:MonteCarlo}, even for incorrectly chosen reference densities (that is, for $g\neq f$), our procedure features power gains over existing Gaussian based procedures. These gains come from the assumption that the error term $\et$ is driven by some i.i.d.\ innovations, which may possibly be maintained in empirical work. It is important to note that choosing a reference density $g\neq f$ does not affect the validity of our test. The test will be of the appropriate level irrespective of the reference densities $g_y$ and $g_x$ chosen (provided they satisfy Assumption~\ref{ass:ApproximateScores}). But, likelihood ratio tests based on Theorem~\ref{thm:LAQ_M} still feature the nuisance parameter $\lppers$. We deal with this in the next section.

For completeness, we also provide the equivalent to Corollary~\ref{corollary:StructuralLimitExperiment_M} when using the reference density $g$.

\begin{corollary} \label{corollary:StructuralLimitExperiment_Mg}
Fix $f\in\F$. Let $\lpinte\in\SR$, $\lppers\in(-\infty,0]$, and $\eta\in c_{00}$. Then, under $\prob_{\lpinte,\lppers,\eta}$, the behavior of $\We$ and $\Bg$ follows
\begin{align}
\rd \We(s)
 &=
\lppers\We(s)\rd s + \rd \Ze(s),\\
\rd\Wg(s)
 &=
\J_{gf}\begin{pmatrix} \lpinte \\ \lppers \end{pmatrix}
	\We(s)\rd s + \rd\Zg(s),
\end{align}
where, under $\prob_{0,0,0}$, $\Zg$ is a bivariate Brownian motion with variance $J_g$ and covariance with $\We$ equal to $\coveg$. 
\end{corollary}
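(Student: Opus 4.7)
The proof I would give applies Girsanov's theorem to the limiting log-likelihood ratio of Proposition~\ref{proposition_LAQ}(ii), combined with the key observation that any $\eta$-dependent drift acquired by $\Wg$ under the change of measure is deterministic and linear in $s$, and is thus annihilated by the bridge operation $W \mapsto W - s\,W(1)$. This mirrors the mechanism by which the $\eta$-drift on $\Wb$ disappears when passing from Theorem~\ref{thm:StructuralLimitExperiment} to Corollary~\ref{corollary:StructuralLimitExperiment_M}.

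The first step is to define $\Zg$ via its $\prob_{0,0,0}$-distribution. By Proposition~\ref{prop:limitbehavior_BTg} and the covariance structure~(\ref{eqn:covariancematrix_g}), $\Wg$ viewed under $\prob_{0,0,0}$ is already a zero-drift bivariate Brownian motion with variance $\J_g$ and covariance $\coveg$ with $\We$, so setting $\Zg := \Wg$ under $\prob_{0,0,0}$ supplies exactly the properties required of $\Zg$ in the statement. The SDE for $\We$ is then a direct restatement of Theorem~\ref{thm:StructuralLimitExperiment}, so only the SDE for $\Wg$ needs work.

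Next, applying Girsanov to pass from $\prob_{0,0,0}$ to $\prob_{\lpinte,\lppers,\eta}$: the martingale part of the log-likelihood process is $\lpinte\int_0^{\cdot}\We\rd\Wfy + \lppers\int_0^{\cdot}\We\rd\Wfx + \eta^\trans\Wb(\cdot)$, so the drift acquired by $\Wg$ at time $s$ is obtained from its quadratic covariations with $\Wfy$, $\Wfx$, and $\Wb$. Reading these off from~(\ref{eqn:covariancematrix_g}) (extended by $\J_{gb} := \cov(\Wg(1),\Wb(1))$), the total added drift equals
\[
\J_{gf}\begin{pmatrix}\lpinte\\\lppers\end{pmatrix}\We(s) + \J_{gb}\eta.
\]
The second summand $\J_{gb}\eta$ is constant in $s$, hence contributes only the deterministic linear-in-$s$ term $s\,\J_{gb}\eta$ to $\Wg(s)$, which is killed by the bridge map $\Wg(s) \mapsto \Bg(s) = \Wg(s) - s\Wg(1)$. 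Since the corollary describes the joint law of the invariant observables $(\We, \Bg)$ only, this $\eta$-drift can be suppressed, leaving the SDE as stated. I expect the main obstacle to be the careful Girsanov bookkeeping in the infinite-dimensional setting; however, the restriction $\eta \in c_{00}$ guarantees that only finitely many coordinates of $\Wb$ are ever probed, so the quadratic-covariation computations reduce to a finite-dimensional Gaussian calculation and the argument goes through as in Theorem~\ref{thm:StructuralLimitExperiment}.
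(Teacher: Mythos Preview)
Your proposal is correct but follows a different route from the paper's proof. The paper proceeds by writing $\Wg$ as a linear combination $\Wg(s)=v\We(s)+A\Wf(s)+W_\perp(s)$ with $W_\perp$ independent of $(\We,\Wf)$, and then substitutes the already-derived SDEs for $\We$ and $\Wf$ from the invariant limit experiment (Corollary~\ref{corollary:StructuralLimitExperiment_M}), where $\eta$ has been removed. The drift coefficient $\J_{gf}$ then drops out of the covariance identity $\J_{gf}=v\mathbf{e}_1^\trans+A\Jf$ obtained by computing $\cov[\Wg(1),\Wf(1)]$ on both sides of the decomposition. You, by contrast, apply Girsanov directly in the full limit experiment $\LimitExperiment$, reading off the drift of $\Wg$ from its quadratic covariations with the three pieces of the log-likelihood martingale, and then argue explicitly that the resulting $\eta$-dependent piece $\J_{gb}\eta$ is constant in $s$ and hence annihilated by the bridge map. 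The paper's decomposition recycles Corollary~\ref{corollary:StructuralLimitExperiment_M} and reduces the computation to one covariance identity, at the cost of being silent about why $\eta$ never appears; your direct Girsanov calculation is more self-contained and makes the elimination of $\eta$ explicit, which is arguably clearer given that the corollary is stated under $\prob_{\lpinte,\lppers,\eta}$ with general $\eta$.
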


%\begin{remark}
%$\J_{fg}\refcorrm=\diag{\{\J_{f_y g_y},\J_{f_x g_x}\}}$. If both $g_y$ and $g_x$ are (chosen to be) Gaussian, then for any $f$, we have $\J_{f_y g_y}\geq 1$ and $\J_{f_x g_x}\geq 1$. These equalities hold if $f$ is also Gaussian. 
%\end{remark}

\section{Eliminating the nuisance parameter $\ppers$ by ALFD}\label{sec:FeasibleTests}

\noindent
In the previous section, we have developed the semiparametric power envelope for tests on $\lpinte$ that are invariant with respect to $\eta$, under the assumption that $\lppers$ is known. We now address the question of testing the regression coefficient $\pinte$ in case $\ppers$ is treated as a nuisance parameter as well.

As argued in the the discussion following Corollary~\ref{corollary:StructuralLimitExperiment_M}, we conjecture that the nuisance parameter $\lppers$ cannot be dealt with using invariance arguments. Various alternative methods to deal with nuisance parameters in testing problems have been used in the literature. In relation to the predictive regression model at hand, we mention the Bonferroni method (\citet{CavanaghElliotStock1995} and \citet{CampbellYogo2005}); tests based on a conditional unbiasedness condition (\citet{JanssonMoreira2006}); and tests based on a numerically calculated Approximate Least Favorable Distribution (ALFD) as more recently proposed in \citet{EMW2015}. All these techniques apply to the Gaussian likelihood ratio statistic in Remark~\ref{rem:GaussianInvariant}.

These approaches have different advantages and disadvantages. \citet{CampbellYogo2005} proposes a modified Bonferroni method to eliminate the nuisance parameter $\lppers$, leading to a simple yet more powerful test than the \citet{CavanaghElliotStock1995} test. However, as pointed out by \citet{Phillips2014}, inference based on Bonferroni bounds can be severely undersized when the predictor is ``far away'' from being a unit root process ($\ppers<<1$). In such a case, confidence intervals obtained by inverting the test may end up having essentially zero coverage probability. \citet{JanssonMoreira2006} develops an approach conditional on specific auxiliary statistics---the terms (only) associated with $\lppers$ in the Gaussian likelihood ratio---and derives an optimal test in the class of conditionally unbiased tests. Nevertheless, such a conditional unbiasedness constraint narrows the considered class and rules out some more powerful tests. Consequently, as shown by the simulation results of \citet{JanssonMoreira2006}, the associated test has relatively low power compared to the \citet{CampbellYogo2005} test under most alternatives.

Recently, \citet{EMW2015} proposes a numerical algorithm to determine an ALFD of the nuisance parameter $\lppers$ to optimize weighted average
power over some compact interval (of $\lppers$). Note that with respect to our parameter of interest $\lpinte$, we consider point-optimal test and do not use weighted powers over a discretized space to avoid the induced computational complexities. On one hand, the ALFD yields an upper bound of the weighted average power for all valid tests. On the other hand, integrating out the likelihood statistic w.r.t.\ the ALFD leads to a ``nearly optimal'' test whose power is close to the upper bound. Moreover, by switching to standard asymptotic approximations in case $\ppers$ appears to be far from unity, the associated test can achieve better size and power performances uniformly for all $\lppers\in (-\infty, 0]$ (e.g., across the parameter space $\ppers\in(-1,1]$).
Therefore, we employ this ALFD approach in the present paper, together with the switching mechanism (see Appendix~\ref{app:Switching}), to our rank-based likelihood statistics in~(\ref{eqn:likelihoodstatistic_g}).\footnote{We expect that other approaches based on likelihood ratios, e.g., the approaches of \citet{CampbellYogo2005} and \citet{JanssonMoreira2006}, will apply here as well. This is because (i) the semiparametric likelihood ratio $\lllr_\mi(b,c)$ in Theorem~\ref{thm:LAQ_M} is as the general version of the Gaussian likelihood ratio in \citet[Lemma 3]{JanssonMoreira2006}, thus when the true density is Gaussian, the former reduces to the latter; (ii) its rank-based proxy in (\ref{eqn:likelihoodstatistic_g}) has the same structure (exponential family); and (iii) the asymptotic behaviors of the associated rank-based processes are known and consistently estimable.} This leads to tests that are of correct size for all relevant $\lppers$ and have good power performance. We confirm these properties by simulations in Section~\ref{sec:MonteCarlo}. 
%where we show that these properties still hold after we have reduced the testing problem by invariance arguments as in Section~\ref{sec:PowerEnvelope}.
%That is, our test enjoys the same size properties but with improved power for non-Gaussian innovations.

\subsection{The Approximately Least Favorable Distribution (ALFD) Approach} \label{subsec:ALFD}
\noindent In Section~\ref{sec:PowerEnvelope} we used invariance arguments to reduce the predictive regression testing problem towards log-likelihood ratio of the form~(\ref{eqn:LAQ_M}) where $\lpinte$ is the parameter of interest to be tested and $\lppers$ is a nuisance parameter. We briefly outline, in the present section, how the Approximate Least Favorable Distribution approach in \citet{EMW2015} works in our setting.
 
Rewrite the log-likelihood ratio of the maximal invariant $\mi$ in Theorem~\ref{thm:LAQ_M} as
\begin{align*}
\lllr_\mi(\lpinte,\lppers)
 =
\lpinte\stat_1+\lppers\stat_2-\frac{1}{2}\left((\lpinte,\lppers)\Jf(\lpinte,\lppers)^\trans-\lppers^2\right)\stat_3-\frac{1}{2}\lppers^2\stat_4,
\end{align*}
where
\begin{align} \label{eqn:sufficientstatistics_f}
\stat_1
 &=
\int_0^1\We(\s)\rd\Bfy(\s), ~~~
\stat_2
  =
\int_0^1\We(\s)\rd\Bfx(\s) + \We(1)\overline{\We},\\
\stat_3
 &=
\overline{\We^2} - \left(\overline{\We}\right)^2  ~~~ {\rm and} ~~~
\nonumber
\stat_4
  =
\overline{\We^2}.
\end{align}
One can thus consider the four-dimensional sufficient statistic $\stat:=\left(\stat_1,\stat_2,\stat_3,\stat_4\right)$. For notational simplicity, in the present section, we denote by $F_{\lpinte,\lppers}(\stat)$ the distribution of $\stat$ under $\prob_{b,c}$. The hypothesis of interest is 
\begin{align} \label{eqn:hypothesis_sufficientStat}
\rH_0: \lpinte=0,~~~\lppers\in(-\infty,0] ~~~{\rm versus}~~~ \rH_1:\lpinte>0,~~~\lppers\in(-\infty,0].
\end{align}
Note that, thus, both the null and the alternative hypothesis are composite. We first discuss elimination of the nuisance parameter $\lppers$ under the alternative and, subsequently, its elimination under the null.

To eliminate the nuisance parameter $\lppers$ under the alternative, a standard approach is to consider a so-called weighted average power (see, e.g., \citet{AndrewsPloberger1994})
\begin{align} \label{eqn:weightedaveragepower}
{\rm WAP}(\test)
 =
\int_c\left(\int_\stat\test(\stat)\rd F_{\lpinte,\lppers}(\stat)\right)\rd \Lambda_1(\lppers),
\end{align}   
where $\test$ is some test function for the problem above and $\Lambda_1$ is a probability weighting measure for $\lppers\in(-\infty,0]$. The weighting measure $\Lambda_1$ can be chosen by the researcher and reflects the weights that she assigns to various values of $\lppers$ under the alternative. Due to Fubini's Theorem, we have
\begin{align}
{\rm WAP}(\test)
 =
\int_\stat\test(\stat)\rd
	\int_\lppers F_{\lpinte,\lppers}(\stat)\rd\Lambda_1(\lppers),
\end{align}
which leads to the simple alternative hypothesis $\rH_{1;\Lambda_1}$, under which the distribution of $\stat$ is given by the mixture $F_{\lpinte;\Lambda_1}(\stat)=\int F_{\lpinte,\lppers}(\stat)\rd\Lambda_1(\lppers)$. In this way, the testing problem is reduced to testing $\rH_0$ against $\rH_{1;\Lambda_1}$.

Subsequently, in order to eliminate the nuisance parameter $\lppers$ under the null we proceed as follows. Again we impose a probability weighting measure $\Lambda_0$ for $\lppers$ and introduce the simple null hypothesis, denoted $\rH_{0;\Lambda_0}$, under which the distribution of $\stat$ is given by $F_{\lpinte;\Lambda_0}(\stat)=\int F_{\lpinte,\lppers}(\stat)\rd\Lambda_0(\lppers)$. Now we define the test $\test_{\bar{\lpinte};\Lambda}$ by
\begin{align}
\label{eqn:neymanpearsontest}
\test_{\bar{\lpinte},\Lambda_0}(\stat)
 =
\left\{\begin{array}{l}
1 {\rm~~~if~~~} \rd F_{\bar{\lpinte},\Lambda_1}(\stat) > \cv\rd F_{0,\Lambda_0}(\stat),\\
0 {\rm~~~if~~~} \rd F_{\bar{\lpinte},\Lambda_1}(\stat) \leq \cv\rd F_{0,\Lambda_0}(\stat),
\end{array}
\right.
\end{align}
where the critical value $\kappa$ is chosen to obtain the desired size. By the Neyman-Pearson Lemma, $\test_{\bar{\lpinte},\Lambda_0}$ is point optimal at $\lpinte=\bar{\lpinte}$, for the problem of testing the null $\rH_{0;\Lambda_0}$ against the alternative $\rH_{1;\Lambda_1}$.

The problem of choosing $\Lambda_0$ is, unfortunately, more complicated than that of choosing $\Lambda_1$. The reason is that we want to control the rejection probability of the test, not only under $\rH_{0;\Lambda_0}$, but for all values of $\lppers\in(-\infty,0]$. In general there is no reason to expect that a level-$\siglevel$ test under $\rH_{0;\Lambda_0}$ is of correct size for the entire null hypothesis $\rH_0$. However, for some specific choices of $\Lambda_0$ this statement is true, and such a distribution is called a \emph{least-favorable distribution}; see, e.g.,  \citet{LehmannRomano2005}, Theorem~3.8.1. Formally, a distribution $\Lambda_0^*$ is called \emph{least favorable} if the most powerful level-$\siglevel$ test~(\ref{eqn:neymanpearsontest}) for testing $\rH_{0;\Lambda_0^*}$ against $\rH_{1;\Lambda_1}$ is of the desired size for the (entire) null hypothesis $\rH_0$. Moreover, once more by Theorem~3.8.1 in \citet{LehmannRomano2005}, the test $\test_{\bar{\lpinte},\Lambda_0^*}$ is also point optimal (at $\lpinte=\bar{\lpinte}$) for this problem. A least-favorable distribution $\Lambda_0^*$ exists in most of the usual statistical problems. conditions that ensure this and associated references can be found in Section~3.8 of \citet{LehmannRomano2005}. 

As, in most cases, the least-favorable distribution $\Lambda_0^*$ is not easily obtained, \citet{EMW2015} propose a numerical method to find, what they call, an ``Approximate Least Favorable Distribution'' (ALFD). The ALFD is defined as follows.
\begin{definition}\label{def:eps-ALFD}
An $\epsilon$-ALFD is a probability distribution $\Lambda_0^{*\epsilon}$ over $(-\infty,0]$ satisfying
\begin{itemize}
\item[(i)] the Neyman-Pearson test~(\ref{eqn:neymanpearsontest}) with $\Lambda=\Lambda_0^{*\epsilon}$ and critical value $\kappa=\kappa^*$, i.e., $\test_{\bar{\lpinte},\Lambda_0^{*\epsilon}}$, is of size $\siglevel$ under $\rH_{0;\Lambda_0^{*\epsilon}}$ and has power $\bar{\pi}$ against $\rH_{1;\Lambda_1}$;
\item[(ii)] there exists $\cv^{*\epsilon}$ such that the test~(\ref{eqn:neymanpearsontest}) with $\Lambda=\Lambda_0^{*\epsilon}$ and $\cv=\cv^{*\epsilon}$, $\test_{\bar{\lpinte},\Lambda_0^{*\epsilon}}^\epsilon$, is of level $\siglevel$ under $\rH_0$, and has power of at least $\bar{\pi}-\epsilon$ against $\rH_{1;\Lambda_1}$.
\end{itemize}
\end{definition}
The test $\test_{\bar{\lpinte},\Lambda_0^{*\epsilon}}^{\epsilon}$ (in particular, the ALFD $\Lambda_0^{*\epsilon}$ and the critical value $\cv^{*\epsilon}$) is exactly what we are looking for, once we have set the weights $\Lambda_1$ of interest for the alternative hypothesis. Besides the size control under $\rH_0$, the definition above also ensures that the test $\test_{\bar{\lpinte},\Lambda_0^{*\epsilon}}^{\epsilon}$ enjoys a near-optimality property with a relatively small power loss (less than $\epsilon$).

Note that even for a given (small) value of $\epsilon$, the ALFD $\Lambda_0^{*\epsilon}$ is not necessarily ``close'' to the least favorable distribution $\Lambda_0^*$. Actually, (possibly infinitely) many pairs of $(\Lambda_0^{*\epsilon},\cv^{*\epsilon})$ may satisfy Definition~\ref{def:eps-ALFD}. The details about how to implement the numerical algorithm to determine a pair of $(\Lambda_0^{*\epsilon},\cv^{*\epsilon})$ (henceforth the test $\test_{\bar{\lpinte},\Lambda_0^{*\epsilon}}$) for a small $\epsilon$ can be found in Section~3 and Appendix~A of \citet{EMW2015}. As the nuisance parameter space $\lppers\in(-\infty,0]$ is unbounded, we also need to ``switch'' back to standard test statistics (i.e., in the stationary case) for large values of $|c|$. We provide in Appendix~\ref{app:Switching} the details about our test for the standard part of the limit experiment $\LimitExperiment$.

\subsection{Putting it all together}\label{sec:PuttingItAllTogether}

\noindent Putting everything together, our test for the predictive regression model is based on applying the ALFD approach to the rank-based counterpart (using Proposition~\ref{prop:limitbehavior_BTg}) of the asymptotically point-optimal invariant derived in Theorem~\ref{thm:LAQ_M}.

We thus replace, in the sufficient statistic $\stat=\left(\stat_1,\stat_2,\stat_3,\stat_4\right)$ in~(\ref{eqn:sufficientstatistics_f}), $\We$, $\Bfy$, and $\Bfx$ by $\WTe$, $\BTgy$, and $\BTgx$, leading to the feasible rank-based statistic
\begin{equation}\label{eqn:LLRMSufficientTRankBased}
\stat_{g}^{(T)}
 :=
\Big(\stat_{g,1}^{(T)}, \stat_{g,2}^{(T)}, \stat_{g,3}^{(T)}, \stat_{g,4}^{(T)}\Big),
\end{equation}
where
\begin{align*}
\stat_{g,1}^{(T)}
 &=
\int_0^1\WTe(\s)\rd\BTgy(\s),\\
\stat_{g,2}^{(T)}
 &=
\int_0^1\WTe(\s)\rd\BTgx(\s) + \WTe(1)\int_0^1\WTe(\s)\rd\s,\\
\stat_{g,3}^{(T)}
 &=
\int_0^1\WTe(\s)^2\rd\s - \left(\int_0^1\WTe(\s)\rd\s\right)^2,\\
\stat_{g,4}^{(T)}
 &=
\int_0^1\WTe(\s)^2\rd\s.
\end{align*}
%\newcommand{\csTg}{\cs_g^{(T)}}
%\begin{align}
%\label{eqn:DeltaG}
%\csTg(\lpinte,\lppers)
% =
%\int_0^1\WTe(\s)\rd\left(\lpinte\BTgy(\s)+\lppers\BTgx(\s)\right) + \lppers\WTe(1)\overline{\WTe},
%\end{align}
%with
%\begin{align}
%\overline{\WTe}
% =
%\int_0^1\WTe(\s)\rd\s.
%\end{align}
To make the log-likelihood ratio $\lllr_{\mi}$ in~(\ref{eqn:LAQ_M}) fully feasible, we also have to deal with $J_f$. From \citet{KaganLandsman1999} we know that $J_f$ is diagonalized by the Cholesky root of the correlation matrix $\refcorrm$. Therefore, we replace $J_f$ by
\begin{align} \label{eqn:Fisherinformationpicked}
\Jc
 =
\begin{pmatrix}\Jcycy&\Jcycx\\\Jcycx&\Jcxcx\end{pmatrix}
 :=
{\refcorrm^{-\frac12}}^\trans\diag{\{\J_{g_y},\J_{g_x}\}}{\refcorrm^{-\frac12}},
\end{align}
where $\J_{g_y}$ and $\J_{g_x}$ are the Fisher information of the chosen marginal reference densities defined in Assumption~\ref{ass:ApproximateScores}, and $\refcorrm$ is the correlation matrix based on the chosen reference correlation $\refcorrp$, i.e., $\refcorrm := \left(\begin{smallmatrix} 1 & \refcorrp \\ \refcorrp & 1 \end{smallmatrix}\right)$. We recommend to use a consistent estimate of $\rho$ as $\refcorrp$ regarding the power of the test, although any choice of $\refcorrp$ would lead to correct sizes. This leads to our feasible rank-based log-likelihood statistic
\begin{align} \label{eqn:likelihoodstatistic_g}
\lllr_g^{(T)}(\lpinte,\lppers) := \lpinte\stat_{g,1}^{(T)}+\lppers\stat_{g,2}^{(T)}-\frac{1}{2}\left((\lpinte,\lppers)\Jc(\lpinte,\lppers)^\trans-\lppers^2 \right)\stat_{g,3}^{(T)}-\frac{1}{2}\lppers^2\stat_{g,4}^{(T)},
\end{align}
of which the limit is given by the proposition below.
%\newcommand{\qTg}{\qt_{g}^{(T)}}
%\begin{align}
%\label{eqn:QG}
%\qTg(\lpinte,\lppers) 
%=&~
%\left(\lpinte^2\Jcycy+\lppers^2(\Jcxcx-1)+2\lpinte\lppers\Jcycx\right)
%\int_0^1\left(\WTe(\s)-\overline{\WTe}\right)^2\rd\s \\
%&~\nonumber + \lppers^2\left(\overline{\WTe}\right)^2.
%\end{align}
\begin{proposition} \label{prop:weakconvergence_lllr_g}
Suppose $\et=(\eyt,\ext)^\trans$ are i.i.d.\ innovations with density $f\in\F$. Let $g_y$ and $g_x$ be reference densities that satisfy Assumption~\ref{ass:ApproximateScores} and fix the reference correlation $\refcorrp$. Then, for $\lpinte\in\SR$ and $\lppers\in(-\infty,0]$, under $\law_{0,0,\eta;f}$, we have
\begin{align*}
\lllr_g^{(T)}(\lpinte,\lppers) \wto \lllr_g(\lpinte,\lppers),
\end{align*}
where 
\begin{align} \label{eqn:LAQ_Mg}
\lllr_g(\lpinte,\lppers) := \lpinte\stat_{g,1}+\lppers\stat_{g,2}-\frac{1}{2}\left((\lpinte,\lppers)\Jc(\lpinte,\lppers)^\trans-\lppers^2 \right)\stat_{g,3}-\frac{1}{2}\lppers^2\stat_{g,4}
\end{align}
with
\begin{align*}
\stat_{g,1}
 &=
\int_0^1\We(\s)\rd\Bgy(\s), ~~~
\stat_{g,2}
 =
\int_0^1\We(\s)\rd\Bgx(\s) + \We(1)\overline{\We}, \\
\stat_{g,3}
 &=
\overline{\We^2} - \left(\overline{\We}\right)^2  {\rm ~~~and~~~}
\stat_{g,4}
 =
\overline{\We^2}.
\end{align*}
\end{proposition}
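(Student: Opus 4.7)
The plan is to prove the joint weak convergence of the four-dimensional sufficient-statistic vector $\stat_g^{(T)}=(\stat_{g,1}^{(T)},\stat_{g,2}^{(T)},\stat_{g,3}^{(T)},\stat_{g,4}^{(T)})$ to $\stat_g$ under $\law_{0,0,\eta;f}$, and then conclude by the continuous mapping theorem applied to the deterministic quadratic function of $(\lpinte,\lppers)$ with fixed matrix $\Jc$ that maps $\stat_g$ to $\lllr_g(\lpinte,\lppers)$. Jointness is essential here because $\lllr_g^{(T)}(\lpinte,\lppers)$ mixes all four components.

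The starting point is to combine Proposition~\ref{prop:limitbehavior_BTg} with the standard functional CLT for $\WTe$ (valid under Assumption~\ref{ass:density_f}(a), since $\Delta x_t=\ext$ under $\ppers=1$) to obtain the joint weak convergence
$$
(\WTe,\BTg)\wto (\We,\Bg)
$$
in the Skorokhod space $D[0,1]^{3}$ under $\law_{0,0,\eta;f}$. Jointness follows because both processes are built from the same bivariate innovations $\et$: the same Cramér-Wold and rank-approximation arguments that establish Proposition~\ref{prop:limitbehavior_BTg} can be applied to the augmented increment vector to give joint convergence to a Gaussian limit with covariance structure~(\ref{eqn:covariancematrix_g}). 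Because the limits $\We$ and $\Bg$ are a.s.\ continuous, the convergences $\stat_{g,3}^{(T)}\to \overline{\We^2}-(\overline{\We})^{2}$, $\stat_{g,4}^{(T)}\to \overline{\We^2}$, and the second summand $\WTe(1)\overline{\WTe}$ of $\stat_{g,2}^{(T)}$ all follow by the continuous mapping theorem applied to continuous functionals on $C[0,1]$.

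The main obstacle is the convergence of the stochastic-integral terms $\int_0^1 \WTe(s)\rd\BTgy(s)$ and $\int_0^1 \WTe(s)\rd\BTgx(s)$, since stochastic integration is not continuous in the Skorokhod topology. My plan is to use summation-by-parts (the discrete integration-by-parts identity) to rewrite
$$
\int_0^1 \WTe(s)\rd\BTgy(s) = \WTe(1)\BTgy(1) - \int_0^1 \BTgy(s-)\rd\WTe(s),
$$
and analogously for the $\BTgx$ integral. The boundary term is asymptotically negligible: $\BTgy(1)=T^{-1/2}\sum_{t=1}^T\score_{g_y}(G_y^{-1}(R_{y,t}/(T+1)),G_x^{-1}(R_{x,t}/(T+1)))$ equals $T^{-1/2}$ times a fixed permutation of the deterministic sequence $\{-\dot g_y/g_y(G_y^{-1}(t/(T+1)))\}_{t=1}^T$ (up to the $\refcorrp$-linear combination), whose Cesàro average tends to $-\int \dot g_y=0$ under Assumption~\ref{ass:ApproximateScores}, so $\BTgy(1)=\opone$. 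For the remaining integral I would invoke a Kurtz-Protter-type continuity theorem for weak convergence of stochastic integrals whose integrands converge uniformly to a continuous limit and whose integrators are semimartingales with uniformly bounded quadratic variation; both conditions hold here thanks to the joint convergence $(\WTe,\BTg)\wto(\We,\Bg)$ and the martingale structure of $\WTe$ under $\law_{0,0,\eta;f}$ (contiguity to $\law_{0,0,0;f}$ via Proposition~\ref{proposition_LAQ} handles the density perturbation).

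An equally acceptable elementary alternative is to view the integrals directly as explicit Riemann-Stieltjes sums over the grid $\{t/T\}_{t=0}^{T}$ and verify their joint convergence with $(\stat_{g,3}^{(T)},\stat_{g,4}^{(T)})$ by combining tightness with finite-dimensional convergence obtained from the same joint FCLT. Either route yields $\stat_g^{(T)}\wto \stat_g$ jointly, and a final continuous mapping theorem application to the deterministic polynomial mapping $\stat_g\mapsto \lllr_g(\lpinte,\lppers)$ completes the proof.
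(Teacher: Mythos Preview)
Your proposal follows essentially the same route as the paper, which merely cites the joint weak convergences~(\ref{eqn:partialsumconvergence}) and~(\ref{eqn:partialsumconvergence_Bg}), the continuous mapping theorem, and the rank-based stochastic-integral convergence argument of Lemma~4.1 in \citet{ZvdAW2016}; you have simply spelled out one concrete way to obtain that last step where the paper refers externally.

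One small technical caveat: your discrete integration-by-parts identity, as written, omits the quadratic-covariation term. For c\`adl\`ag step processes the correct identity reads
\[
\int_0^1 \WTe(s-)\,\rd\BTgy(s)=\WTe(1)\BTgy(1)-\int_0^1 \BTgy(s-)\,\rd\WTe(s)-[\WTe,\BTgy]_1,
\]
and here $[\WTe,\BTgy]_1=T^{-1}\sum_t(\ext/\sigma_x)\,\score_{g_y}\big(G_y^{-1}(R_{y,t}/(T{+}1)),G_x^{-1}(R_{x,t}/(T{+}1))\big)$ converges to $\sigma_{\varepsilon g_y}$ from~(\ref{eqn:covariancematrix_g}), not to zero. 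The same correction appears in the limiting It\^o formula $\int_0^1\We\,\rd\Bgy=-\int_0^1\Bgy\,\rd\We-\sigma_{\varepsilon g_y}$, so once carried through consistently everything matches; but as stated, your Kurtz--Protter step would land $\sigma_{\varepsilon g_y}$ away from the target. Your alternative route via direct analysis of the Riemann--Stieltjes sums avoids this issue and is closer in spirit to what the cited lemma actually does.
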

We omit the proof of Proposition~\ref{prop:weakconvergence_lllr_g} since it directly follows from the weak convergences in (\ref{eqn:partialsumconvergence}) and (\ref{eqn:partialsumconvergence_Bg}), the continuous mapping theorem, and the rank-based stochastic integral convergence argument in the proof of Lemma 4.1 of \citet{ZvdAW2016}.

Although not explicit in the above, observe that the statistic $\lllr_g^{(T)}(\lpinte,\lppers)$ in~(\ref{eqn:likelihoodstatistic_g}) still depends on $\sigma_x$ through $\WTe$ defined in~(\ref{eqn:WTedef}). We will simply replace $\sigma_x$ by its sample counterpart below. As long as this estimator is consistent, the continuous mapping theorem shows that this replacement has no asymptotic consequences. The statistic does not depend on $\sigma_y$, but it does depends on the reference correlation $\refcorrp$.

Now, applying the ALFD algorithm to $\lllr_g^{(T)}(\lpinte,\lppers)$, we obtain a distribution $\Lambda^{*\epsilon}_{0,g}$ and critical value $\cv_{g,n}$ such that the test
\begin{align} \label{eqn:test_nonstandard}
\test_{g,n}(\stat_g^{(T)},\refcorrp)=
\left\{
\begin{array}{l}
1 {\rm~~~if~~~} \int \lllr^{(T)}_g(\bar{\lpinte},\lppers)\rd\Lambda_1(\lppers)>\cv_{g,n} \int\lllr^{(T)}_g(0,\lppers)\rd\Lambda_{0,g}^{*\epsilon}(\lppers) \\
0 {\rm~~~if~~~} \int \lllr^{(T)}_g(\bar{\lpinte},\lppers)\rd\Lambda_1(\lppers)<\cv_{g,n} \int\lllr^{(T)}_g(0,\lppers)\rd\Lambda_{0,g}^{*\epsilon}(\lppers) \\
\end{array}
\right.
\end{align}
is of size $\siglevel$. Here $\bar{\lpinte}$ serves as a fixed alternative point for the quasi-likelihood statistic; see \citet{ERS1996}.

In order to get the appropriate critical values of the test, note that we need consistent estimates, under the null, of $\Jg$ and $\J_{fg}$. We need these in order to ensures the feasibility of the numerically determined pair $(\Lambda_0^{*\epsilon},\cv^{*\epsilon})$. In applications $\Jg$ and $\J_{fg}$ can easily be estimated, however, in the Monte Carlo study below we estimate $\Jg$ and $\J_{fg}$ based on the  known.\footnote{A simple consistent estimator for $\J_g$ would be the sample covariance of the rank-based scores $\score_g$ defined in (\ref{eqn:rankbasedscore_g}) and a direct rank-based estimator for $J_{fg}$ can be found in \citet{CassartHallinPain2010}.} This is necessary as we cannot afford to determine a pair $(\Lambda_0^{*\epsilon},\cv^{*\epsilon})$ for each repetition in the simulation. That would be too intensive computationally.

\section{A Monte Carlo Study}\label{sec:MonteCarlo}

\noindent In this section, we explore by Monte Carlo the size and power properties of our test~(\ref{eqn:test_nonstandard}), combined with the switching approach detailed in Appendix~\ref{app:Switching}, (labeled WZ) relative to the Gaussian quasi-likelihood counterpart in \citet{EMW2015} (labeled EMW). From the theoretical results, both tests should enjoy good size properties but the WZ test should exhibit larger power in case the true innovation distribution is not Gaussian. Under Gaussian innovation distribution, both tests should have similar power.

Section~\ref{sec:MCIID} provides simulations under the predictive regression model studied formally in this paper. Section~\ref{sec:MCCH} provides results of our test under conditional heteroskedasticity. Finally, Section~\ref{sec:MCEstimatedG} provides results when the reference density used in the test is estimated.

\subsection{Simulations under maintained i.i.d.\ assumption}\label{sec:MCIID}

\noindent We simulate the model~(\ref{eqn:model_1})--(\ref{eqn:model_2}) with $\pcons=2$, $\sigma_y = 3$, $\sigma_x=3$, and $\rho=-0.5$. All results reported in this section are based on 10,000 replications. 

For the ALFD approach, we choose a discrete weighting distribution $\Lambda_1$ in~(\ref{eqn:weightedaveragepower}) where each of the 57 points
\begin{align*}
\lppers\in\{0,-0.25^2,-0.5^2,\dots,-14^2\}
\end{align*}
of the support have equal weight. The same 57 points are also as the support of $\Lambda_{0}^{*\epsilon}$. For the test statistic in~(\ref{eqn:test_nonstandard}), we choose a fixed alternative $\bar{\lpinte}=B(1.645)$ where the power is about $50\%$. For the reference correlation $\refcorrp$, we use the simple sample correlation of $\hat\epsilon^y_t$ and $\hat\epsilon^x_t$ under the null, where $\hat\epsilon^y_t = y_t - \sum_{t=1}^{T}y_t$ and $\hat\epsilon^x_t$ is the residual of the regression of $x_t$ on $x_{t-1}$. 

We present the power curves in two ways. The first presentation follows \citet{EMW2015}: We let the local nuisance parameter $\lppers$ (which governs the persistence of the predictor) take 21 values $\lppers\in\{0,-10,-20,\dots,-200\}$. And to have roughly similar power for each value of $\lppers$, we transform the parameter $\lpinte$ by
\begin{align} \label{eqn:betaform}
\lpinte = B(\betaform) = \betaform \sqrt{\frac{-2c+6}{1-\rho^2}}, ~~{\rm for}~~ \lppers<0.
\end{align}
Alternatives for $\pinte$ are now characterized by different values of $\betaform$. The null hypothesis $\rH_0$ corresponds to $\delta=0$, and we let the parameter of interest $\lpinte$ take three alternatives: $\delta\in\{1,2,3\}$. Secondly, we present power curves where we fix the nuisance parameter $\lppers = -25$ and plot the rejection rates for $\delta \in [0, 6]$. The significance level $\siglevel$ is chosen to be $5\%$ in all cases.

\begin{figure}[!htb] 
\centering
\includegraphics[width=15cm]{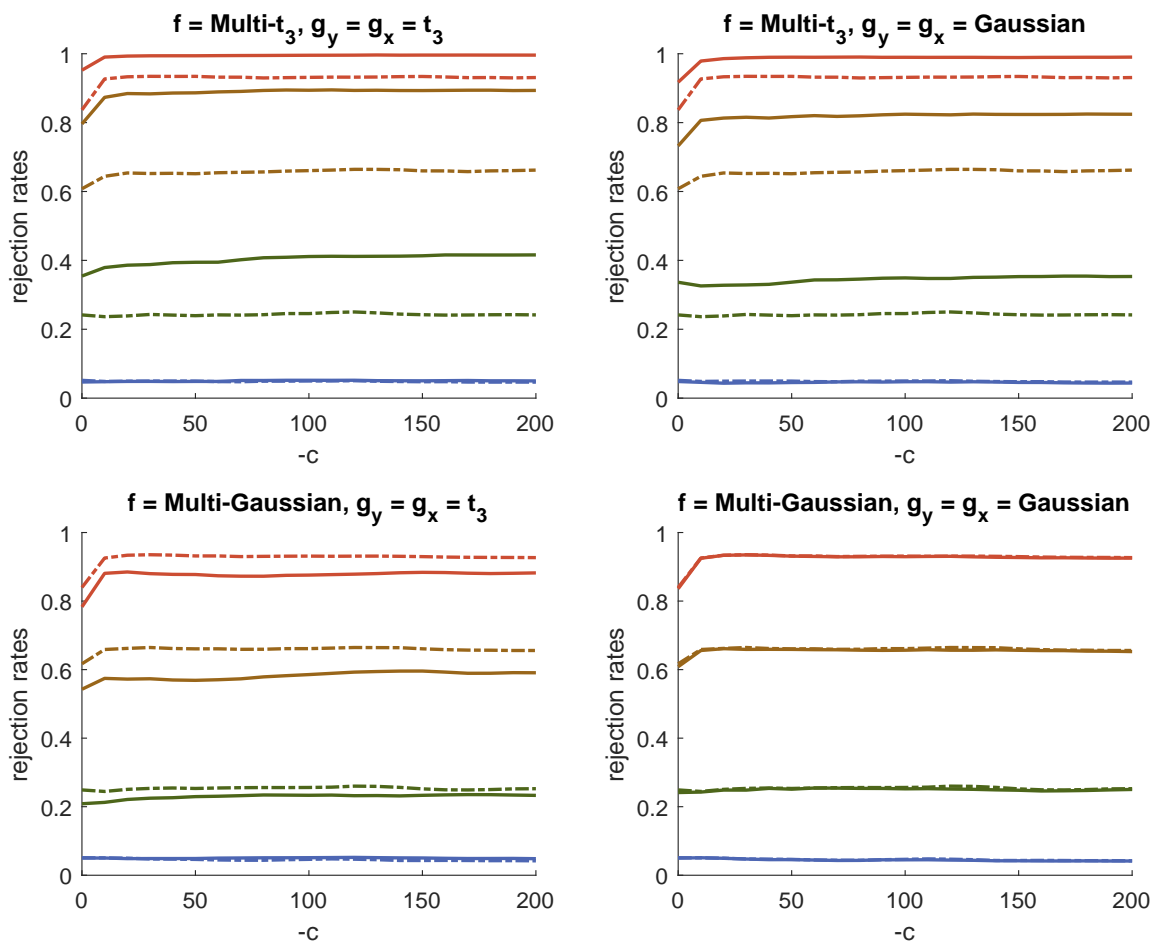}
\caption{Rejection rates of the WZ test (solid lines) and the EMW test (dashed lines) for different values of $\delta$ = 0, 1, 2, and 3, corresponding to lines in blue, green, brown, and red, respectively. For all the four cases, the correlation is $-0.5$. The sample size is 2,000. }
\label{figure:rho05_t2000}
\end{figure}

\begin{figure}[!htb] 
\centering
\includegraphics[width=15cm]{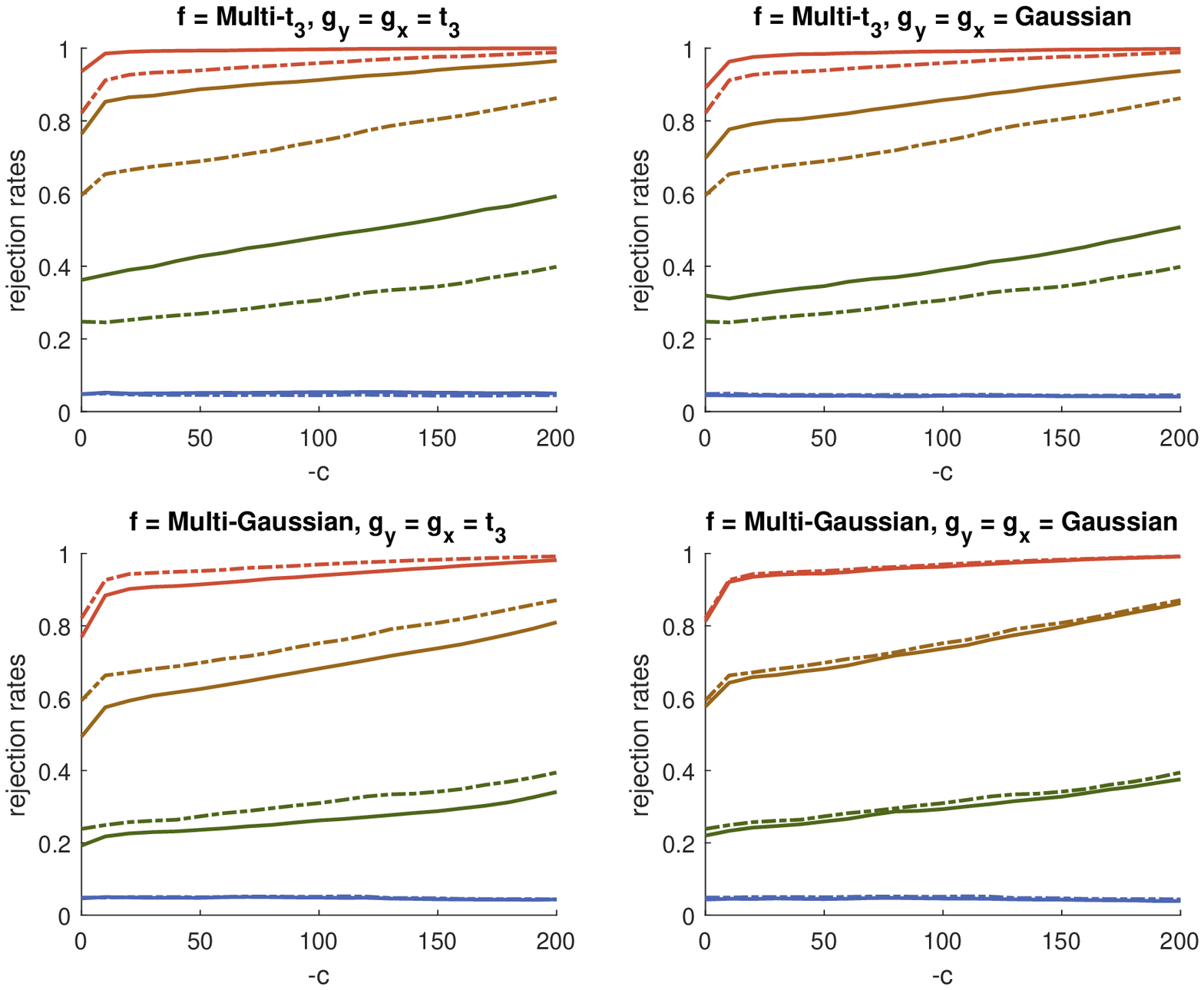}
\caption{Rejection rates of the WZ test (solid lines) and the EMW test (dashed lines) for different values of $\delta$ = 0, 1, 2, and 3, corresponding to lines in blue, green, brown, and red, respectively. For all the four cases, the correlation is $-0.5$. The sample size is 200.}
\label{figure:rho05_t200}
\end{figure}

\begin{figure}[!htb] 
\centering
\includegraphics[width=15cm]{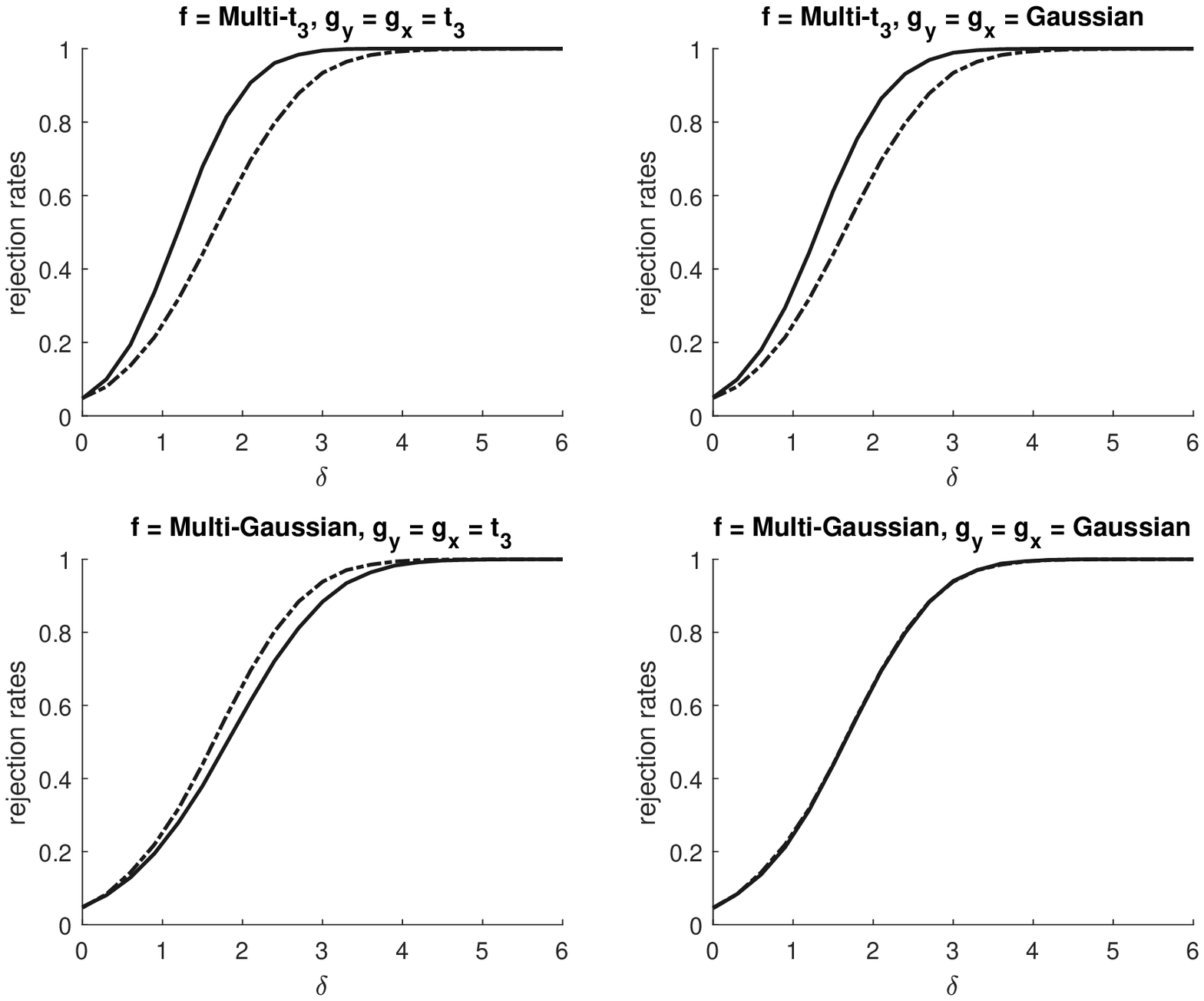}
\caption{Rejection rates of the WZ test (solid lines) and the EMW test (dashed lines) for fixed value of $c = -25$ and different values of $\delta \in [0, 6]$. For all the four cases, the correlation is $-0.5$. The sample size is 2,000.}
\label{hfigure:rho05_t2000}
\end{figure}

\begin{figure}[!htb] 
\centering
\includegraphics[width=15cm]{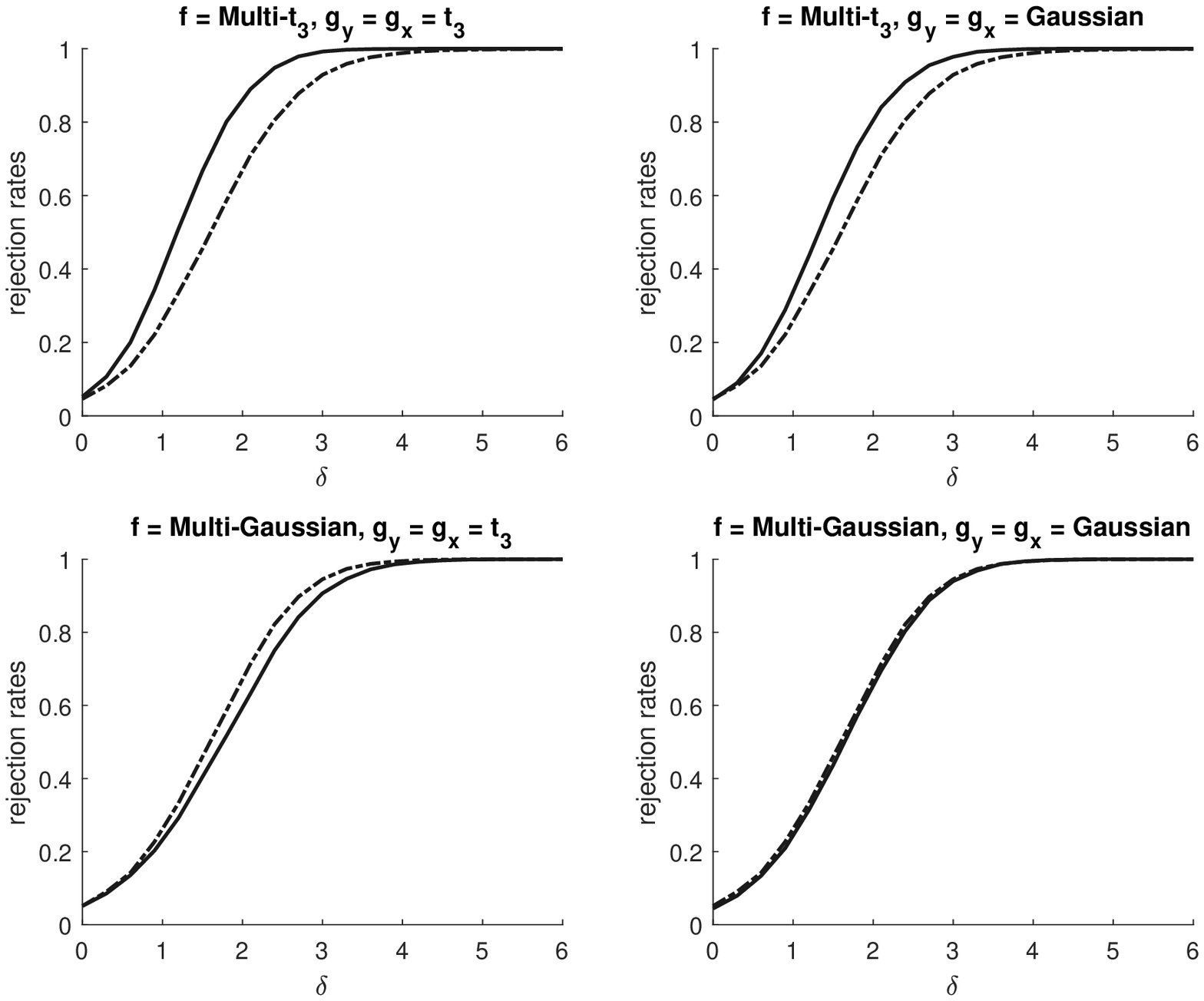}
\caption{Rejection rates of the WZ test (solid lines) and the EMW test (dashed lines) for fixed value of $c = -25$ and different values of $\delta \in [0, 6]$. For all the four cases, the correlation is $-0.5$. The sample size is 2,000.}
\label{hfigure:rho05_t200}
\end{figure}

In Figure~\ref{figure:rho05_t2000}, we reports the large-sample ($T=2,000$) size and power properties of our rank-based WZ test and the EMW test, for different combinations of the true density $f$ and the marginal reference densities $g_y$ and $g_x$. 
%In each sub-figure, the upper-left panel refers to the size of the test for various values of the nuisance persistence parameter $\lppers$. The other three panels refer to the power for the three chosen alternatives for $\lpinte$.
The upper-left subplot reports the case where $f$ is a multivariate $t_3$ density, while $g_y$ and $g_x$ are both univariate $t_3$ densities. Both the EMW test and the WZ test are of correct size for all chosen values of $\lppers$. Under the alternative hypothesis (i.e., for $\delta\in\{1,2,3\}$), the WZ test is more powerful than the EMW test. Taking the alternative $\delta=2$ as example, for most values of $\lppers$, the power of the EMW test is about $65\%$ while the WZ test attains about $90\%$ power. In the upper-right subplot, we keep $f$ unchanged and let $g_y$ and $g_x$ both be Gaussian. Both tests provide correct size and, again, the WZ test is more powerful than the EMW test. However, compared to the upper-left subplot, we observe that the WZ test suffers a small power loss when choosing reference densities that are further away from the true ones. When $f$ is Gaussian, the WZ test with Gaussian marginal reference densities shares almost the same size and power performances as the EMW test, as shown by the bottom-left subplot. The bottom-right subplot presents the case when $f$ is Gaussian, while the marginal reference densities $g_y$ and $g_x$ are univariate $t_3$. In this case, the WZ test is less powerful than the EMW test. In practice, we may want to avoid this power loss by pre-testing the residuals under the null hypothesis. We study this in Section~\ref{sec:MCEstimatedG}.

\begin{figure}[!htb] 
\centering
\includegraphics[width=15cm]{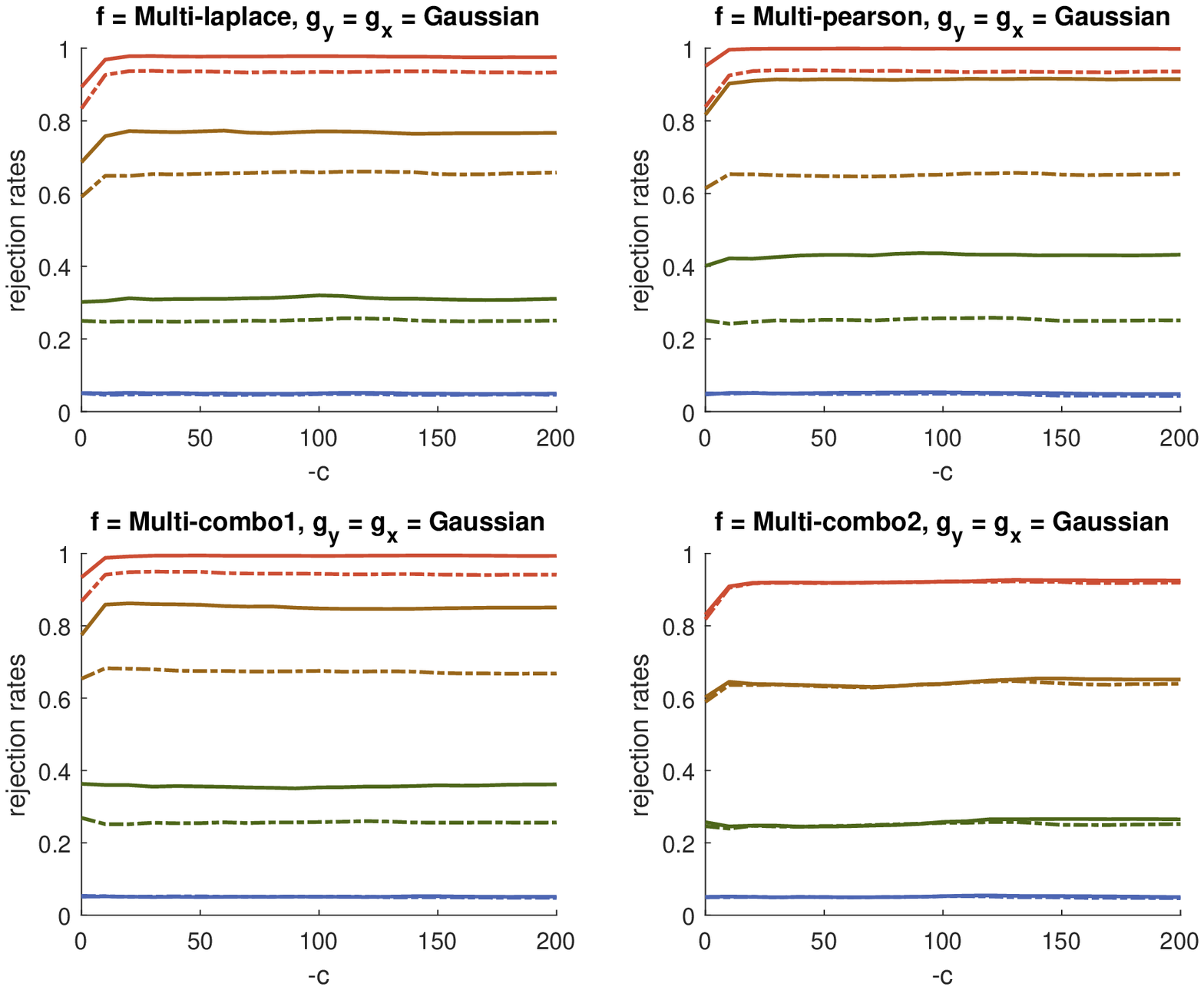}
\caption{Rejection rates of the WZ test (solid lines) and the EMW test (dashed lines) for different values of $\delta$ = 0, 1, 2, and 3, corresponding to lines in blue, green, brown, and red, respectively. For all the four cases, $\rho = -0.5$ and $T = 2,000$.}
\label{figure:others_rho05_t2000}
\end{figure}

\begin{figure}[!htb] 
\centering
\includegraphics[width=15cm]{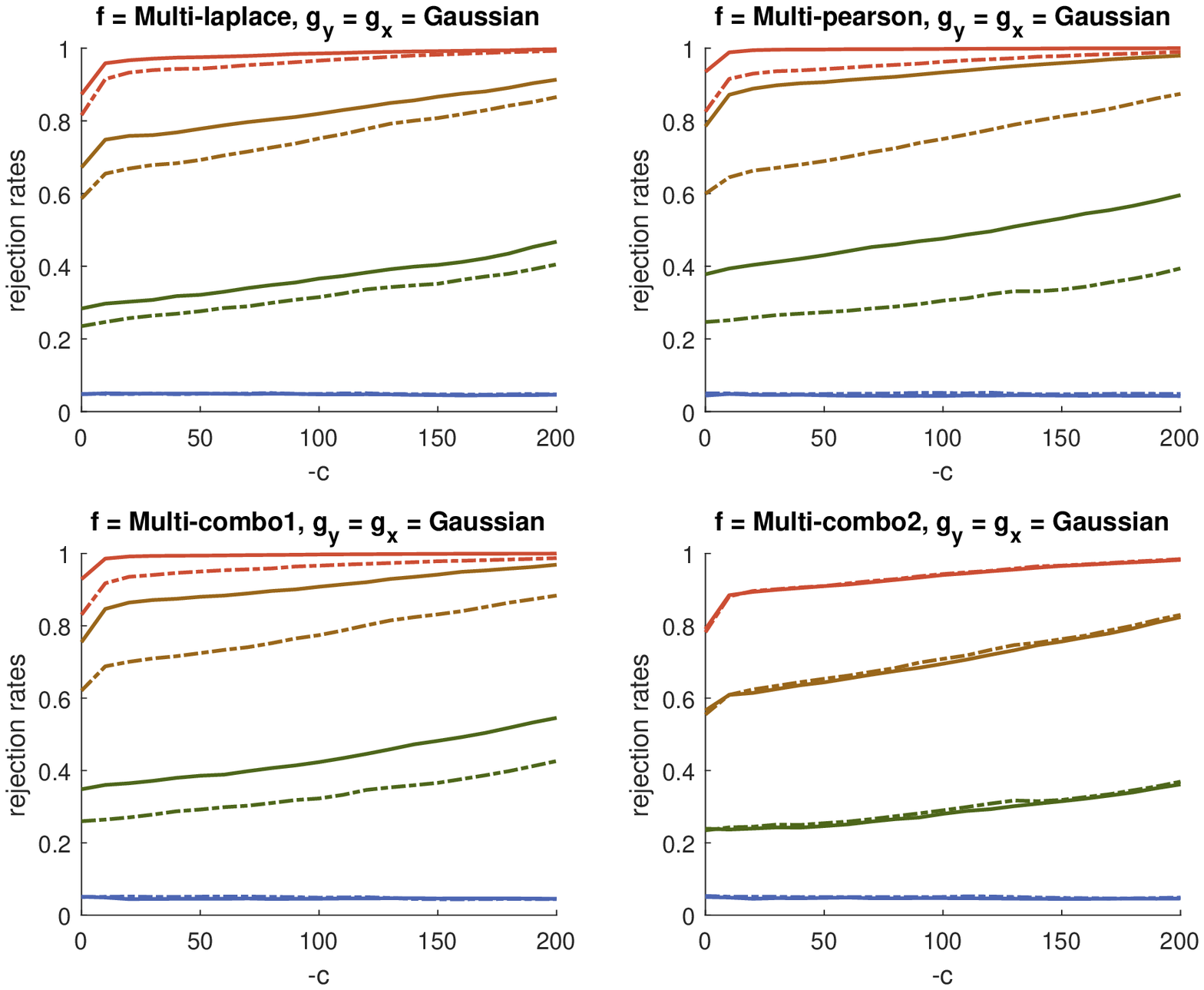}
\caption{Rejection rates of the WZ test (solid lines) and the EMW test (dashed lines) for different values of $\delta$ = 0, 1, 2, and 3, corresponding to lines in blue, green, brown, and red, respectively. For all the four cases, $\rho = -0.5$ and $T = 200$.}
\label{figure:others_rho05_t200}
\end{figure}

Actually, one can always use Gaussian reference densities as a conservative choice, which is based on a (numerical) \citet{ChernoffSavage1958} result --- keeping the marginal reference densities $g_y$ and $g_x$ Gaussian, the WZ test is always more powerful than the EMW test when $f$ is non-Gaussian, and it works as well as the EMW test when $f$ is Gaussian. A formal proof of this result in LABF-type experiments is still an open question, but we show that this property holds in some more simulations. In Figure~\ref{figure:others_rho05_t2000}, we fix $g_y$ and $g_x$ to be Gaussian, and choose four different multivariate innovation distributions: (i) Gaussian copula with Laplace marginal distributions (top-left, labeled Multi-Laplace); (ii) Multivariate Pearson distribution with skewness 3 and kurtosis 36 (top-right, labeled Multi-Pearson); (iii) Gaussian copula with $t_3$ distribution for the first dimension and Gaussian distribution for the second dimension (bottom-left, labeled Multi-combo1); and (iv) $t_3$ copula with Gaussian for the first dimension and $t_3$ for the second dimension (bottom-right, labeled Multi-combo2). These simulations support the Chernoff-Savage result and also show that the further away the true distribution is from Gaussian, the more power can be gained by the WZ test. Moreover, case (iv) in the bottom-right subplot shows that actually the power we gain by the WZ test is from the innovation of the first dimension, $\eyt$. When the distribution of $\eyt$ is Gaussian, we do as well as the EMW test. We conjecture that inference for $\pinte$ in the predictive regression model (\ref{eqn:model_1})-(\ref{eqn:model_2}) is adaptive with respect to the marginal density of $\ext$, when $\ppers$ is eliminated by the ALFD approach in \cite{EMW2015}.

In Figure~\ref{hfigure:rho05_t2000} and Figure~\ref{hfigure:rho05_t200}, we present the powers of the WZ test and the EMW test (for fixed $c = -15$ and for $\delta \in [0, 6]$) under the same settings as in Figure~\ref{figure:rho05_t2000} and Figure~\ref{figure:rho05_t200}, respectively. The results show the power gain of the WZ test over the EMW test uniformly for all alternative values.

We also provide some small-sample ($T=200$) results for both tests in Figure~\ref{figure:rho05_t200} and Figure~\ref{figure:others_rho05_t200} (the small-sample counterparts of Figure~\ref{figure:rho05_t2000} and Figure~\ref{figure:others_rho05_t2000}, respectively). The conclusions are similar: both tests are of good size (all around $4.5\%$) using the same combinations of $\Lambda_{0}^{*\epsilon}$ and $\cv_g$. The WZ test still gains considerable power in the case of non-Gaussian densities, though the gain is slightly smaller than in the large-sample case. This once more shows the additional information present, when supported by the application at hand, of an i.i.d.-ness assumption on the innovations.  Appendix~\ref{app:additional_simulation_results} provides additional simulation results  in Figure~\ref{hfigure:others_rho05_t2000} and Figure~\ref{hfigure:others_rho05_t200} for Figure~\ref{figure:others_rho05_t2000} and Figure~\ref{figure:others_rho05_t200} using $c = -15$ and $\delta \in [0, 6]$, respectively.

Finally, we repeat the simulations of Figure~\ref{figure:rho05_t2000} and Figure~\ref{figure:rho05_t200}, but for $\rho = -0.9$, in Figure~\ref{figure:rho09_t2000} and Figure~\ref{figure:rho09_t200} respectively in Appendix~\ref{app:additional_simulation_results}. These simulations confirm our previous conclusions about the WZ test: correct sizes, power gain under non-Gaussian $f$, the Chernoff-Savage result, and decent small-sample performances.

%\begin{remark}[On relaxing the i.i.d.\ assumption on $\ext$]\label{rem:RelaxIID}
%Throughout this paper we assume the innovations $\et = (\eyt, \ext)'$ to be i.i.d. We could relax this assumption by, for instance, allowing it to be an AR($p$) process: $\Gamma(L)\et = v_t$, where $\Gamma(L)$ denotes the lag operator and the innovations $v_t$ are now assumed to be i.i.d.\ with density $f$. In this case, the AR coefficients in $\Gamma$ are additional nuisance parameters. The resulting (structural) limit experiment will be (modulo some obvious changes) similar to the one in the unit root testing case in \citet{ZvdAW2016}. Thus, we conjecture such relaxation to be asymptotically free of cost in the sense that our present analysis is adaptive with respect to the nuisance parameter $\Gamma$. In order to make this operational, one would use a $\sqrt{T}$-consistent estimate of $\Gamma$ and employ so-called aligned rank statistics, i.e., ranks of estimated residuals. These aligned rank-based tests may even have better finite-sample convergence speed than the Gaussian-based tests (compare, e.g., again the unit root testing problem and the associated simulation results in \citet{ZvdAW2016}). 
%\end{remark}

\subsection{Simulations under conditional heteroskedasticity}\label{sec:MCCH}

\noindent 
In many (financial) applications the maintained assumption of i.i.d.\ innovations will not be satisfied. We therefore study, by simulation, the behavior of the tests when the innovations exhibit conditional heteroskedasticity. The tests are identical to those in the previous sections, thus not adapted to deal with possible heteroskedasticity.

Keeping everything else unchanged, we replace the i.i.d.\ innovations $(\eyt, \ext)^\trans$, by a univariate GARCH(1,1) model (i) for $\eyt$ only; or (ii) for both $\eyt$ and $\ext$ in the data generating process. Formally, we choose
\begin{align*}
\eyt &= \sqrt{1-\rho^2}\e_{1,t} + \rho\e_{2,t},  \\
\ext &=\e_{2,t},
\end{align*}
where, for case~(ii), $\e_{1,t}$ and $\e_{2,t}$ are independently generated by the GARCH(1,1) model
\begin{align*}
\e_{j,t} &= \nu_{j,t} \sqrt{h_{j,t}},  \\
h_{j,t} &= 1 + 0.07\e_{j,t-1} + 0.92h_{j,t-1},
\end{align*}
for $j = 1,2$, where $\nu_{j,t}$'s are i.i.d.\ innovations. For case (i), we let $\e_{2,t}$ be i.i.d.\ and independent of $\e_{1,t}$. The joint density of $\nu_{1,t}$ and $\nu_{2,t}$ is denoted by $f$. The GARCH parameters are chosen based on common empirical findings.

\begin{figure}[!htb] 
\hspace{-30mm}
\includegraphics[width=20cm]{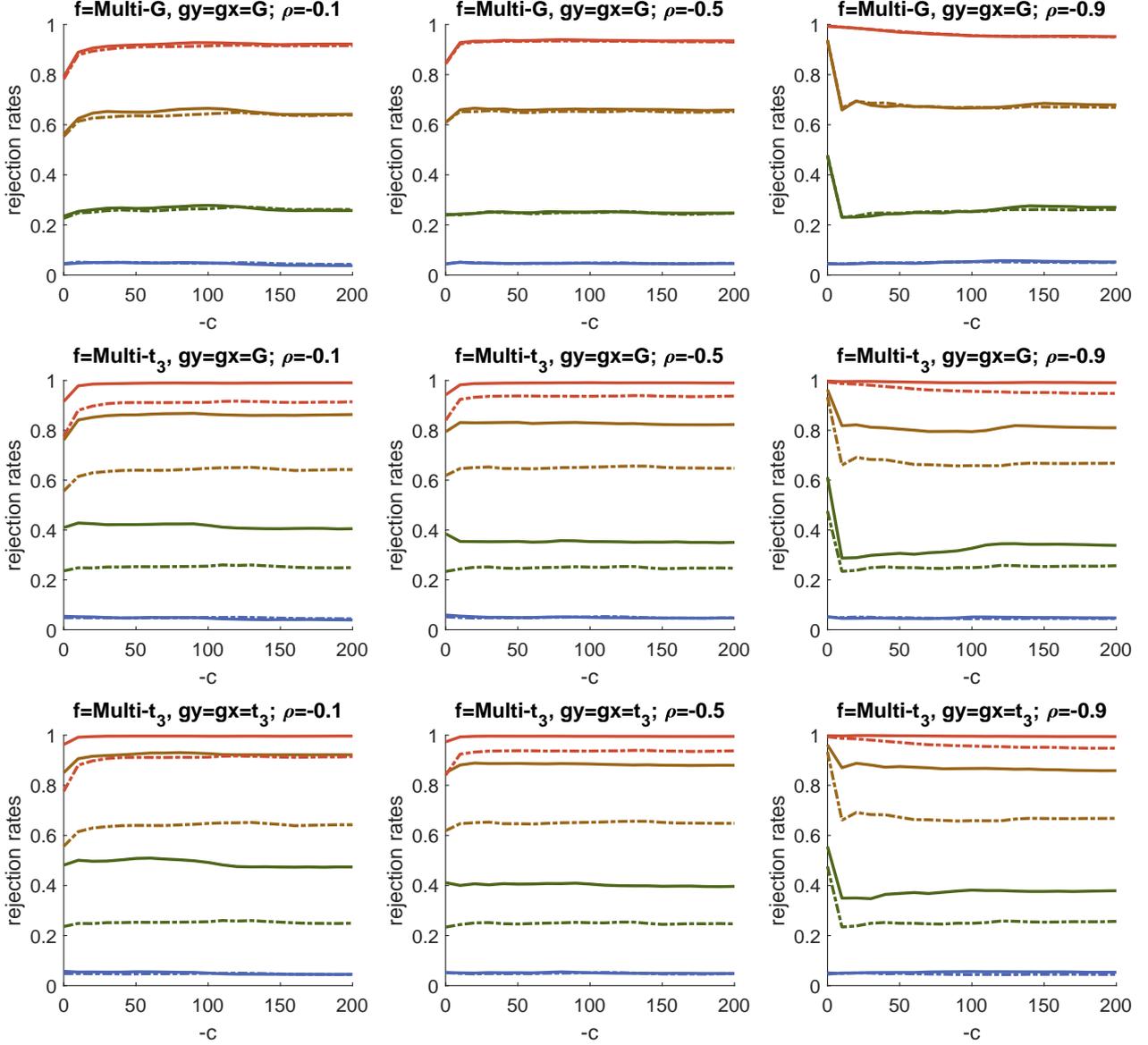}
\caption{Rejection rates of the WZ test (solid lines) and the EMW test (dashed lines) for different values of $\delta$ = 0, 1, 2, and 3, corresponding to lines in blue, green, brown, and red, respectively, under \emph{heteroskedasticity}. For all the four cases, $T = 2,000$.}
\label{figure:garch_t2000}
\end{figure}

\begin{figure}[!htb] 
\hspace{-30mm}
\includegraphics[width=20cm]{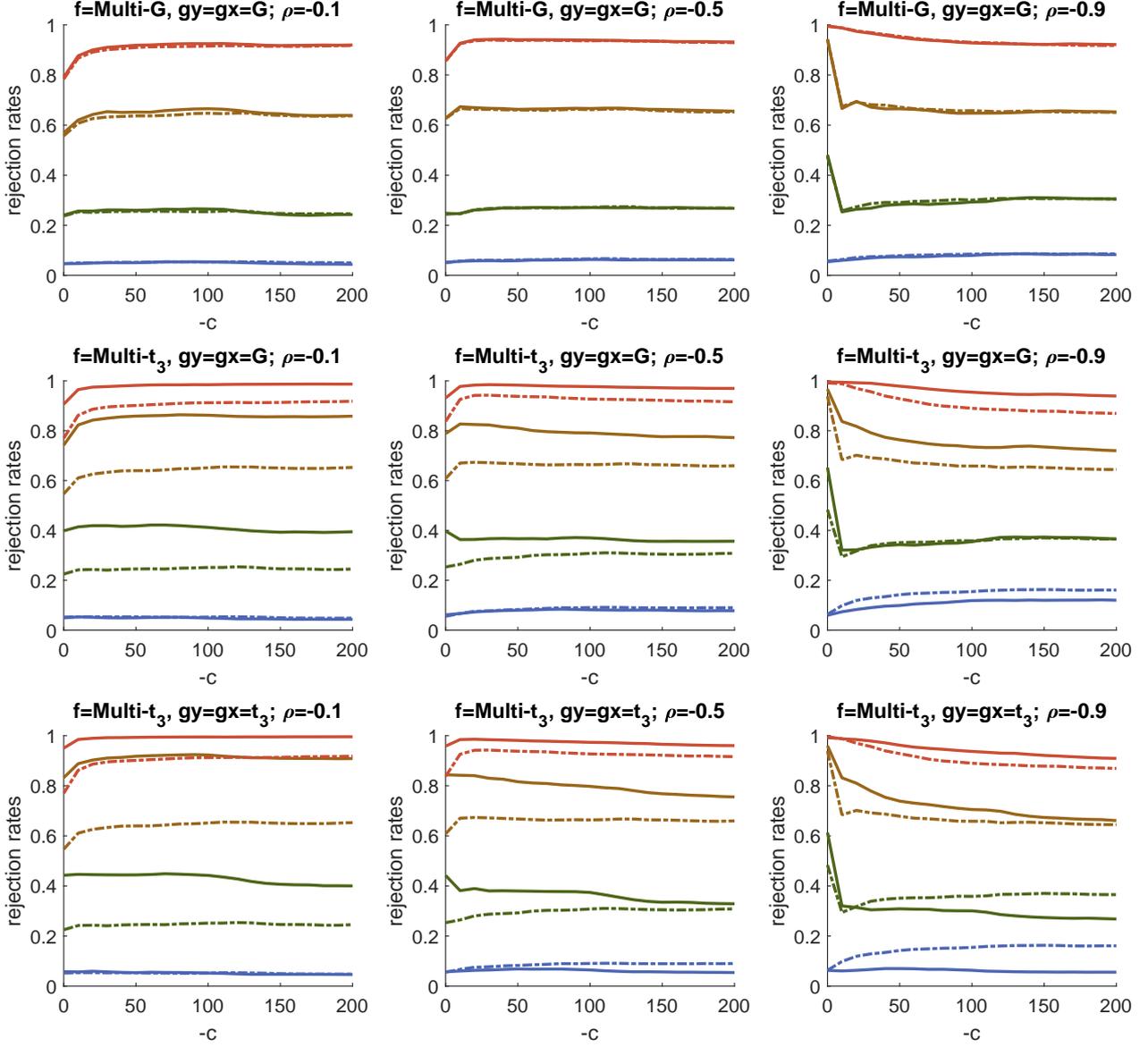}
\caption{Rejection rates of the WZ test (solid lines) and the EMW test (dashed lines) for different values of $\delta$ = 0, 1, 2, and 3, corresponding to lines in blue, green, brown, and red, respectively, under \emph{heteroskedasticity}. For all the four cases, $T = 2,000$.}
\label{figure:garchyx_t2000}
\end{figure}

In Figure~\ref{figure:garch_t2000}, we present case (i) where only the innovations of the response variable, $\ey$, exhibit conditional heteroskedasticity, while the predictor innovations $\ex$ are still i.i.d. We show results for three density combinations as mentioned in the title of each subplot and three different values for the correlation of innovations ($\rho = -0.1$, $-0.5$, and $-0.9$). In all nine cases, we find that both the EMW and WZ tests still have decent sizes, i.e., heteroskedasticity appearing only in $\ey$ will not affect their size performances much. In terms of power, the WZ test outperforms the EMW test under the $t_3$ distribution, and both tests have similar powers under Gaussianity. In addition, when $\ey$ is exhibits more heteroskedasticity (i.e., when $\rho$ is close to 0), the WZ test gains more power as heteroskedasticity pushes the unconditional innovation distribution further away from Gaussianity.

Figure~\ref{figure:garchyx_t2000} presents the results for case (ii) where both innovations, $\ey$ and $\ex$, are heteroskedastic and correlated as modeled above. When $\rho$ is close to zero, the size distortion becomes smaller, while for larger (absolute) values of $\rho$, both tests become more oversized, especially under heavy-tailed innovation distribution. The WZ test suffers less size distortion than the EMW test under $t_3$ distributions and, using $t_3$ reference marginal densities, the size distortion bcomes even smaller (see the bottom panel).

The small-sample counterparts of Figure~\ref{figure:garch_t2000} and Figure~\ref{figure:garchyx_t2000} with $T = 200$ are provided in Appendix~\ref{app:additional_simulation_results}. We draw conclusions similar to the i.i.d.\ case in Section~\ref{sec:MCIID}. Additionally, we find that, when both $\ey$ and $\ex$ are heteroskedastic and their correlation is close to $-1$, both the EMW and WZ tests are less over-sized in the small-sample case. 

These conclusions above also apply to other GARCH settings with different value chosen for parameters. These simulation results are available upon request.

\subsection{Simulations under estimated reference density}\label{sec:MCEstimatedG}

\noindent In this section, we provide simulation results for the WZ test based on nonparametrically estimated reference densities, i.e., $g_y = \hat{f}_y$ and $g_x = \hat{f}_x$, under the i.i.d.\ setting as in Section~\ref{sec:MCIID}.

\begin{figure}[!htb] 
\hspace{-20mm}
\includegraphics[width=18cm]{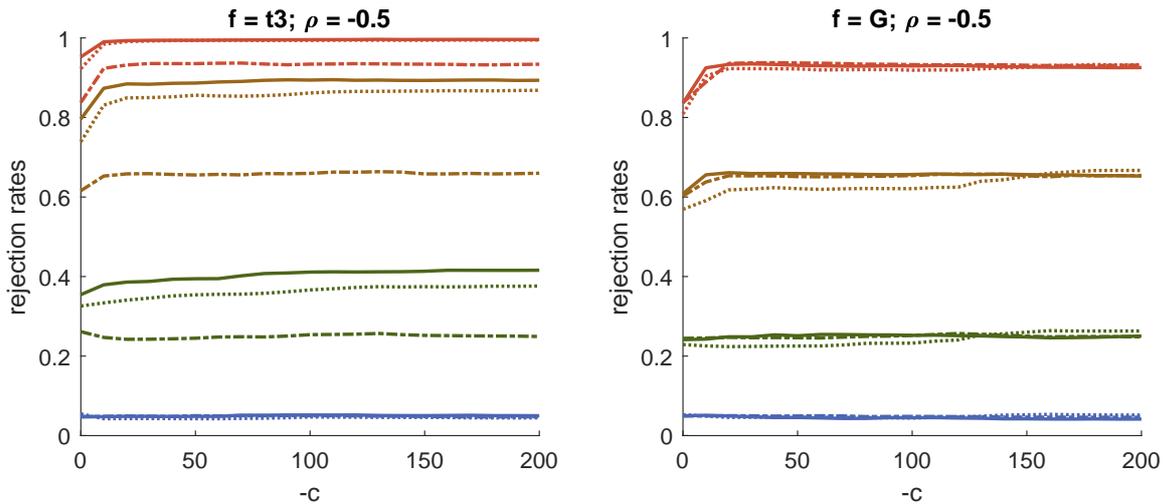}
\caption{Rejection rates of the WZ test (solid lines), the EMW test with Student-$t_3$ marginal reference densities (dashed lines), and the EMW test with nonparametrically estimated density $\hat{f}$ (dotted lines) for different values of $\delta$ = 0, 1, 2, and 3, corresponding to lines in blue, green, brown, and red, respectively. For all the four cases, $T = 2,000$.}
\label{figure:rho05_t2000_np}
\end{figure}

\begin{figure}[!htb] 
\hspace{-20mm}
\includegraphics[width=18cm]{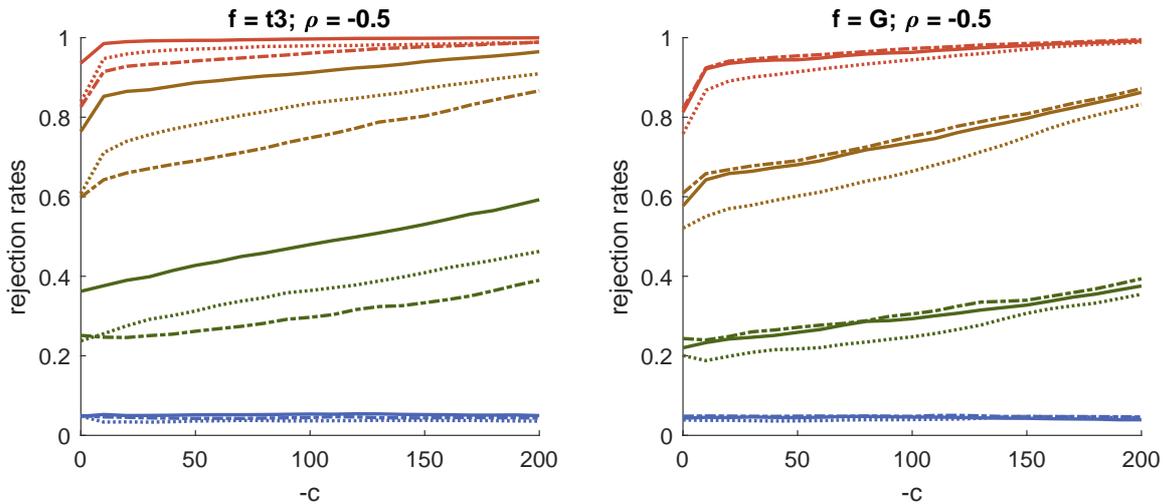}
\caption{Rejection rates of the WZ test (solid lines), the EMW test with Student-$t_3$ marginal reference densities (dashed lines), and the EMW test with nonparametrically estimated density $\hat{f}$ (dotted lines) for different values of $\delta$ = 0, 1, 2, and 3, corresponding to lines in blue, green, brown, and red, respectively. For all the four cases, $T = 200$.}
\label{figure:rho05_t200_np}
\end{figure}

In Figure~\ref{figure:rho05_t2000_np}, we compare the WZ test with $g_y = \hat{f}_y$ and $g_x = \hat{f}_x$ (dotted lines) with the EMW test (dashed lines) and with as the WZ test using correctly specified reference marginal densities (solid lines). When both the true and the reference densities are Gaussian (right plot), we see that all three tests perform similarly with decent size and power properties. When the true innovation distribution is Student-$t_3$, all three tests control the sizes well, while in terms of power, both WZ tests outperform the Gaussian-based EMW test. The WZ test with estimated reference densities suffers a small efficiency loss due to the nonparametric estimation.

Figure~\ref{figure:rho05_t200_np} provides the small-sample results under the same setting but with sample size $T = 200$. In general, the smaller sample leads to lower size and power for the WZ test with estimated reference densities relative to the large-sample case. But again, when $f$ is heavy-tailed, it can be more powerful than the EMW test.

\section{Conclusion}\label{sec:Conclusions}

\noindent In this paper, we show that there is significant statistical information, when supported by the application at hand, in a maintained assumption of serially independent innovations in a predictive regression model. We exploit this information by deriving the (maximal) invariance structures in the associated limit experiment.

Specifically, we first derive the maximal invariant in the (structural) limit experiment where the predictor's persistence parameter is assumed to be known. This leads to the semiparametric power envelope for test that are invariant with respect to the innovation density. The associated likelihood ratio thus gives the semiparametric counterparts of the Gaussian sufficient statistics of~\citet{JanssonMoreira2006}. Under non-Gaussianity, larger powers are possible than under Gaussianity; a well-known result in many classical statistical models. To eliminate the predictor's persistence nuisance parameter, we employ the ALFD approach recently proposed in \citet{EMW2015}.

Our analysis naturally leads to statistics based on the bivariate component-wise ranks of the innovations in the model. Our statistics involve a choice of reference densities that is, subject to some mild regularity conditions, largely arbitrary. Irrespective of the choice of reference densities, our test are of correct asymptotic size. Under non-Gaussianity, even with incorrectly specified reference densities, our test have better power properties than existing tests in the literature that are derived under the assumption of Gaussian innovation densities. These alternative tests do not need serially independent innovations and, as a result, we precisely quantify the power improvements possible when such an assumption is supported by the data. Monte Carlo simulations corroborate our asymptotic results and illustrate that the rank-based tests also work well in smaller samples.

\bibliographystyle{asa}
\bibliography{references}

\begin{thebibliography}{31}
\newcommand{\enquote}[1]{``#1''}
\expandafter\ifx\csname natexlab\endcsname\relax\def\natexlab#1{#1}\fi

\bibitem[{Andrews and Ploberger(1994)}]{AndrewsPloberger1994}
Andrews, D.~W. and Ploberger, W. (1994), \enquote{Optimal tests when a nuisance
  parameter is present only under the alternative,} \textit{Econometrica:
  Journal of the Econometric Society}, 1383--1414.

\bibitem[{Boswijk et~al.(2005)}]{Boswijk2005}
Boswijk, H.~P. et~al. (2005), \enquote{Adaptive testing for a unit root with
  nonstationary volatility,} \textit{UvA-Econometrics Discussion Paper}, 7.

\bibitem[{Campbell and Yogo(2006)}]{CampbellYogo2005}
Campbell, J.~Y. and Yogo, M. (2006), \enquote{Efficient tests of stock return
  predictability,} \textit{Journal of financial economics}, 81, 27--60.

\bibitem[{Cassart et~al.(2010)Cassart, Hallin, Paindaveine,
  et~al.}]{CassartHallinPain2010}
Cassart, D., Hallin, M., Paindaveine, D., et~al. (2010), \enquote{On the
  estimation of cross-information quantities in rank-based inference,} in
  \textit{Nonparametrics and Robustness in Modern Statistical Inference and
  Time Series Analysis: A Festschrift in Honor of Professor Jana
  Jure{\v{c}}kov{\'a}}, Institute of Mathematical Statistics, pp. 35--45.

\bibitem[{Cavanagh et~al.(1995)Cavanagh, Elliott, and
  Stock}]{CavanaghElliotStock1995}
Cavanagh, C.~L., Elliott, G., and Stock, J.~H. (1995), \enquote{Inference in
  models with nearly integrated regressors,} \textit{Econometric theory}, 11,
  1131--1147.

\bibitem[{Chan and Wei(1988)}]{ChanWei1988}
Chan, N.~H. and Wei, C. (1988), \enquote{Limiting distributions of least
  squares estimates of unstable autoregressive processes,} \textit{The Annals
  of Statistics}, 367--401.

\bibitem[{Chernoff and Savage(1958)}]{ChernoffSavage1958}
Chernoff, H. and Savage, I.~R. (1958), \enquote{Asymptotic normality and
  efficiency of certain nonparametric test statistics,} \textit{The Annals of
  Mathematical Statistics}, 972--994.

\bibitem[{Elliott et~al.(2015)Elliott, M{\"u}ller, and Watson}]{EMW2015}
Elliott, G., M{\"u}ller, U.~K., and Watson, M.~W. (2015), \enquote{Nearly
  optimal tests when a nuisance parameter is present under the null
  hypothesis,} \textit{Econometrica}, 83, 771--811.

\bibitem[{Elliott et~al.(1992)Elliott, Rothenberg, and Stock}]{ERS1996}
Elliott, G., Rothenberg, T.~J., and Stock, J.~H. (1992), \enquote{Efficient
  tests for an autoregressive unit root,} .

\bibitem[{Elliott and Stock(1994)}]{ElliottStock1994}
Elliott, G. and Stock, J.~H. (1994), \enquote{Inference in time series
  regression when the order of integration of a regressor is unknown,}
  \textit{Econometric theory}, 10, 672--700.

\bibitem[{Hallin et~al.(2011)Hallin, Van~den Akker, and Werker}]{HvdAW2011}
Hallin, M., Van~den Akker, R., and Werker, B.~J. (2011), \enquote{A class of
  simple distribution-free rank-based unit root tests,} \textit{Journal of
  econometrics}, 163, 200--214.

\bibitem[{Hallin et~al.(2015)Hallin, Van Den~Akker, and Werker}]{HvdAW2015}
Hallin, M., Van Den~Akker, R., and Werker, B.~J. (2015), \enquote{On quadratic
  expansions of log-likelihoods and a general asymptotic linearity result,} in
  \textit{Mathematical Statistics and Limit Theorems}, Springer, pp. 147--165.

\bibitem[{Hansen(1992)}]{Hansen1992}
Hansen, B.~E. (1992), \enquote{Convergence to stochastic integrals for
  dependent heterogeneous processes,} \textit{Econometric Theory}, 8, 489--500.

\bibitem[{Jacod and Shiryaev(2002)}]{JacodShiryaev2002}
Jacod, J. and Shiryaev, A. (2002), \textit{Limit theorems for stochastic
  processes}, vol. 288, Berlin: Springer.

\bibitem[{Jansson(2008)}]{Jansson2008}
Jansson, M. (2008), \enquote{Semiparametric power envelopes for tests of the
  unit root hypothesis,} \textit{Econometrica}, 76, 1103--1142.

\bibitem[{Jansson and Moreira(2006)}]{JanssonMoreira2006}
Jansson, M. and Moreira, M.~J. (2006), \enquote{Optimal inference in regression
  models with nearly integrated regressors,} \textit{Econometrica}, 74,
  681--714.

\bibitem[{Jeganathan(1995)}]{Jeganathan1995}
Jeganathan, P. (1995), \enquote{Some aspects of asymptotic theory with
  applications to time series models,} \textit{Econometric Theory}, 11,
  818--887.

\bibitem[{Kagan and Landsman(1999)}]{KaganLandsman1999}
Kagan, A. and Landsman, Z. (1999), \enquote{Relation between the covariance and
  Fisher information matrices,} \textit{Statistics \& Probability Letters}, 42,
  7--13.

\bibitem[{Le~Cam(1986)}]{LeCam1986}
Le~Cam, L.~M. (1986), \enquote{Asymptotic methods in statistical theory,} .

\bibitem[{Lehmann and Romano(2006)}]{LehmannRomano2005}
Lehmann, E.~L. and Romano, J.~P. (2006), \textit{Testing statistical
  hypotheses}, Springer Science \& Business Media.

\bibitem[{Ling et~al.(2003)Ling, McAleer, et~al.}]{LingMcaleer2003}
Ling, S., McAleer, M., et~al. (2003), \enquote{On adaptive estimation in
  nonstationary ARMA models with GARCH errors,} \textit{The Annals of
  Statistics}, 31, 642--674.

\bibitem[{Mayer-Wolf et~al.(1990)}]{MayerWolf1990}
Mayer-Wolf, E. et~al. (1990), \enquote{The Cram{\'e}r-Rao functional and
  limiting laws,} \textit{The Annals of Probability}, 18, 840--850.

\bibitem[{Moreira and Mour{\~a}o(2016)}]{MoreiraMourao2017}
Moreira, M.~J. and Mour{\~a}o, R. (2016), \enquote{A critical value function
  approach, with an application to persistent time-series,} \textit{arXiv
  preprint arXiv:1606.03496}.

\bibitem[{M{\"u}ller(2011)}]{Muller2011}
M{\"u}ller, U.~K. (2011), \enquote{Efficient tests under a weak convergence
  assumption,} \textit{Econometrica}, 79, 395--435.

\bibitem[{M{\"u}ller and Elliott(2003)}]{MullerElliott2003}
M{\"u}ller, U.~K. and Elliott, G. (2003), \enquote{Tests for unit roots and the
  initial condition,} \textit{Econometrica}, 71, 1269--1286.

\bibitem[{Phillips(2014)}]{Phillips2014}
Phillips, P.~C. (2014), \enquote{On confidence intervals for autoregressive
  roots and predictive regression,} \textit{Econometrica}, 82, 1177--1195.

\bibitem[{Rudin(1987)}]{Rudin1987}
Rudin, W. (1987), \enquote{Real and complex analysis,} .

\bibitem[{Van~der Vaart(2000)}]{vdVaart2000}
Van~der Vaart, A.~W. (2000), \textit{Asymptotic statistics}, vol.~3, Cambridge
  university press.

\bibitem[{Yang and Le~Cam(2000)}]{LeCamYang2000}
Yang, G.~L. and Le~Cam, L. (2000), \enquote{Asymptotics in statistics: some
  basic concepts,} \textit{Berlin, German: Springer}.

\bibitem[{Zhou(2020)}]{Zhou2020}
Zhou, B. (2020), \enquote{A General Semiparametric Approach for LAN, LAMN, and
  LABF Experiments,} \textit{Working Paper}.

\bibitem[{Zhou et~al.(2019)Zhou, van~den Akker, and Werker}]{ZvdAW2016}
Zhou, B., van~den Akker, R., and Werker, B.~J. (2019),
  \enquote{Semiparametrically optimal hybrid rank tests for unit roots,}
  \textit{The Annals of Statistics}, 47, 2601--2638.

\end{thebibliography}

\appendix
\section{Auxiliaries} \label{sec:appendix_A}

%\noindent Introduce the filtrations $\mathbb{F}^{(T)}:=\big(\mathcal{F}^{(T)}_{u},u\in[0,1]\big)$, $T\in\SN$, defined by $\mathcal{F}_{u}^{(T)}:=\sigma\big(y_t,x_t;t\in\SN,t\leq[uT]\big)$, $u\in[0,1]$. 
%The angle-bracket process $\big\langle A^{(T)}_i, A^{(T)}_j\big\rangle(u)$ and the straight-bracket process $\big[ A^{(T)}_i, A^{(T)}_j\big](u)$ are now well-defined for all $\mathbb{F}^{(T)}$-adapted locally square-integrable martingales and semimartingales $A_i^{(T)}$, respectively (see, e.g., \citet{JacodShiryaev2002}). If $A^{(T)}_i$, $i=1,2$, are square-integrable martingales of the form $A^{(T)}_i(u)=\sum_{t=1}^{[u T]} I^{(i)}_{Tt}$ with $I^{(i)}_{Tt}$ $\mathcal{F}_{t/T}^{(T)}$-measurable, we have 
%\begin{align*}
%\big[A^{(T)}_1,A^{(T)}_2\big](u)=\sum_{t=1}^{[uT]} I_{Tt}^{(1)}I_{Tt}^{(2)\prime} ~~{\rm and}~~ \big\langle A^{(T)}_1,A^{(T)}_2 \big\rangle(u)=\sum_{t=1}^{[uT]}  \mathrm{E}\big[ I_{Tt}^{(1)}I_{Tt}^{(2)\prime} | \mathcal{F}_{(t-1)/T}^{(T)}\big].
%\end{align*}
%Recall that, for a square-integrable martingale with continuous sample paths, the angle-brackets and straight-brackets coincide.\\

\noindent The lemma below shows that the partial sum processes introduced in Section~\ref{subsec:PartialSumProcesses} weakly converge to the associated Brownian motions. Due to the i.i.d.-ness of the innovations, the lemma follows, e.g., from the functional central limit theorem~VIII. 3.33 in \citet{JacodShiryaev2002}. 

\begin{lemma} \label{lem:partialSum}
Let $f\in\F$ and let, with $m\geq 4$, $k_1,\dots,k_{m-3}\in\SN$. Define, with the notation of Section~\ref{subsec:PartialSumProcesses},
\begin{align*}
\mathcal{W}^{(T)} = \big(W^{(T)}_\varepsilon, W_{\score_{f_y}}^{(T)},W_{\score_{f_x}}^{(T)},W_{\fpert_1}^{(T)},\dots,W^{(T)}_{\fpert_{m-3}}\big)^\trans
\end{align*}
and
\begin{align*}
\mathcal{W} = \big(W_\varepsilon, W_{\score_{f_y}}, W_{\score_{f_x}}, W_{\fpert_1},\dots,W_{\fpert_{m-3}}\big)^\trans.
\end{align*}
Then, in $D_{\SR^m}[0,1]$ under $\law_{0,0,0,0;f}$, we have
\begin{align*}
\mathcal{W}^{(T)}
 &\wto
\mathcal{W},\\
\big\langle\mathcal{W}^{(T)}, \mathcal{W}^{(T)}\big\rangle(1)
 &=
\big[\mathcal{W}^{(T)},\mathcal{W}^{(T)} \big](1)+\opone
 =
\var\big(\mathcal{W}(1)\big) + \opone.
\end{align*}
\end{lemma}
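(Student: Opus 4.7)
The plan is to reduce the statement to a standard multivariate invariance principle for sums of i.i.d.\ square-integrable random vectors, and then to handle the quadratic variation claim via the law of large numbers. First, I would observe that under $\law_{0,0,0,0;f}$ (so that $\pinte = 0$, $\ppers = 1$, $\eta = 0$, and with $\pcons = \alpha = 0$ as normalized in Section~\ref{sec:Model}), the recursions~(\ref{eqn:model_1})--(\ref{eqn:model_2}) collapse to $y_t = \eyt$ and $\Delta x_t = \ext$. Hence, for every fixed $m$ and every fixed choice of indices $k_1, \dots, k_{m-3}$, the process $\mathcal{W}^{(T)}$ is the normalized partial-sum process generated by the i.i.d.\ centered random vector
\[
\xi_t := \big(\ext/\sigma_x,\; \sigma_y \score_{f_y}(\eyt,\ext),\; \sigma_x \score_{f_x}(\eyt,\ext),\; \fpert_{k_1}(\eyt,\ext), \dots, \fpert_{k_{m-3}}(\eyt,\ext)\big)^\trans.
\]
Assumption~\ref{ass:density_f} guarantees that $\rEf\xi_t = 0$ (for $\score_{f_y}$, $\score_{f_x}$ by the standard score identity, which also uses absolute continuity; for the $\fpert_k$ by definition of $\rL_2^{0,f}$) and $\rEf[\xi_t\xi_t^\trans]$ is finite and exactly equal to the covariance matrix in~(\ref{eqn:covariancematrix}), restricted to the chosen indices.

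Second, I would apply the multivariate functional central limit theorem for partial sums of i.i.d.\ random vectors with finite second moments (equivalently, the Donsker--Prokhorov invariance principle, or the special case of \citet[Theorem VIII.3.33]{JacodShiryaev2002} applied to the martingale $M_t^{(T)} = T^{-1/2}\sum_{s \leq t}\xi_s$ adapted to its natural filtration). The Lindeberg condition is automatic because the summands are i.i.d.\ with finite variance: for any $\epsilon > 0$, $\sum_{t=1}^{T} \rEf\big[|T^{-1/2}\xi_t|^2 \mathbbm{1}\{|T^{-1/2}\xi_t| > \epsilon\}\big] = \rEf\big[|\xi_1|^2 \mathbbm{1}\{|\xi_1| > \epsilon\sqrt{T}\}\big] \to 0$ by dominated convergence. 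This yields $\mathcal{W}^{(T)} \wto \mathcal{W}$ in $D_{\SR^m}[0,1]$ with $\mathcal{W}$ a centered Brownian motion whose covariance at time 1 equals $\rEf[\xi_1\xi_1^\trans]$, i.e.\ the restriction of~(\ref{eqn:covariancematrix}).

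Third, for the quadratic variation claim, a direct computation gives $\big[\mathcal{W}^{(T)},\mathcal{W}^{(T)}\big](1) = T^{-1}\sum_{t=1}^{T} \xi_t\xi_t^\trans$, while the predictable quadratic variation (with respect to the natural filtration of $\xi_t$) is $\big\langle\mathcal{W}^{(T)},\mathcal{W}^{(T)}\big\rangle(1) = T^{-1}\sum_{t=1}^{T} \rEf[\xi_t\xi_t^\trans] = \rEf[\xi_1\xi_1^\trans] = \var(\mathcal{W}(1))$ identically. By Kolmogorov's strong law applied component-wise to the integrable variables $\xi_t\xi_t^\trans$, the optional quadratic variation also converges in probability (in fact almost surely) to $\var(\mathcal{W}(1))$, so the difference between the two is $\opone$.

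The step requiring the most care is the verification that the joint covariance entries involving $\score_{f_y}$, $\score_{f_x}$ and the $\fpert_k$ are correctly identified and finite; this is where the absolute continuity of $f$ and the finiteness of $\Jf$ (Assumption~\ref{ass:density_f}(b)-(c)) are essential, and where the identity $\rEf[\ext\score_{f_x}(\et)] = \sigma_x$, derived just after~(\ref{eqn:WTkdef}), is used. Beyond this, the argument is a routine application of the multivariate invariance principle, which is why the paper opts to cite \citet{JacodShiryaev2002} rather than reproduce the proof.
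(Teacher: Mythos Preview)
Your proposal is correct and follows the same approach as the paper: the paper's argument consists of a single sentence pointing to the i.i.d.\ structure of the innovations under $\law_{0,0,0,0;f}$ and invoking the functional CLT of \citet[Theorem~VIII.3.33]{JacodShiryaev2002}, and your write-up is precisely a detailed unpacking of that citation together with the LLN argument for the quadratic variation claim.
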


\section{Proofs}\label{app:Proofs}

\begin{proof}[Proof of Proposition~\ref{proposition_LAQ}]~\\
\textit{Proof of Part (i):} \\
Suppose $y_t$ and $x_{t-1}$, for $t=1,2,\dots,T$, are generated from~(\ref{eqn:model_1})--(\ref{eqn:model_2}). Then, using the local parameter perturbations~(\ref{eqn:localization_1}), the log-likelihood ratio equals
%\begin{align*}
%\sum_{t=1}^{T} \log\left\{\frac{f(y_t-\pcons-\pinte x_{t-1},x_t-\ppers x_{t-1})}{f(\yt,\Delta x_t)} \left[1+\frac{1}{\sqrt{T}}\sum_{k=1}^{\infty}\eta_k\fpert_{k}\left(y_t-\pcons-\pinte x_{t-1},x_t-\ppers x_{t-1}\right) \right]\right\}.
%\end{align*} 
\begin{align} \label{appeqn:LLR}
\log\frac{\rd\law_{\lpinte,\lppers,\eta;f}}{\rd\law_{0,0,0;f}}
 =
\llr^{(T)}_{I}(\lpinte,\lppers) + \llr^{(T)}_{II}(\lpinte,\lppers,\eta),
\end{align}
where
\begin{align*}
\llr^{(T)}_{I}(\lpinte,\lppers)
 &:=
\sum_{t=1}^{T} \log \frac{f\left(\yt-\frac{\lpinte}{T}\frac{\sigma_y}{\sigma_x}x_{t-1},\Delta x_t-\frac{\lppers}{T}x_{t-1}\right)}{f\left(\yt,\Delta x_t\right)}, \\
\llr^{(T)}_{II}(\lpinte,\lppers,\eta)
 &:=
\sum_{t=1}^{T}\log\left(1+\frac{1}{\sqrt{T}}\sum_{k=1}^{\infty}\eta_k\fpert_{k}\left(\yt-\frac{\lpinte}{T}\frac{\sigma_y}{\sigma_x}x_{t-1},\Delta x_t-\frac{\lppers}{T}x_{t-1}\right)\right).
\end{align*}

We first use Proposition~1 in \citet{HvdAW2015} to prove
\begin{align} \label{appeqn:LLR_I}
\llr^{(T)}_{I}(\lpinte,\lppers)
 = &~
\frac{\lpinte}{T}\sum_{t=1}^T \frac{x_{t-1}}{\sigma_x} \sigma_y\score_{f_y}(\yt, \Delta x_t) + \frac{\lppers}{T}\sum_{t=1}^T x_{t-1} \score_{f_x}(\yt, \Delta x_t)\\
\nonumber
 &~ -\frac{1}{2}\left(\left(\lpinte^2\Jfyfy+\lppers^2\Jfxfx+2\lpinte\lppers\Jfyfx\right)\frac{1}{T^2}\sum_{t=1}^T \frac{x_{t-1}^2}{\sigma_x^2} \right) + \opone.
\end{align}
Assumption~\ref{ass:density_f}\textit{(a)} implies that the density $f$ is differentiable in quadratic mean, i.e., 
\begin{align} \label{eqn:DQM}
\frac{\sqrt{f}(e-w)}{\sqrt{f}(e)}=1+\frac{1}{2}\left[w^\trans\score_{f}(e)+r(e,w)\right],~~e,w\in\SR^2,
\end{align}
where
\begin{align} \label{appeqn:DQMrem1}
\rEf r^2(\et,w)=o(w^2).
\end{align}
In the notation of \citet{HvdAW2015}, we have 
\begin{align*}
LR_{Tt}
 &=
\frac{f\left(\yt-\frac{\lpinte}{T}\frac{\sigma_y}{\sigma_x}x_{t-1},\Delta x_t-\frac{\lppers}{T}x_{t-1}\right)}{f\left(\yt,\Delta x_t\right)},\\
S_{Tt}
 &=
\left(
\frac{1}{T}\frac{x_{t-1}}{\sigma_x}\sigma_y\score_{f_y}(\yt,\Delta x_t),
\frac{1}{T}x_{t-1} \score_{f_x}(\yt, \Delta x_t)\right)^\trans,\\
R_{Tt}
 &=
r(\et,w_{Tt}),
\end{align*}
where $r$ is implicitly defined in~(\ref{eqn:DQM}), $w_{Tt}=\left(-\frac{\lpinte}{T}\frac{\sigma_y}{\sigma_x}x_{t-1},-\frac{\lppers}{T}x_{t-1}\right)^\trans$, and $h_T=\left(\lpinte, \lppers\right)^\trans$. Thus, (\ref{eqn:DQM}) implies
\begin{align*}
LR_{Tt}=\left(1+\frac{1}{2}\left(h_T^\trans S_{Tt}+R_{Tt}\right)\right)^2.
\end{align*} 
To complete the proof of Part~(i), we show that condition $(a)$, $(b)$, $(c)$, and $(d)$ in Proposition~1 of \citet{HvdAW2015} are satisfied.

\textit{Condition (a)}. This is immediate since $h_T=(\lpinte,\lppers)^\trans$ is a constant vector. 

\textit{Condition (b)}. Display (2), $\rE\left[S_{Tt}\big|\mathcal{F}_{T,t-1}\right]=0$ with $\mathcal{F}_{T,s-1}=\sigma\left(\eyt,\ext:~t<s\right)$, follows immediately from the independence of $\et$ and $\mathcal{F}_{T,t-1}$, $\rEf\left[\score_{f_y}(\et)\right]=0$, and $\rEf\left[\score_{f_x}(\et)\right]=0$. The second equation in Display (3) is met as 
\begin{align*}
J_T:=&~\sum_{t=1}^{T}\rE\left[S_{Tt}S_{Tt}^\trans|\mathcal{F}_{T,t-1}\right]\\
%&=~\sum_{t=1}^{T}\begin{pmatrix}
%\frac{1}{T}\sigma_y^2\score_{f_y}^2(\yt,\Delta x_t)&\frac{1}{T^{3/2}}\frac{x_{t-1}}{\sigma_x}\sigma_y^2\score_{f_y}^2(\yt,\Delta x_t)&\frac{1}{T}\frac{x_{t-1}}{\sigma_x}\sigma_x\score_{f_x}(\yt,\Delta x_t)\sigma_y\score_{f_y}(\yt,\Delta x_t)\\
%\frac{1}{T^{3/2}}\frac{x_{t-1}}{\sigma_x}\sigma_y^2\score_{f_y}^2(\yt,\Delta x_t)&\frac{1}{T^2}\frac{x_{t-1}^2}{\sigma_x^2}\sigma_y^2\score_{f_y}^2(\yt,\Delta x_t)&\frac{1}{T^{3/2}}\frac{x_{t-1}^2}{\sigma_x^2}\sigma_x\score_{f_x}(\yt,\Delta x_t)\sigma_y\score_{f_y}(\yt,\Delta x_t)\\
%\frac{1}{T}\frac{x_{t-1}}{\sigma_x}\sigma_x\score_{f_x}(\yt,\Delta x_t)\sigma_y\score_{f_y}(\yt,\Delta x_t)&\frac{1}{T^{3/2}}\frac{x_{t-1}^2}{\sigma_x^2}\sigma_x\score_{f_x}(\yt,\Delta x_t)\sigma_y\score_{f_y}(\yt,\Delta x_t)&\frac{1}{T}x_{t-1}^2\score_{f_x}^2(\yt,\Delta x_t)
%\end{pmatrix}
=&~\sum_{t=1}^{T}\begin{pmatrix}
%\frac{1}{T}\Jfyfy&\frac{1}{T^{3/2}}\frac{x_{t-1}}{\sigma_x}\Jfyfy&\frac{1}{T^{3/2}}\frac{x_{t-1}}{\sigma_x}\Jfyfx\\
%\frac{1}{T^{3/2}}\frac{x_{t-1}}{\sigma_x}\Jfyfy&
\frac{1}{T^2}\frac{x_{t-1}^2}{\sigma_x^2}\Jfyfy&\frac{1}{T^{2}}\frac{x_{t-1}^2}{\sigma_x^2}\Jfyfx\\
%\frac{1}{T^{3/2}}\frac{x_{t-1}}{\sigma_x}\Jfyfx&
\frac{1}{T^{2}}\frac{x_{t-1}^2}{\sigma_x^2}\Jfyfx&\frac{1}{T^2}\frac{x_{t-1}^2}{\sigma_x^2}\Jfxfx
\end{pmatrix}\\
	\wto
J:=&~\begin{pmatrix}
%\Jfyfy&\Jfyfy\int_0^1\We(s)\rd s&\Jfyfx\int_0^1\We(s)\rd s\\
%\Jfyfy\int_0^1\We(s)\rd s&
\Jfyfy\int_0^1\We^2(s)\rd s&\Jfyfx\int_0^1\We^2(s)\rd s\\
%\Jfyfx\int_0^1\We(s)\rd s&
\Jfyfx\int_0^1\We^2(s)\rd s&\Jfxfx\int_0^1\We^2(s)\rd s
\end{pmatrix}
,%=O_{P}(1),
\end{align*}
where the weak convergence follows from a combination of Lemma \ref{lem:partialSum}, Theorem 2.1 in \citet{Hansen1992}, and the continuous mapping theorem. Next we verify the conditional Lindeberg condition (the first equation in Display (3)), which is, for all $\delta>0$,
\begin{align*}
\sum_{t=1}^{T}\rE\left[\left(h_T^\trans S_{Tt}\right)^2\indicator_{\left\{|h_T^\trans S_{Tt}|>\delta\right\}}\big|\mathcal{F}_{T,t-1}\right]=\opone.
\end{align*}
Observe
\begin{align*}
&~\lefteqn{\sum_{t=1}^{T}\rE\left[\left(h_T^\trans S_{Tt}\right)^2\indicator_{\left\{|h_T^\trans S_{Tt}|>\delta\right\}}\big|\mathcal{F}_{T,t-1}\right]}\\
 =&~
\sum_{t=1}^{T}\rE\left[\left(%\frac{\lpcons}{\sqrt{T}}\sigma_y\score_{f_y}(\yt, \Delta x_t) +
\frac{\lpinte}{T}\frac{x_{t-1}}{\sigma_x} \sigma_y\score_{f_y}(\yt, \Delta x_t) + \frac{\lppers}{T}x_{t-1} \score_{f_x}(\yt, \Delta x_t)\right)^2
\indicator_{\left\{(h_T^\trans S_{Tt})^2>\delta^2\right\}}\big|\mathcal{F}_{T,t-1}\right]\\
\leq&~
%9\sum_{t=1}^{T}\rE\left[\left(\frac{\lpcons}{\sqrt{T}}\sigma_y\score_{f_y}(\yt, \Delta x_t)\right)^2\indicator_{\left\{9(\lpcons\sigma_y\score_{f_y}(\yt,\Delta x_t))^2>\delta^2T\right\}}\big|\mathcal{F}_{T,t-1}\right] +\\
4\sum_{t=1}^{T}\rE\left[\left(\frac{\lpinte}{T}\frac{x_{t-1}}{\sigma_x} \sigma_y\score_{f_y}(\yt, \Delta x_t)\right)^2\indicator_{\left\{4(bx_{t-1}\sigma_y\score_{f_y}(\yt,\Delta x_t))^2>\delta^2T^2\sigma_x^2\right\}}\big|\mathcal{F}_{T,t-1}\right]\\
 & 
\mbox{}+4\sum_{t=1}^{T}\rE\left[\left(\frac{\lppers}{T}x_{t-1} \score_{f_x}(\yt, \Delta x_t)\right)^2\indicator_{\left\{4(cx_{t-1}\score_{f_y}(\yt,\Delta x_t))^2>\delta^2T^2\right\}}\big|\mathcal{F}_{T,t-1}\right] .
\end{align*}
To complete the proof, we just need to show separately, for any given $\delta>0$,
\begin{align*}
%&\sum_{t=1}^{T}\rE\left[\left(\frac{\lpcons}{\sqrt{T}}\sigma_y\score_{f_y}(\yt, \Delta x_t)\right)^2\indicator_{\left\{3|\lpcons\sigma_y\score_{f_y}(\yt,\Delta x_t)|>\delta\sqrt{T}\right\}}\big|\mathcal{F}_{T,t-1}\right] = \opone, \\
&\sum_{t=1}^{T}\rE\left[\left(\frac{\lpinte}{T}\frac{x_{t-1}}{\sigma_x} \sigma_y\score_{f_y}(\yt, \Delta x_t)\right)^2\indicator_{\left\{2|bx_{t-1}\sigma_y\score_{f_y}(\yt,\Delta x_t)|>\delta T\sigma_x\right\}}\big|\mathcal{F}_{T,t-1}\right] = \opone, \\
&\sum_{t=1}^{T}\rE\left[\bigg(\frac{\lppers}{T}x_{t-1} \score_{f_x}(\yt, \Delta x_t)\bigg)^2\indicator_{\left\{2|cx_{t-1}\score_{f_y}(\yt,\Delta x_t)|>\delta T\right\}}\big|\mathcal{F}_{T,t-1}\right] = \opone.
\end{align*}
Using the notation $\zeta(M)=\rEf\left[\left(\lpinte\sigma_y\score_{f_y}(\yt, \Delta x_t)\right)^2\indicator_{\left\{2|\lpinte\sigma_y\score_{f_y}(\yt,\Delta x_t)|>\delta T\right\}}\right]$, we see, for instance, that the left-hand-side of the second term of the previous display is bounded by 
\begin{align*}
\zeta\left(\frac{\delta \sqrt{T}}{\|\We^{(T)}\|_{\infty}}\right)\int_0^1\left(\We^{(T)}(u-)\right)^2\rd u = \opone,
\end{align*}
by a combination of Lemma~\ref{lem:partialSum}, the continuous mapping theorem, and $\zeta(M)\to0$ as $M\to\infty$ (dominated convergence). The same strategy works for the other term.

\textit{Condition (c)}. This condition consists two asymptotic negligibility properties (the Displays~(4) and~(5) in \citet{HvdAW2015}) of the remainder terms $R_{Tt}=r(\et,w_{Tt})$. Recall $w_{Tt}=\left(%-\frac{\lpcons}{\sqrt{T}}\sigma_y
-\frac{\lpinte}{T}\frac{\sigma_y}{\sigma_x}x_{t-1},
-\frac{\lppers}{T}x_{t-1}\right)^\trans$, by~(\ref{appeqn:DQMrem1}), we have
\begin{align*}
T \rEf \left[r^2(\et,w_{Tt})|\mathcal{F}_{T,t-1}\right] = \opone,
\end{align*}
which ensures the Display~(4): $\sum_{t=1}^T\rE \left[R_{Tt}^2|\mathcal{F}_{T,t-1}\right]=\opone$.
Display~(5), that is 
\begin{align*}
\sum_{t=1}^{T}\left(1-\rE\left[LR_{Tt}|\mathcal{F}_{T,t-1}\right]\right)=\opone,
\end{align*}
is trivially met by plugging in $LR_{Tt}=\llr_{I}^{(T)}(\lpinte,\lppers)$ to the left-hand-side which gives zero due to the assumed non-negativity of $f$.

\textit{Condition (d)}. This condition is satisfied since $x_0=0$, so that
\begin{align*}
\log LR_{Tt}
=
\log\frac{f\left(\yt-\frac{\lpinte}{T}\frac{\sigma_y}{\sigma_x}x_{t-1},\Delta x_t-\frac{\lppers}{T}x_{t-1}\right)}{f\left(\yt,\Delta x_t\right)} 
=\log \frac{f\left(\yt,\Delta x_t\right)}{f\left(\yt,\Delta x_t\right)} = \opone.
\end{align*}

Subsequently, for the second term of the log likelihood ratio, $\llr^{(T)}_{II}(\lpinte,\lppers,\eta)$, we prove that it equals
\begin{align*}
\frac{\eta^\trans}{\sqrt{T}}\sum_{t=1}^T \sum_{k}\fpert_k(\yt, \Delta x_t)-\frac{1}{2}\left(2a\Jfyb^\trans\eta + \left(2b\Jfyb^\trans\eta+2c\Jfxb^\trans\eta\right)\frac{1}{T^{3/2}}\sum_{t=1}^{T}\frac{x_{t-1}}{\sigma_x} + \eta^\trans\eta\right)+\opone.
\end{align*}
This completes the proof for Part (i). Since we assume that the functions $\fpert_k$, $k\in\SN$, are two times continuously differentiable with bounded derivatives, by a Taylor Series expansion, we have 
\begin{align} \label{appeqn:taylorseries_1}
&\fpert_k\left(\yt-\frac{\lpinte}{T}\frac{\sigma_y}{\sigma_x}x_{t-1},\Delta x_t-\frac{\lppers}{T}x_{t-1}\right) \\
=~&\fpert_k\left(\yt,\Delta x_t\right)-
\frac{\lpinte}{T}\frac{\sigma_y}{\sigma_x}x_{t-1}\dot{\fpert}_{k,y}\left(\yt,\Delta x_t\right) - \frac{\lppers}{T}x_{t-1}\dot{\fpert}_{k,y}\left(\yt,\Delta x_t\right)+ \opone, \notag
\end{align}
where $\dot{\fpert}_{k,y}$ and $\dot{\fpert}_{k,y}$ are the first-order derivatives of $\dot{\fpert}_{k}$ with respect to the first and second argument, respectively. In this equality, higher-order terms are omitted since the second-order derivatives, denoted by $\ddot{\fpert}_{k,yy}$, $\ddot{\fpert}_{k,yx}$, and $\ddot{\fpert}_{k,xx}$, are bounded, i.e., there exists a real number $M$, such that $\left|\ddot{\fpert}_{k,yy}\right|<M$, $\left|\ddot{\fpert}_{k,yx}\right|<M$, and $\left|\ddot{\fpert}_{k,xx}\right|<M$. Therefore,
\begin{align*}
&\frac{1}{\sqrt{T}}\sum_{t=1}^{T}\left(\frac{\lpinte}{T}\frac{\sigma_y}{\sigma_x}x_{t-1}\right)^2\ddot{\fpert}_{k,yy}\left(\yt,\Delta x_t\right)\\
<~&\frac{1}{\sqrt{T}}\sum_{t=1}^{T}\left(\frac{\lpinte}{T}\frac{\sigma_y}{\sigma_x}x_{t-1}\right)^2M \\
\wto~& \frac{1}{\sqrt{T}}\left(ab\sigma_y^2\int_0^1\We(s)\rd s + \lpinte^2\sigma_y^2\int_0^1\We^2(s)\rd s\right)M = O_{\rP}\left(\frac{1}{\sqrt{T}}\right) = \opone,
\end{align*}
and similar results hold for other higher order terms of $\ddot{\fpert}_{k,yx}$ and $\ddot{\fpert}_{k,xx}$. Also, using $\log(1+x)=x-\frac{1}{2}x^2+O(x^3)$, we have
\begin{align}
&~\llr^{(T)}_{II}(\lpinte,\lppers,\eta) \label{appeqn:LLR_II} \\ \notag
=&~\frac{1}{\sqrt{T}}\sum_{t=1}^{T}\sum_{k=1}^{\infty}\eta_k\fpert_{k}\left(\yt-\frac{\lpinte}{T}\frac{\sigma_y}{\sigma_x}x_{t-1},\Delta x_t-\frac{\lppers}{T}x_{t-1}\right)\\ \notag
&-\frac{1}{2}\frac{1}{T}\sum_{t=1}^{T}\left[\sum_{k=1}^{\infty}\eta_k\fpert_{k}\left(\yt-\frac{\lpinte}{T}\frac{\sigma_y}{\sigma_x}x_{t-1},\Delta x_t-\frac{\lppers}{T}x_{t-1}\right)\right]^2+\opone\\ \notag
=&~\frac{1}{\sqrt{T}}\sum_{t=1}^{T}\sum_{k=1}^{\infty}\eta_k\left[\fpert_k\left(\yt,\Delta x_t\right)-\frac{\lpinte}{T}\frac{\sigma_y}{\sigma_x}x_{t-1}\dot{\fpert}_{k,y}\left(\yt,\Delta x_t\right) - \frac{\lppers}{T}x_{t-1}\dot{\fpert}_{k,x}\left(\yt,\Delta x_t\right)\right] \\ \notag
&-\frac{1}{2}\frac{1}{T}\sum_{t=1}^{T}\left[\sum_{k=1}^{\infty}\eta_k\fpert_{k}\left(\yt-\frac{\lpinte}{T}\frac{\sigma_y}{\sigma_x}x_{t-1},\Delta x_t-\frac{\lppers}{T}x_{t-1}\right)\right]^2+\opone\\ \notag
=&~\frac{1}{\sqrt{T}}\sum_{t=1}^{T}\sum_{k=1}^{\infty}\eta_k\left[\fpert_k\left(\yt,\Delta x_t\right)-\frac{\lpinte}{T}\frac{x_{t-1}}{\sigma_x}\Jfybk - \frac{\lppers}{T}\frac{x_{t-1}}{\sigma_x}\Jfxbk\right]-\frac{1}{2}\sum_{k=1}^{\infty}\eta_k^2+\opone\\ \notag
=&~\frac{\eta^\trans}{\sqrt{T}}\sum_{t=1}^{T}\sum_{k}\fpert_k(\yt,\Delta x_t)-a\Jfyb^\trans\eta-\left(\lpinte\Jfyb^\trans\eta+\lppers\Jfxb^\trans\eta\right)\frac{1}{T^{3/2}}\sum_{t=1}^{T}\frac{x_{t-1}}{\sigma_x}-\frac{1}{2}\eta^\trans\eta+\opone \notag.
\end{align}
The third equality follows from Lemma~\ref{lem:partialSum}, $\rEf\big[\dot{\fpert}_{k,y}(\yt,\Delta x_t)\big]=\int_{\SR^2}\dot{\fpert}_{k,y}(e)f(e)\rd e=\fpert_{k}(e)f(e)\big|_{\SR^2}-\int_{\SR^2} \fpert_{k}(e)\frac{\dot{f}_y}{f}(e)\rd e=J_{f_y \fpert_k}$, $\rEf\big[\dot{\fpert}_{k,x}(\yt,\Delta x_t)\big]=J_{f_x \fpert_k}$, the assumption $\rEf\left[\fpert_{k}^2(e)\right]=1$, and $\rEf\left[\fpert_i(e)\fpert_j(e)\right]=0$ when $i\neq j$. 

Putting together (\ref{appeqn:LLR_I}) and (\ref{appeqn:LLR_II}) completes the proof of the LAQ result in Part (i).

\textit{Proof of Part (ii):} 
The proof for this part follows immediately from the Functional Central Limit Theorem (see, e.g., Lemma~\ref{lem:partialSum} and Theorem 2.4 in \citet{ChanWei1988}). The convergence of integrals as $\int_0^1\We(s)\rd\Wfy(s)$ needs an additional argument as it does not follow automatically from Lemma~\ref{lem:partialSum}. The argument is identical to that in the proof of Proposition~3.2 in \citet{ZvdAW2016}.

\textit{Proof of Part (iii):}
Taking the expectation of $\exp\lllr(\lpinte,\lppers,\eta)$ under $\prob_{0,0,0}$ will directly lead to the result. 
\end{proof}

\begin{proof}[Proof of Theorem~\ref{thm:MaximalInvariant}]
\noindent The proof follows from the definition of the \textit{maximal invariant} in Section~6.2 of \citet{LehmannRomano2005}, which, in terms of the present problem, is: $\mi$ is called maximally invariant with respect to $\transgroup_\eta$ if (i) it is invariant, and if (ii) the equality $M(\We,\Wb)=M(\widetilde{W}_{\e},\widetilde{W}_\fpert)$, with the mapping $M$ defined in Section~\ref{subsec:MaximalInvariant}, implies that $(\We,\Wb)$ can be transformed into $(\widetilde{W}_{\e},\widetilde{W}_\fpert)$ with some transformation $\transfor_\eta\in\transgroup_\eta$. Since (i) is trivially met, the proof is complete by establishing~(ii).

Suppose, indeed, $M(\We(\s),\Wb(\s))=M(\widetilde{W}_{\e}(\s),\widetilde{W}_\fpert(\s))$, $\s\in[0,1]$. Then
\begin{align*}
\We(\s)=\widetilde{W}_\e(\s) {\rm~~and~~} \Bb(\s)=\widetilde{B}_\fpert(\s),
\mbox{ for all }s\in[0,1].
\end{align*}
This in turn implies, for all $s\in[0,1]$,
\begin{align*}
\We(\s)-\widetilde{W}_\e(\s)=0 {\rm~~and~~} \Wb(\s)-\widetilde{W}_\fpert(\s)=\lppers_\transfor\s
\end{align*}
with $\lppers_\transfor=\Wb(1)-\widetilde{W}_\fpert(1)\in\SR$. This shows that $(\We,\Wb)$ can indeed be transformed to $(\widetilde{W}_{\e},\widetilde{W}_\fpert)$ by the transformation $\transfor_\eta\in\transgroup_\eta$ with $\eta=\lppers_\transfor$. Thus condition (ii) is verified and the proof is complete.
\end{proof}

\begin{proof}[Proof of Theorem~\ref{thm:LAQ_M}]
Observe that we can decompose the central sequence $\cs(\lpinte,\lppers,\eta)$ in~(\ref{eqn:LLRlimit}) as
\begin{equation}
\cs(\lpinte,\lppers,\eta) = \cs_{\mi}(\lpinte,\lppers) + \cs_{\indp}(\lpinte,\lppers,\eta),
\end{equation}
with
\begin{equation}
\cs_{\indp}(\lpinte,\lppers,\eta)
 =
\overline{\We}\left(\lpinte\Jfyb+\lppers\Jfxb\right)^\trans\Wb(1) + \eta^\trans\Wb(1).
%\left(\lpinte\Wfy(1)+\lppers[\Wfx(1)-\We(1)]\right)\overline{\We} + %\eta^\trans\Wb(1).
\end{equation}
Under $\prob_{0,0,0}$, $\Wb(1)$ is independent of $\mi$ while $\overline{\We}$ is measurable with respect to $\mi$. As a result, under $\prob_{0,0,0}$ and conditionally on $\mi$, $\cs_{\indp}(\lpinte,\lppers,\eta)$ is normally distributed with mean zero and variance $\left|\overline{\We}\left(b\Jfyb+c\Jfxb\right)+\eta\right|^2$. As $\cs_{\mi}$ and $\qt$ are obviously $\mi$-measurable, we find
\begin{align*}
\rElim\left[\frac{\rd\prob_{\lpinte,\lppers,\eta}}{\rd\prob_{0,0,0}}|\mi\right]
 &=
\rElim\left[\exp\left(\cs_{\mi}(\lpinte,\lppers)+\cs_{\indp}(\lpinte,\lppers,\eta)-\frac12\qt(\lpinte,\lppers,\eta)\right)|\mi\right]\\
 &=
\exp\left(\cs_{\mi}(\lpinte,\lppers)-\frac12\qt(\lpinte,\lppers,\eta)\right)
	\rElim\left[\exp\cs_{\indp}(\lpinte,\lppers,\eta)|\mi\right]\\
 &=
\exp\left(\cs_{\mi}(\lpinte,\lppers)-\frac12\qt_{\mi}(\lpinte,\lppers)\right).
\end{align*}
This completes the proof.
\end{proof}

\begin{proof}[Proof of Proposition~\ref{prop:limitbehavior_BTg}]
The proposition is somewhat nonstandard as it deals with bivariate component-wise ranks, but otherwise its proof mimics that of Lemma~A.1 in \citet{HvdAW2011}. Tightness of the processes follows exactly as in that lemma, so we only consider convergence of the finite-dimensional distributions. We now from the so-called H\a'{a}jek Representation Theorem (we use it in the version of Theorem~13.5 in \citet{vdVaart2000}), that we may write
\begin{align*}
\lefteqn{\frac{1}{\sqrt{T}}\sum_{t=1}^{\lfloor sT\rfloor}
	\frac{-\dot{g}_y}{g_y}\left(G_y^{-1}\left(\frac{R_{y,t}}{T+1}\right)\right)}\\
 =&~
\frac{1}{\sqrt{T}}\sum_{t=1}^{\lfloor sT\rfloor}
	\frac{-\dot{g}_y}{g_y}\left(G_y^{-1}\left(F_y\left(\eyt\right)\right)\right)
	 -\frac{1}{\sqrt{T}}\sum_{t=1}^{T}
	 \frac{-\dot{g}_y}{g_y}\left(G_y^{-1}\left(F_y\left(\eyt\right)\right)\right)
	 +\opone.
\end{align*}
The equivalent statement holds for the ranks $R_{x,t}$, with $y$ replaced by $x$ everywhere in the above expression. The claim then follows from the functional central limit theorem applied to the partial sums of  $\frac{-\dot{g}_y}{g_y}\left(G_y^{-1}\left(F_y\left(\eyt\right)\right)\right)$ and $\frac{-\dot{g}_x}{g_x}\left(G_x^{-1}\left(F_x\left(\ext\right)\right)\right)$, jointly with $\WTe$ and $\WTf$.
\end{proof}

\begin{proof}[Proof of Corollary~\ref{corollary:StructuralLimitExperiment_Mg}]
The behavior of $\We$ under $\prob_{\lpinte,\lppers,\eta}$ is already given in the structural limit experiment associated to the maximal invariant $\mi$ in Corollary~\ref{corollary:StructuralLimitExperiment_M}. To get the behavior of $\Wg$ under $\prob_{\lpinte,\lppers,\eta}$, first decompose it as
\begin{align*}
\Wg(s) = v \We(s) + A \Wf(\s) + W_{\perp}(s)
\end{align*} 
for some $v\in\SR^{2\times 1}$ and $A\in\SR^{2\times 2}$, where $W_{\perp}$ is a Brownian motion independent of $\We$ and $\Wf$. The appropriate values of $v$ and $A$ satisfy the relation\footnote{Note that in $v$ and $A$ there are 6 unknowns and here there are only two equations, which we only need for this proof. The other four equations are given by the equalities $\cov\left[\Wg(1), \We(1)\right]=\coveg$ and $\cov\left[\Wg(1), \Wg(1)\right]=\J_g$.}
\begin{align*}
\J_{gf} 
&= \cov\left[\Wg(1),\Wf(1)\right] \\
&= \cov\left[v\We(1)+A\Wf(1)+W_\perp(1),\Wf(1)\right] \\
&= v\textbf{e}_1^\trans + A\Jf.
\end{align*}
Then the proof is complete upon noting that, under $\prob_{\lpinte,\lppers,\eta}$, we have 
\begin{align*}
\rd\Wg(s)
 &=
v\rd\We(s)+A\rd\Wf(s)+\rd W_{\perp}(s) \\
 &=
v\left(\lppers\We\rd s+\rd\Ze(s)\right) +
	A\left(\Jf(\lpinte,\lppers)^\trans\We(s)\rd s+\rd \Zf(s)\right) +
	\rd W_{\perp}(s)\\
 &=
(v\textbf{e}_1^\trans+A\Jf)(\lpinte,\lppers)^\trans\We(s)\rd s+
	\left(\rd \Ze(s)+\rd \Zf(s)+\rd Z_{\perp}(s)\right)\\
 &=
\J_{gf}(\lpinte,\lppers)^\trans\We(s)\rd s+\rd\Zg(s).
\end{align*}
\end{proof}

\section{Switching Tests to Standard Case}\label{app:Switching}
\noindent The numerical approach of \citet{EMW2015} needs to discretize the nuisance parameter space under the null hypothesis (and the associated mesh is regarded as the support of $\Lambda_0^{*\epsilon}$). However, in the present case, the null parameter space of $\lppers$ is $(-\infty,0]$, which is unbounded. This complicates the algorithm in terms of computation. To address this issue, \citet{EMW2015} proposes to switch to a standard test when $|\lppers|$ is large enough so that the predictor essentially behaves like a stationary time series. In that case, the problem reduces to a standard test with a stationary regressor. In particular, the authors propose to use a ``switching'' function $\chi=\indicator\{\hat{\lppers}<K\}$ based on some estimator $\hat{\lppers}$ of $\lppers$ and a chosen ``threshold'' $K$ to distinguish the nonstandard situation from the standard one. Then, one can employ the following (combined) test function
\begin{align} \label{eqn:combinedtest}
\test_{n,s,\chi}(\stat)=\chi\test_{s}(\stat) + \left(1-\chi\right)\test_{n}(\stat),
\end{align}
where $\test_s$ is some test for the standard case, and $\test_n$ is the test~(\ref{eqn:test_nonstandard}) for the nonstandard case.
%Specifically, the nonstandard test $\test_n$ is chosen to be $\test^{\epsilon}_{\bar{\lpinte},\Lambda_0^{*\epsilon}}$ introduced above.
For the standard test $\test_s$, following the argument in the same paper, we use the semiparametric version of the $t$-test
\begin{align} \label{eqn:standardtest_f}
\test_s(\stat) = \indicator\left\{\standardb\big/\sigma_{\standardb}>\cv_s\right\}
\end{align}
with
\begin{align*}
&\standardb=\frac{\stat_1}{\stat_3\Jfyfy} - \frac{\Jfyfx}{\Jfyfy}\standardc, ~~~ \standardc = \frac{\stat_2-(\Jfyfx/\Jfyfy)\stat_1}{\big((\Jfxfx-1)-\Jfyfx^2/\Jfyfy\big)\stat_3+\stat_4}, {\rm ~~~and~}\\
&\sigma_{\standardb} = \sqrt{\frac{1}{\Jfyfy \stat_3}+\left(\frac{\Jfyfx}{\Jfyfy}\right)^2\frac{1}{\big((\Jfxfx-1)-\Jfyfx^2/\Jfyfy\big)\stat_3+\stat_4}}.
\end{align*}
Here $\standardb$ and $\standardc$ are the maximum likelihood estimators of $\lpinte$ and $\lppers$ based on the likelihood ratio in~(\ref{eqn:LAQ_M}). 

The proof of the following lemma can be found in the Supplementary Material of \citet{EMW2015} (Appendix C.4).
\begin{lemma} \label{lem:convergetostandard}
	For $\s\in[0,1]$, let $Z_1(s)$ and $Z_2(s)$ be two independent standard Brownian motions, and $W_1(\s)$ be the associated Ornstein-Uhlenbeck process of $Z_1(\s)$, defined by $\rd W_1(\s)=\lppers W_1(\s)\rd\s+\rd Z_1(\s)$. Define the demeaned process $W_1^{\mu}(\s)=W_1(\s)-\int_0^1 W_1(\s)\rd\s$. Then, as $\lppers\to-\infty$, we have 
	\begin{align} \label{eqn:covergetostandard}
	\begin{pmatrix}
	\sqrt{-2c}\int_0^1 W_1(\s)\rd Z_1(\s) \\
	\sqrt{-2c}\int_0^1 W_1^{\mu}(\s)\rd Z_2(\s) \\
	-2c\int_0^1W_1(\s)^2\rd\s \\
	-2c\int_0^1W_1^{\mu}(\s)^2\rd\s  
	\end{pmatrix} \wto
	\begin{pmatrix}
	z_1\\z_2\\1\\1
	\end{pmatrix},
	\end{align}
	where $z_1$ and $z_2$ are two independent standard normal random variables.
\end{lemma}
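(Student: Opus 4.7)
The plan is to establish the third and fourth coordinates via direct Ornstein--Uhlenbeck (OU) moment computations and then, with those in hand, derive the joint limit of the first two coordinates from a vector continuous-martingale central limit theorem. The underlying intuition is that as $c \to -\infty$ the OU process $W_1$ is rapidly mean-reverting with marginal variance of order $1/(-2c)$, so $\sqrt{-2c}\,W_1$ is asymptotically uncorrelated at different time points and integrated functionals concentrate at their means.

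From the Itô representation $W_1(s) = \int_0^s e^{c(s-u)}\rd Z_1(u)$ I would compute $\rElim[W_1(s)^2] = (1-e^{2cs})/(-2c)$ and obtain $\rElim[-2c\int_0^1 W_1(s)^2 \rd s] \to 1$. A routine fourth-moment estimate exploiting the exponential decay of the OU covariance then gives $\var(-2c\int_0^1 W_1(s)^2 \rd s) = O(1/(-c))$, so that $-2c\int_0^1 W_1(s)^2 \rd s \pto 1$. A similar covariance bound yields $\int_0^1 W_1(s)\rd s = O_\rP(1/(-c))$, hence $-2c(\int_0^1 W_1(s)\rd s)^2 = O_\rP(1/(-c)) \pto 0$. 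Combining these with the identity $\int_0^1 W_1^\mu(s)^2 \rd s = \int_0^1 W_1(s)^2 \rd s - (\int_0^1 W_1(s)\rd s)^2$ delivers $-2c\int_0^1 W_1^\mu(s)^2 \rd s \pto 1$, which are the last two coordinates of the limit.

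For the first two coordinates, set $M_1(t) := \sqrt{-2c}\int_0^t W_1(s)\rd Z_1(s)$ and $M_2(t) := \sqrt{-2c}\int_0^t W_1^\mu(s)\rd Z_2(s)$. Both are continuous local martingales in the natural filtration of $(Z_1, Z_2)$, with predictable quadratic variations $\langle M_1,M_1\rangle(1) = -2c\int_0^1 W_1^2\rd s$ and $\langle M_2,M_2\rangle(1) = -2c\int_0^1 (W_1^\mu)^2 \rd s$, each converging in probability to $1$ by the previous paragraph. The independence of $Z_1$ and $Z_2$ moreover forces $\langle M_1, M_2\rangle(1) = 0$ identically. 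The continuous-martingale central limit theorem (e.g.\ Theorem~VIII.3.22 of \citet{JacodShiryaev2002}) then yields $(M_1(1), M_2(1)) \wto N(0, I_2)$, i.e.\ two independent standard normals.

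Joint convergence of the full four-vector to $(z_1, z_2, 1, 1)$ then follows from Slutsky's lemma, since the deterministic limits of the last two coordinates combine trivially with weak convergence of the first two. The main obstacle is the dependence between $M_1$ and the driver $Z_1$ of $W_1$ itself, which rules out a naive conditional-Gaussian argument for the first coordinate; the martingale CLT sidesteps this cleanly because it only uses convergence of the predictable quadratic variation. For $M_2$ the argument is in fact easier, since $Z_2 \indp W_1^\mu$ makes $M_2(1)$ exactly Gaussian with variance $-2c\int_0^1 (W_1^\mu)^2\rd s$ conditional on $W_1^\mu$, and one could alternatively obtain the joint limit by first conditioning on $W_1$ and then invoking the independence $Z_1 \indp Z_2$.
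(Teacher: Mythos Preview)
Your argument is essentially correct, and note that the paper itself does not supply a proof of this lemma: it simply refers the reader to Appendix~C.4 of the supplementary material of \citet{EMW2015}. There is therefore no in-paper proof to compare against; your sketch is already more detailed than anything the present paper provides.

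One technical slip is worth flagging. You assert that $M_2(t)=\sqrt{-2c}\int_0^t W_1^\mu(s)\,\rd Z_2(s)$ is a continuous local martingale in the natural filtration of $(Z_1,Z_2)$. It is not: the integrand $W_1^\mu(s)=W_1(s)-\int_0^1 W_1(u)\,\rd u$ anticipates the full path of $Z_1$ on $[0,1]$ and is hence not adapted to that filtration. The vector martingale CLT you invoke requires both components to be martingales with respect to a \emph{common} filtration, so the step as written does not go through. You appear to be aware of this in your final paragraph, where you propose conditioning on $W_1$; that route does work, via a characteristic-function argument combined with Slutsky. An even cleaner patch is to replace $M_2$ by $\tilde M_2(t)=\sqrt{-2c}\int_0^t W_1(s)\,\rd Z_2(s)$, which \emph{is} adapted to the natural filtration, apply the vector martingale CLT to $(M_1,\tilde M_2)$, and then note that
\[
M_2(1)-\tilde M_2(1)=-\sqrt{-2c}\Bigl(\int_0^1 W_1(u)\,\rd u\Bigr)Z_2(1)=o_\rP(1),
\]
since you have already shown $\int_0^1 W_1(u)\,\rd u=O_\rP(1/(-c))$. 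With either fix the remainder of your argument stands.
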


\begin{lemma} \label{lem:standardsufficientstatistic}
	Suppose the sufficient statistics $\stat_1, \stat_2, \stat_3, \stat_4$ are defined in (\ref{eqn:sufficientstatistics_f}), where the behavior of $(\We,\Bfy,\Bfx)^\trans$ is described by the limit experiment $\mathcal{E}_{\mi}(f)$ in Corollary~\ref{corollary:StructuralLimitExperiment_M}. Then, under $\prob_{\lppers,\eta}$ and as $\lppers\to-\infty$, we have
	\begin{align*}
	&\sqrt{-2c}\begin{pmatrix}\stat_1+\Jfyfx/2 \\ \stat_2+\Jfxfx/2\end{pmatrix}
	\wto \mathcal{N}\left(
	0%-\frac{\sqrt{-2c}}{2}\begin{pmatrix}\Jfyfx\\\Jfxfx\end{pmatrix}
	,\begin{pmatrix}\Jfyfy & \Jfyfx \\ \Jfyfx & \Jfxfx\end{pmatrix}\right),\\
	&-2c\stat_3 \wto 1, {\rm ~~~and~~~} -2c\stat_4 \wto 1.
	\end{align*}
	Subsequently, still under $\prob_{\lppers,\eta}$ and as $\lppers\to-\infty$, we have 
	\begin{align*}
	\standardb/\sigma_{\standardb} \wto \mathcal{N}(0,1).
	\end{align*}
\end{lemma}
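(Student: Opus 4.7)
\textbf{Proof plan for Lemma \ref{lem:standardsufficientstatistic}.}

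My plan is to express each $\stat_i$ in terms of the primitive Brownian motions $\Ze$ and $\Zb$ using the SDEs in Corollary~\ref{corollary:StructuralLimitExperiment_M}, and then deduce each convergence statement as a direct application of Lemma~\ref{lem:convergetostandard}. Under $\prob_{0,c,\eta}$ restricted to $\mi$, the dynamics give $\rd\Wfy = c\Jfyfx\We\,\rd s + \rd\Zfy$ and $\rd\Wfx = c\Jfxfx\We\,\rd s + \rd\Zfx$, where $\Zfy=\Jfyb^\trans \Zb$ and $\Zfx=\Ze+\Jfxb^\trans\Zb$ are Brownian motions with variances $\Jfyfy$, $\Jfxfx$ per unit time, covariance $\Jfyfx$, and with $\Zfy$ independent of $\Ze$. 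Passing to the bridges and using $\int_0^1(\We-\overline{\We})\overline{\We}\,\rd s=0$, this produces the clean decompositions
\begin{align*}
\stat_1 &= c\Jfyfx\stat_3 + \int_0^1(\We-\overline{\We})\,\rd\Zfy,\\
\stat_2 &= c\Jfxfx\stat_3 + \int_0^1(\We-\overline{\We})\,\rd\Zfx + \We(1)\overline{\We}.
\end{align*}

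The statements on $\stat_3$ and $\stat_4$ are immediate from Lemma~\ref{lem:convergetostandard} with $W_1=\We$ and $Z_1=\Ze$, noting that $\stat_3=\int_0^1(\We-\overline{\We})^2\rd s$ coincides with the demeaned quadratic variation statistic there, and that $\overline{\We}^2 = O_P(c^{-2})$ so that the gap between $\stat_3$ and $\stat_4$ vanishes after rescaling. For the first-order refinement needed below, I would apply It\^o's formula to $\We(s)^2$ to obtain $-2c\stat_4-1=-\We(1)^2+2\int_0^1\We\,\rd\Ze$; since $\We(1)^2=O_P(1/|c|)$, multiplying by $\sqrt{-2c}$ yields $\sqrt{-2c}(-2c\stat_4-1)\wto 2z_1$ where $z_1=\lim\sqrt{-2c}\int_0^1\We\,\rd\Ze\sim\mathcal{N}(0,1)$, and the same limit holds for $\sqrt{-2c}(-2c\stat_3-1)$ because $(-2c)^{3/2}\overline{\We}^2=\opone$.

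For the main joint statement, I rewrite $\sqrt{-2c}(\stat_1+\Jfyfx/2)=\Jfyfx\sqrt{-2c}(c\stat_3+1/2)+\sqrt{-2c}\int_0^1(\We-\overline{\We})\,\rd\Zfy$, with the analogous expression for $\stat_2$ where additionally $\sqrt{-2c}\,\We(1)\overline{\We}=\opone$. The first piece converges to a multiple of $z_1$ by the previous step, and the stochastic-integral pieces converge jointly to Gaussian limits by applying Lemma~\ref{lem:convergetostandard} with $Z_2$ taken to be the appropriate normalized component of $\Zb$ (which is independent of $\Ze$ by~(\ref{eqn:covariancematrix})), with cross-covariances obtained from the It\^o isometry $\E[\int(\We-\overline{\We})\rd\Zfy\cdot\int(\We-\overline{\We})\rd\Zfx]=\Jfyfx\E[\stat_3]$. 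Joint weak convergence of the four-dimensional vector then follows from the continuous mapping theorem applied to $(\sqrt{-2c}\int\We\rd\Ze,\sqrt{-2c}\int(\We-\overline{\We})\rd(\Jfyb^\trans\Zb),\sqrt{-2c}\int(\We-\overline{\We})\rd(\Jfxb^\trans\Zb),-2c\stat_3,-2c\stat_4)$.

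The final claim $\standardb/\sigma_{\standardb}\wto\mathcal{N}(0,1)$ follows by plugging these joint limits into the definition of $\standardb$ and $\sigma_{\standardb}$, where after cancellation one finds $\standardb=(-2c)^{1/2}\bigl(\Jfxfx L_1-\Jfyfx L_2\bigr)/\det(\Jf)+\opone$ with $L_i$ the limiting expressions from the previous paragraph, and $\sigma_{\standardb}^2\sim(-2c)\Jfxfx/\det(\Jf)$; the key algebraic identity $\det(\Jf)+\Jfyfx^2=\Jfyfy\Jfxfx$ collapses the variance of the limit to $1$. The main obstacle I anticipate is careful bookkeeping of the various cross-covariances between the $\Ze$-part and $\Zb$-part of the limits (since $z_1$ and the stochastic-integral limits are not all independent of each other), and in particular showing that the contribution of the drift terms $c\Jfyfx\stat_3$ and $c\Jfxfx\stat_3$ combines correctly with the martingale parts; the It\^o-based expansion of $-2c\stat_3-1$ in terms of $z_1$ is the technical lever that makes this bookkeeping tractable.
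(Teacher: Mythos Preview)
Your approach is essentially the same as the paper's: rewrite $\stat_1,\stat_2$ using the SDEs of Corollary~\ref{corollary:StructuralLimitExperiment_M} as a drift term $c\Jfyfx\stat_3$ (resp.\ $c\Jfxfx\stat_3$) plus the martingale integral $\int_0^1\We^\mu\,\rd\Zfy$ (resp.\ $\int_0^1\We^\mu\,\rd\Zfx$), and then invoke Lemma~\ref{lem:convergetostandard}. The paper proceeds in exactly this way and reads off the limits directly.

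The one substantive difference is your It\^o-based refinement $\sqrt{-2c}\,(-2c\stat_3-1)\wto 2z_1$. The paper's argument is terser: it replaces $-2c\stat_3$ by its limit $1$ inside the expression $-\tfrac{\sqrt{-2c}}{2}\Jfyfx(-2c\stat_3)$ and thereby obtains the displayed covariance $\Jf$. Your more careful tracking shows that this residual is in fact $O_P(1)$, not $o_P(1)$, and contributes a $-\Jfyfx z_1$ (resp.\ $(1-\Jfxfx)z_1$) term to the limit of $\sqrt{-2c}(\stat_1+\Jfyfx/2)$ (resp.\ $\sqrt{-2c}(\stat_2+\Jfxfx/2)$). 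If you carry your bookkeeping through, you will find that the limiting covariance matrix of the pair differs from $\Jf$ by exactly these $z_1$-induced terms. Crucially, however, these extra $z_1$ contributions cancel in the particular linear combination that defines $\standardb$ (because the drift term $c\Jfyfx\stat_3$ in $\stat_1/(\stat_3\Jfyfy)$ cancels against $-\tfrac{\Jfyfx}{\Jfyfy}c$ once $\standardc$ is centered at $c$), so the final claim $\standardb/\sigma_{\standardb}\wto\mathcal{N}(0,1)$ is unaffected. In other words, your anticipated ``main obstacle'' of cross-covariances between the $\Ze$- and $\Zb$-parts is real, and your It\^o lever is the right way to resolve it; the paper sidesteps this by a shortcut that does not alter the final conclusion.

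One minor correction: your asserted form $\standardb=(-2c)^{1/2}(\Jfxfx L_1-\Jfyfx L_2)/\det(\Jf)+\opone$ with $L_i$ the centered limits is not quite the right bookkeeping, since $\standardb$ is more naturally organized as $M_1/(\stat_3\Jfyfy)-(\Jfyfx/\Jfyfy)(\standardc-c)$ with $M_1=\int\We^\mu\,\rd\Zfy$; this form makes the cancellation of the $z_1$ terms transparent and matches the variance $\sigma_{\standardb}^2\sim(-2c)\Jfxfx/\det(\Jf)$ you correctly identify.
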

\begin{proof}[Proof of Lemma~\ref{lem:standardsufficientstatistic}]
	Note that, in this proof, all convergence results (as $\lppers\to-\infty$) follow immediately from Lemma~\ref{lem:convergetostandard}. 
	
	First, we give the convergence results for $\stat_3$ and $\stat_4$: Recall $\rd\We(\s)=\lppers\We(\s)\rd\s+\rd\Ze(\s)$ for $\s\in[0,1]$ which makes $\We(\s)$ an Ornstein-Uhlenbeck process. Then we have, as $\lppers\to-\infty$,
	\begin{align} \label{eqn:app_S3S4convergence}
	&-2c\stat_3 = -2c\left(\overline{\We^2} - \left(\overline{\We}\right)^2\right)=-2c \int_0^1\We^{\mu}(\s)^2\rd\s \to 1, \\
	&-2c\stat_4 = -2c\overline{\We^2} = -2c\int_0^1\We(\s)^2\rd\s \to 1. \notag
	\end{align}
	
	Next, we give the convergence results of statistics $\stat_1$ and $\stat_2$: To this end, we state first some results derived from Lemma~\ref{lem:convergetostandard}: Define $\We^{\mu}(\s)=\We(\s)-\int_0^1\We(\s)\rd\s$ for $\s\in[0,1]$ and any infinite-dimensional vector $A_1,A_2\in\SR^{\infty\times 1}$, we have  
	\begin{align*}
	\begin{pmatrix}
	{-2c}A_1^\trans\int_0^1\We^{\mu}(\s)\rd\Zb(\s) \\
	{-2c}A_2^\trans\int_0^1\We^{\mu}(\s)\rd\Zb(\s)
	\end{pmatrix}
	\wto
	\mathcal{N}\left(\begin{pmatrix}0\\0\end{pmatrix},\begin{pmatrix}
	A_1^\trans A_1 & A_1^\trans A_2 \\ A_1^\trans A_2 & A_2^\trans A_2\end{pmatrix}\right).
	\end{align*}
	Hence, following the decomposition
	\begin{align*}
	\sqrt{-2c}\stat_1 
	& = \sqrt{-2c}\int_0^1\We(\s)\rd\Bfy(\s) \\
	& = \sqrt{-2c}\int_0^1\We^{\mu}(\s)\rd\Wfy(\s) \\
	& = \sqrt{-2c}\int_0^1\We^{\mu}(\s)\rd\Zfy(\s) + \sqrt{-2c}\times \lppers\Jfyfx\int_0^1\We^{\mu}(\s)\We(\s)\rd\s \\
	& = \sqrt{-2c}\int_0^1\We^{\mu}(\s)\rd\Zfy(\s) - \frac{\sqrt{-2c}}{2}\Jfyfx\left(-2c\int_0^1\We^{\mu}(\s)^2\rd\s\right),
	\end{align*}
	we find
	\begin{align*}
	\sqrt{-2c}\stat_1 + \frac{\sqrt{-2c}}{2}\Jfyfx \wto \mathcal{N}\left(0,\Jfyfy\right).
	\end{align*}
	Similarly, by the decomposition
	\begin{align*}
	\sqrt{-2c}\stat_2 
	& = \sqrt{-2c}\left(\int_0^1\We(\s)\rd\Bfx(\s) + \We(1)\int_0^1\We(\s)\rd\s\right) \\
	& = \sqrt{-2c}\left(\int_0^1\We(\s)\rd\We(\s)+\Jfxb\int_0^1\We^{\mu}(\s)\rd\Wb(\s)\right) \\
	& = \sqrt{-2c}\left(\int_0^1\We(\s)\rd\Ze(\s)+\Jfxb\int_0^1\We^{\mu}(\s)\rd\Zb(\s)\right) \\
	& ~~~~ - \frac{\sqrt{-2c}}{2}\left(-2c\int_0^1\We(\s)^2\rd\s - 2c\Jfxb\Jfxb^\trans\int_0^1\left(\We^{\mu}(\s)\right)^2\rd\s\right),
%	& \wto \mathcal{N}\left(0,1+\Jfxb\Jfxb^\trans\right) - \frac{\sqrt{-2c}}{2}(1+\Jfxb\Jfxb^\trans) \\
%	& = \mathcal{N}\left(- \frac{\sqrt{-2c}}{2}\Jfxfx,\Jfxfx\right).
	\end{align*}
    and $\Jfxfx = 1+\Jfxb\Jfxb^\trans$, we have	
\begin{align*}
	& \sqrt{-2c}\stat_2 + \frac{\sqrt{-2c}}{2}\Jfxfx \wto \mathcal{N}\left(0,\Jfxfx\right).
\end{align*}

The covariance of $\sqrt{-2c}\stat_1$ and $\sqrt{-2c}\stat_1$ is $\Jfyb\Jfxb^\trans=\Jfyfx$. In total, we have
\begin{align} \label{eqn:app_S1S2convergence}
&\sqrt{-2c}\left(\begin{pmatrix}\stat_1\\\stat_2\end{pmatrix} + \frac{1}{2}\begin{pmatrix}\Jfyfx\\\Jfxfx\end{pmatrix}\right) \wto \mathcal{N}\left(\begin{pmatrix} 0 \\ 0 \end{pmatrix},\begin{pmatrix}\Jfyfy & \Jfyfx \\ \Jfyfx & \Jfxfx\end{pmatrix}\right).
\end{align}

	Finally, we show $\standardb/\sigma_{\standardb}\wto\mathcal{N}(0,1)$: Using~(\ref{eqn:app_S1S2convergence}), we find
	\begin{align*} 
	&\sqrt{-2c}\left(\begin{pmatrix}\stat_1\\\stat_2-\frac{\Jfyfx}{\Jfyfy}\stat_1\end{pmatrix} + \frac{1}{2}\begin{pmatrix}\Jfyfx\\\Jfxfx-\frac{\Jfyfx^2}{\Jfyfy}\end{pmatrix}\right) \wto \mathcal{N}\left(\begin{pmatrix} 0 \\ 0 \end{pmatrix},\begin{pmatrix}\Jfyfy & 0 \\ 0 & \Jfxfx-\frac{\Jfyfx^2}{\Jfyfy}\end{pmatrix}\right).
	\end{align*}
	Thus, after some algebra,
	\begin{align*}
	\frac{\standardb}{\sqrt{-2c}}
	& = \frac{\sqrt{-2c}\stat_1}{\Jfyfy(-2c\stat_3)} - \frac{\Jfyfx}{\Jfyfy}\frac{\standardc}{\sqrt{-2c}} \\
	& = \frac{\sqrt{-2c}\stat_1}{\Jfyfy(-2c\stat_3)} - \frac{\Jfyfx}{\Jfyfy}\frac{\sqrt{-2c}\stat_2-(\Jfyfx/\Jfyfy)\sqrt{-2c}\stat_1}{((\Jfxfx-1)-\Jfyfx^2/\Jfyfy)(-2c\stat_3)+(-2c\stat_4)} \\
	&\wto \mathcal{N}\left(0,\left(\frac{1}{\Jfyfy}+\left(\frac{\Jfyfx}{\Jfyfy}\right)^2\frac{1}{\Jfxfx-\Jfyfx^2/\Jfyfy}\right)\right).
	\end{align*}
	Moreover, following~(\ref{eqn:app_S3S4convergence}), we have
	\begin{align*}
	\frac{\sigma_{\standardb}}{\sqrt{-2c}} 
	&= \sqrt{\frac{1}{\Jfyfy(-2c\stat_3)}+\left(\frac{\Jfyfx}{\Jfyfy}\right)^2\frac{1}{((\Jfxfx-1)-\Jfyfx^2/\Jfyfy)(-2c\stat_3)+(-2c\stat_4)}} \\
	&\wto \sqrt{\frac{1}{\Jfyfy}+\left(\frac{\Jfyfx}{\Jfyfy}\right)^2\frac{1}{\Jfxfx-\Jfyfx^2/\Jfyfy}},
	\end{align*}
	which completes the proof.	
\end{proof}

~\\
To introduce the rank-based standard test $\test_s$, we define, in terms of $\stat_{g, 1}$, $\stat_{g, 2}$, $\stat_{g, 3}$ and $\stat_{g, 4}$, the rank-based statistics
\begin{align*}
\standardb_g = \frac{\stat_{g,1} + \refcorrp\stat_{g, 2}}{\stat_{g,3}\Jgy}, ~~~ \standardc_g = \frac{\stat_{g,2}-\refcorrp\stat_{g,1}}{\stat_{g,3}\Jgx},
{~~\rm and~~}
\sigma_{\standardb_g} = \sqrt{\frac{1}{\stat_{g,3}\Jgy}}.
\end{align*}
Note, $\stat_{g,3}=\stat_3$ and $\stat_{g,4}=\stat_4$. Now, $\standardb_g$ and $\standardc_g$ serve as rank-based estimators of  $\lpinte$ and $\lppers$. The following lemma can be regarded as the rank-based version of Lemma~\ref{lem:standardsufficientstatistic}.

\begin{lemma}\label{lem:standardsufficientstatistic_g}
Define the statistic $\stat_g := \left(\stat_{g, 1}, \stat_{g, 2}, \stat_{g, 3}, \stat_{g, 4}\right)$ where $\stat_{g, 1}$, $\stat_{g, 2}$, $\stat_{g, 3}$ and $\stat_{g, 4}$ are introduced in Proposition~\ref{prop:weakconvergence_lllr_g}. Then, under $\prob_{c,\eta}$ and as $\lppers\to-\infty$, we have
	\begin{align}
	\standardb_g/\sigma_{\standardb_g} \wto \mathcal{N}(0,1).
	\end{align}
\end{lemma}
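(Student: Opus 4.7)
The plan is to mirror the proof of Lemma~\ref{lem:standardsufficientstatistic}, replacing $\Wf$ by $\Wg$ and invoking Corollary~\ref{corollary:StructuralLimitExperiment_Mg} in place of Corollary~\ref{corollary:StructuralLimitExperiment_M}. The combination $\stat_{g,1}+\refcorrp\stat_{g,2}$ in the numerator of $\standardb_g$ is tailored to the linear rank-based scores~(\ref{eqn:rankbasedscore_g}), which satisfy the algebraic identity
\begin{equation*}
\score_{g_y}(u,v)+\refcorrp\,\score_{g_x}(u,v)=-\frac{\dot{g}_y}{g_y}(u),
\end{equation*}
a function of $u$ only. This identity is the engine of the proof.

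First, since $\stat_{g,1},\stat_{g,2}$ depend on $\Wg$ only through its bridge, the $\eta$-drifts in Theorem~\ref{thm:StructuralLimitExperiment} are annihilated and we may work under $\prob_{0,c,0}$. Corollary~\ref{corollary:StructuralLimitExperiment_Mg} with $b=0$ gives $\rd\Wgy(s)=c(J_{gf})_{1,2}\We(s)\rd s+\rd\Zgy(s)$ and $\rd\Wgx(s)=c(J_{gf})_{2,2}\We(s)\rd s+\rd\Zgx(s)$, with $\Zg=(\Zgy,\Zgx)^\trans$ a zero-drift bivariate Brownian motion of variance $\Jg$ and covariance $\coveg$ with $\Ze$. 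Substituting and using $\We^\mu:=\We-\overline{\We}$ splits $\stat_{g,1}=\int_0^1\We^\mu\rd\Wgy$ and $\stat_{g,2}=\int_0^1\We^\mu\rd\Wgx+\We(1)\overline{\We}$ into drifts proportional to $c\int_0^1(\We^\mu)^2\rd s$ and martingale parts driven by $\Zg$. Because $\stat_{g,3}=\stat_3$ and $\stat_{g,4}=\stat_4$, the convergences $-2c\,\stat_{g,3}\pto 1$ and $-2c\,\stat_{g,4}\pto 1$ are inherited verbatim from the proof of Lemma~\ref{lem:standardsufficientstatistic}, and an Ornstein--Uhlenbeck scaling yields $\sqrt{-2c}\,\overline{\We}\pto 0$, hence $\sqrt{-2c}\,\We(1)\overline{\We}\pto 0$.

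The crucial step is the cancellation of the drift of $\sqrt{-2c}(\stat_{g,1}+\refcorrp\stat_{g,2})$: its coefficient equals
\begin{equation*}
(J_{gf})_{1,2}+\refcorrp\,(J_{gf})_{2,2}=\rEf\bigl[\sigma_x\score_{f_x}(\score_{g_y}+\refcorrp\,\score_{g_x})\bigr]=-\sigma_x\,\rEf\bigl[\score_{f_x}(\e)\cdot(\dot{g}_y/g_y)(u)\bigr],
\end{equation*}
with $u=G_y^{-1}(F_y(\eyt))$ a function of $\eyt$ alone. Since $\score_{f_x}=-\dot{f}_x/f$, integrating over $\ext$ first reduces the inner integral to $\int\dot{f}_x(\eyt,\ext)\,\rd\ext=0$ (a boundary term), so the whole coefficient vanishes and, with $\sqrt{-2c}\,\We(1)\overline{\We}\pto 0$, the full drift of $\sqrt{-2c}(\stat_{g,1}+\refcorrp\stat_{g,2})$ is $\opone$. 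The same identity further gives, for $\tilde{Z}:=\Zgy+\refcorrp\Zgx$,
\begin{equation*}
\var\bigl(\tilde{Z}(1)\bigr)=\rEf\bigl[(\score_{g_y}+\refcorrp\,\score_{g_x})^2\bigr]=\rEf\bigl[(\dot{g}_y/g_y)^2(u)\bigr]=\Jgy,
\end{equation*}
because $u$ has marginal density $g_y$ under $f$. Decomposing $\tilde{Z}$ into a part collinear with $\Ze$ and an orthogonal Brownian part and applying Lemma~\ref{lem:convergetostandard} to each piece (exactly as in the proof of Lemma~\ref{lem:standardsufficientstatistic}) gives $\sqrt{-2c}(\stat_{g,1}+\refcorrp\stat_{g,2})\wto\mathcal{N}(0,\Jgy)$. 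Slutsky then yields
\begin{equation*}
\frac{\standardb_g}{\sigma_{\standardb_g}}=\frac{\sqrt{-2c}(\stat_{g,1}+\refcorrp\stat_{g,2})}{\sqrt{-2c\,\stat_{g,3}\,\Jgy}}\wto\frac{\mathcal{N}(0,\Jgy)}{\sqrt{\Jgy}}=\mathcal{N}(0,1).
\end{equation*}
The main obstacle is verifying the drift cancellation $(J_{gf})_{1,2}+\refcorrp\,(J_{gf})_{2,2}=0$: this is the algebraic structure built into the linear reference scores of~(\ref{eqn:rankbasedscore_g}), and it is what makes the otherwise ad hoc definition of $\standardb_g$ yield the standard normal limit.
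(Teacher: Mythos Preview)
Your proof is correct and hinges on the same algebraic identity the paper uses, namely $\score_{g_y}+\refcorrp\,\score_{g_x}=-\dot{g}_y/g_y$. The paper exploits this identity slightly differently: rather than computing the drift coefficient $(J_{gf})_{1,2}+\refcorrp\,(J_{gf})_{2,2}$ and showing it vanishes, it observes that $\By:=\Bgy+\refcorrp\Bgx$ is the weak limit of the univariate rank process $T^{-1/2}\sum_{t\le\lfloor sT\rfloor}(-\dot{g}_y/g_y)\bigl(G_y^{-1}(R_{y,t}/(T+1))\bigr)$, which under $b=0$ depends only on the ranks of the i.i.d.\ $\eyt$ and is therefore a Brownian bridge with variance $\Jgy$ regardless of $c$. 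This conceptual shortcut bypasses the explicit drift computation but is equivalent to what you do; your SDE route via Corollary~\ref{corollary:StructuralLimitExperiment_Mg} makes the mechanism more transparent and also handles the residual term $\refcorrp\,\We(1)\overline{\We}$ and the correlation of $\tilde{Z}$ with $\Ze$ more explicitly than the paper does.
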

\begin{proof}
	Recall, as $\lppers\to-\infty$, $-2c\stat_3\to 1$ and $-2c\stat_4\to 1$, hence
	\begin{align*}
	\frac{\sigma_{\standardb_g}}{\sqrt{-2c}} 
	\to \frac{1}{\sqrt{\Jgy}}.
	\end{align*}
	Rewrite
	\begin{align*}
	\frac{\standardb_g}{\sqrt{-2c}}
	= \frac{\sqrt{-2c}\left(\stat_{g,1}+\refcorrp\stat_{g,2}\right)}{-2c\stat_{g,3}\Jgy} 
	= \frac{1}{-2c\stat_{g,3}\Jgy} \sqrt{-2c}\int_0^1\We(\s)\rd\By(\s),
	\end{align*}
	where $\By:=\Bgy+\refcorrp\Bgx$. It is not hard to find that, based on the construction in (\ref{eqn:rankbasedscore_g})-(\ref{eqn:rankscorepartialsum}), $\By$ is the limit of the partial-sum process $\frac{1}{\sqrt{T}}\sum_{t=1}^{\lfloor sT \rfloor} \frac{-\dot{g_y}}{g_y}\left(G_y^{-1}\left(\frac{R_{y,t}}{T+1}\right)\right)$. Therefore, under $H_0$, $\By$ is a Brownian bridge. As $\lppers\to-\infty$, by Lemma~\ref{lem:convergetostandard}, we have 
	\begin{align*}
	\sqrt{-2c}\int_0^1\We(\s)\rd\By(\s) = \sqrt{-2c}\int_0^1\We^{\mu}(\s)\rd\Wy(\s) \wto \mathcal{N}(0,\Jgy),
	\end{align*}
where $\Wy$ is the associated Brownian motion of $\By$.	Thus $\frac{\standardb_g}{\sqrt{-2c}} \wto \mathcal{N}(0,\frac{1}{\Jgy})$, which in turn completes the proof.
\end{proof}

~\\
Now we have the standard test  
\begin{align*}
\test_{g,s}(\stat_g,\refcorrp) = \indicator\left\{\standardb_g\big/\sigma_{\standardb_g}>\cv_{g,s}\right\}
\end{align*}
where $\cv_{g,s}$ is the $(1-\siglevel)$-quantile of a standard normal distribution. Similarly, employing the (combined) test as in (\ref{eqn:combinedtest}), we obtain the rank-based test
\begin{align*}
\test_{g,\chi_g}(\stat_g,\refcorrp)=\chi_g\test_{g,s}(\stat_g,\refcorrp) + \left(1-\chi_g\right)\test_{g,n}(\stat_g,\refcorrp),
\end{align*}
where $\chi_g=\indicator\{\standardc_g<K_g\}$. \\

Replacing $\stat_g$ by its finite-sample counterpart $\stat_{g}^{(T)}$, defines the feasible test $\test_{g,\chi_g}(\stat_{g}^{(T)}, \refcorrp)$. In the Monte Carlo study in Section~\ref{sec:MonteCarlo}, following \citet{EMW2015}, we choose $K_g = -130$. 

%\begin{remark}
%We don't need to (nonparametrically) estimate $\J_{fg}$, since we can think of it as choosing a different (unknown) mesh (i.e., support of $\Lambda^{*\epsilon}_{0,g}$) \dots NOT TRUE
%\end{remark}

\clearpage
\section{Additional Simulation Results} \label{app:additional_simulation_results}

\begin{figure}[!htb] 
\centering
\includegraphics[width=15cm]{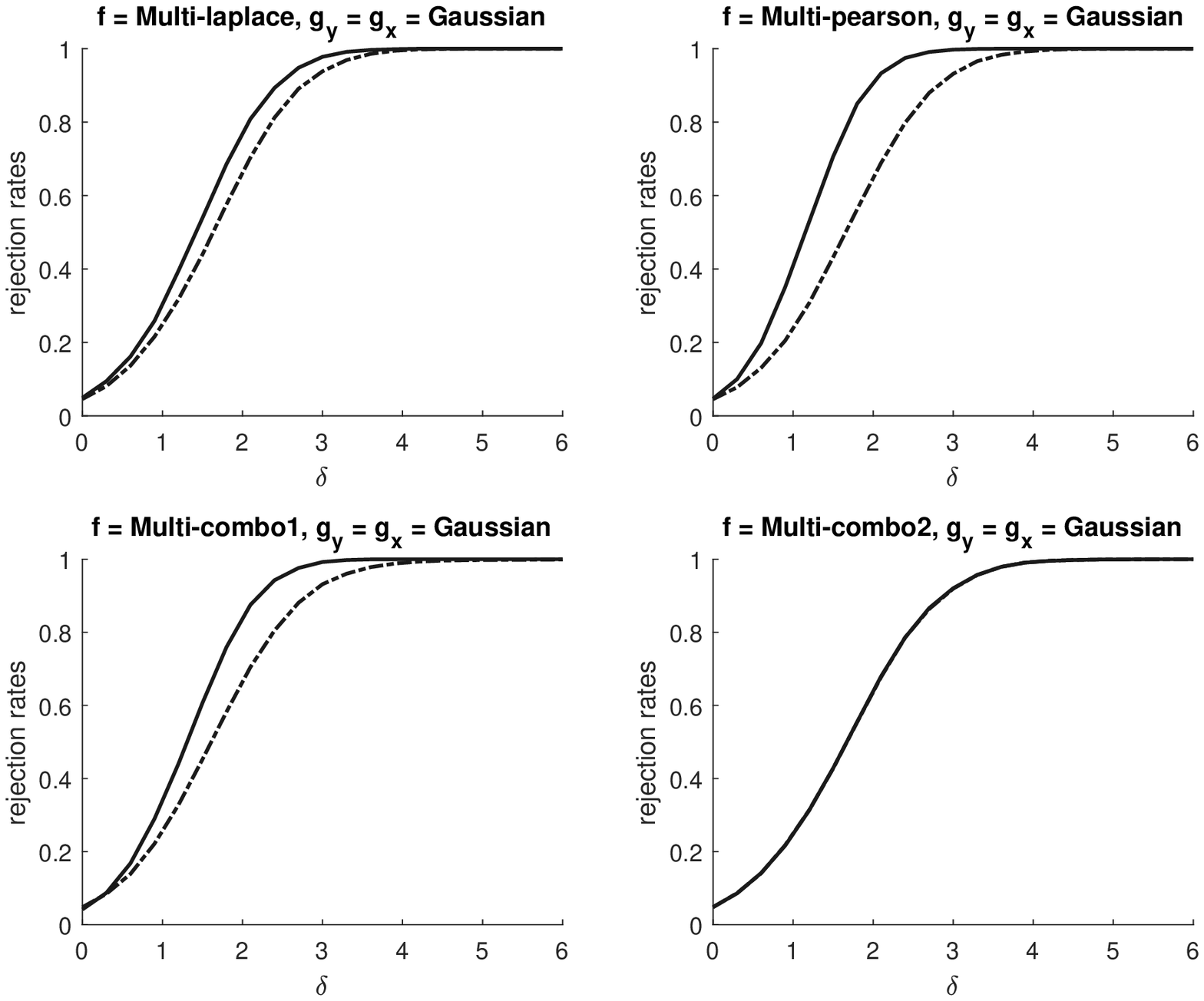}
\caption{Rejection rates of the WZ test (solid lines) and the EMW test (dashed lines) for fixed value of $c = -25$ and different values of $\delta \in [0, 6]$. For four cases, $\rho = -0.9$ and $T = 2,000$.}
\label{hfigure:others_rho05_t2000}
\end{figure}

\begin{figure}[!htb] 
\centering
\includegraphics[width=15cm]{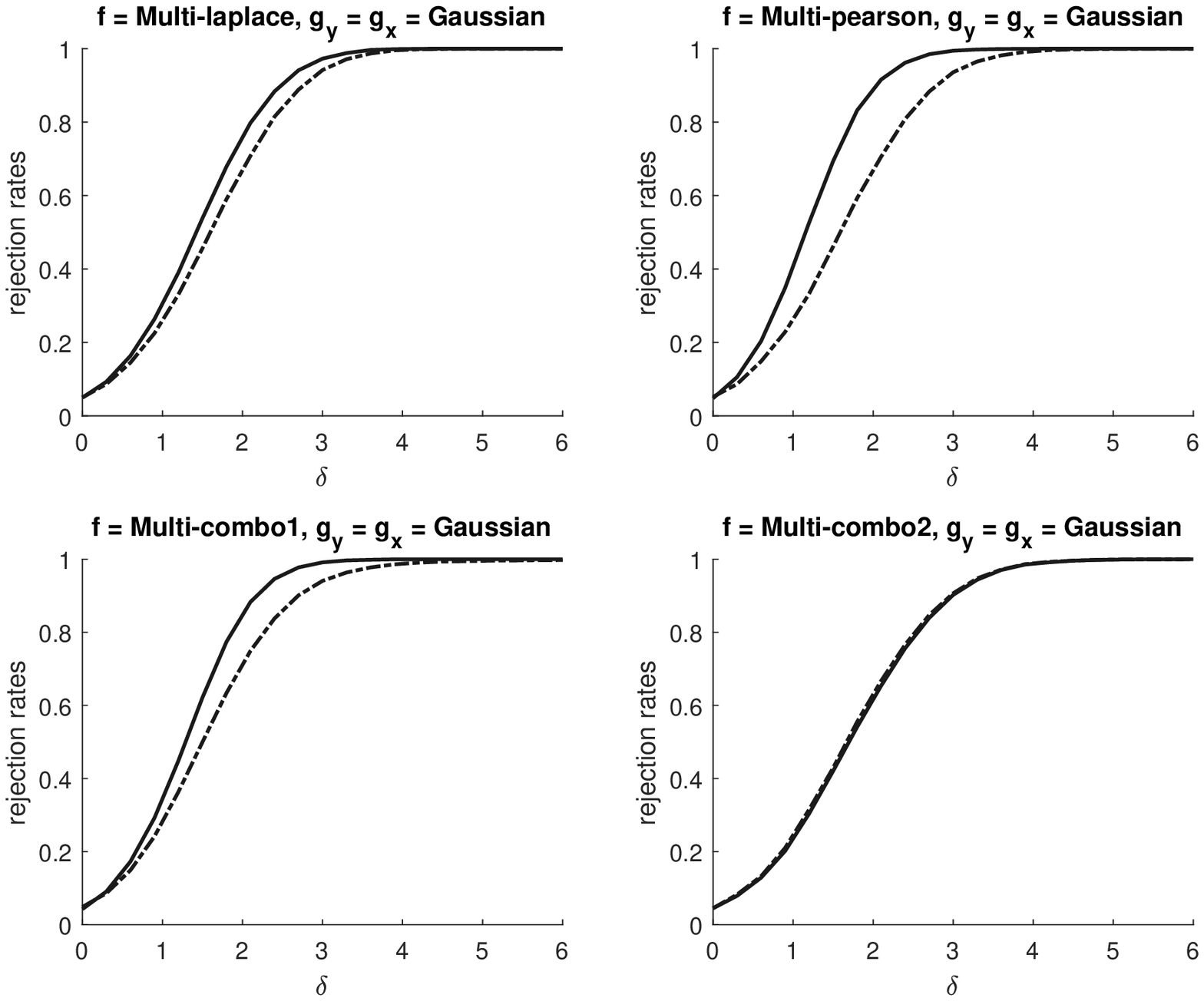}
\caption{Rejection rates of the WZ test (solid lines) and the EMW test (dashed lines) for fixed value of $c = -25$ and different values of $\delta \in [0, 6]$. For four cases, $\rho = -0.5$ and $T = 200$.}
\label{hfigure:others_rho05_t200}
\end{figure}

\begin{figure}[!htb] 
\centering
\includegraphics[width=15cm]{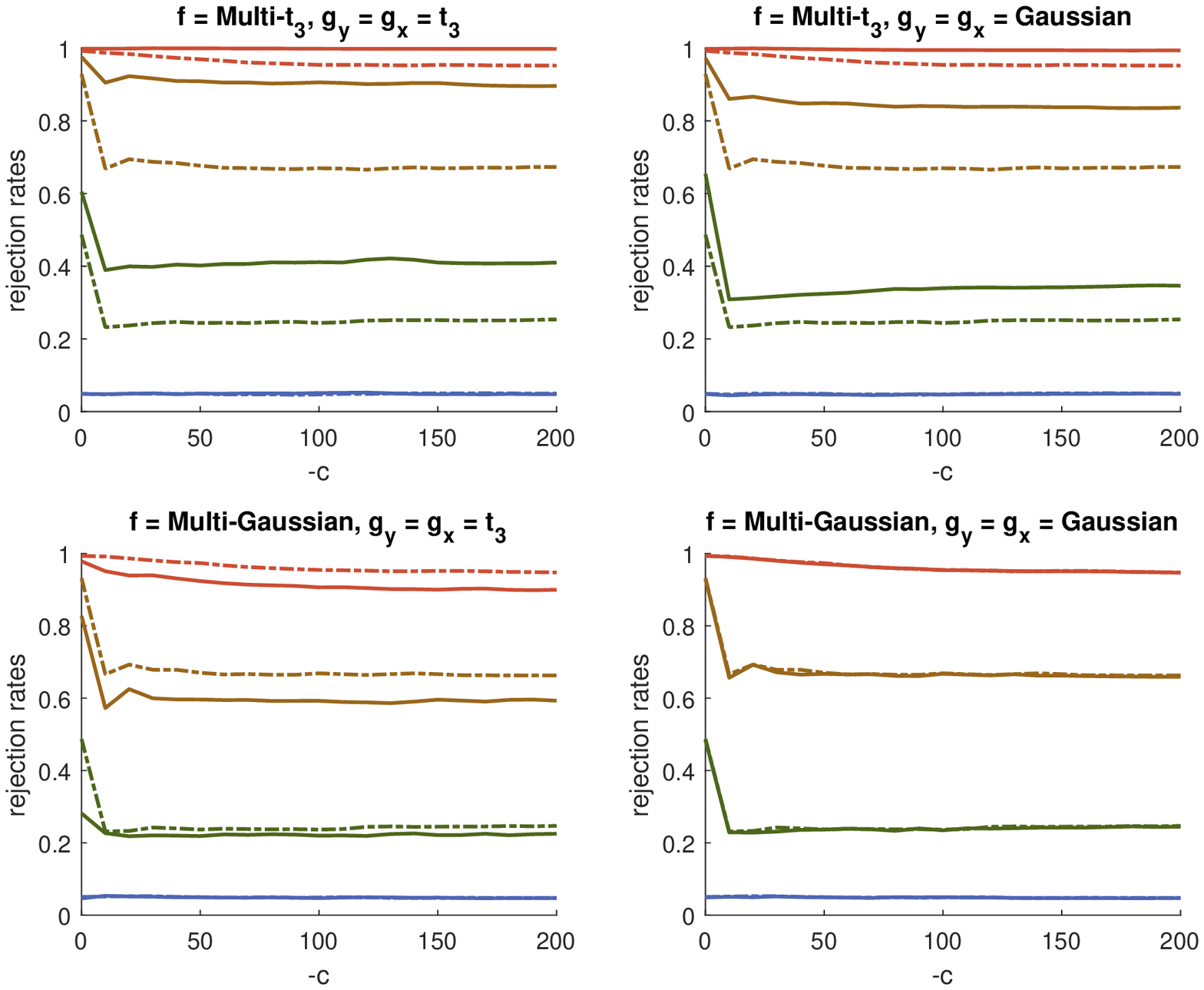}
\caption{Rejection rates of the WZ test (solid lines) and the EMW test (dashed lines) for different values of $\delta$ = 0, 1, 2, and 3, corresponding to lines in blue, green, brown, and red, respectively. For all cases, $\rho = -0.9$ and $T = 2,000$.}
\label{figure:rho09_t2000}
\end{figure}

\begin{figure}[!htb] 
\centering
\includegraphics[width=15cm]{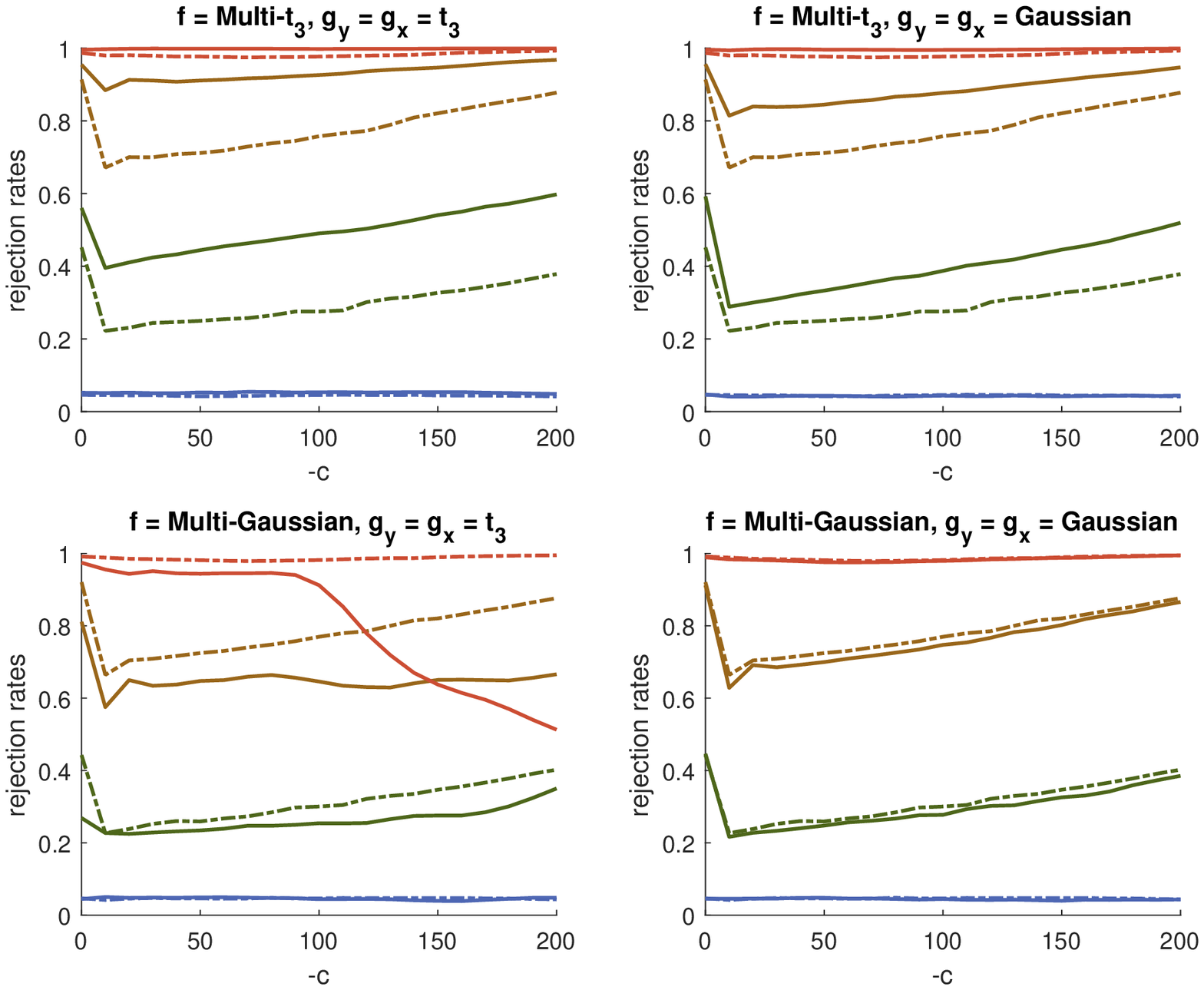}
\caption{Rejection rates of the WZ test (solid lines) and the EMW test (dashed lines) for different values of $\delta$ = 0, 1, 2, and 3, corresponding to lines in blue, green, brown, and red, respectively. For all cases, $\rho = -0.9$ and $T = 200$.}
\label{figure:rho09_t200}
\end{figure}

\begin{figure}[!htb] 
\hspace{-30mm}
\includegraphics[width=20cm]{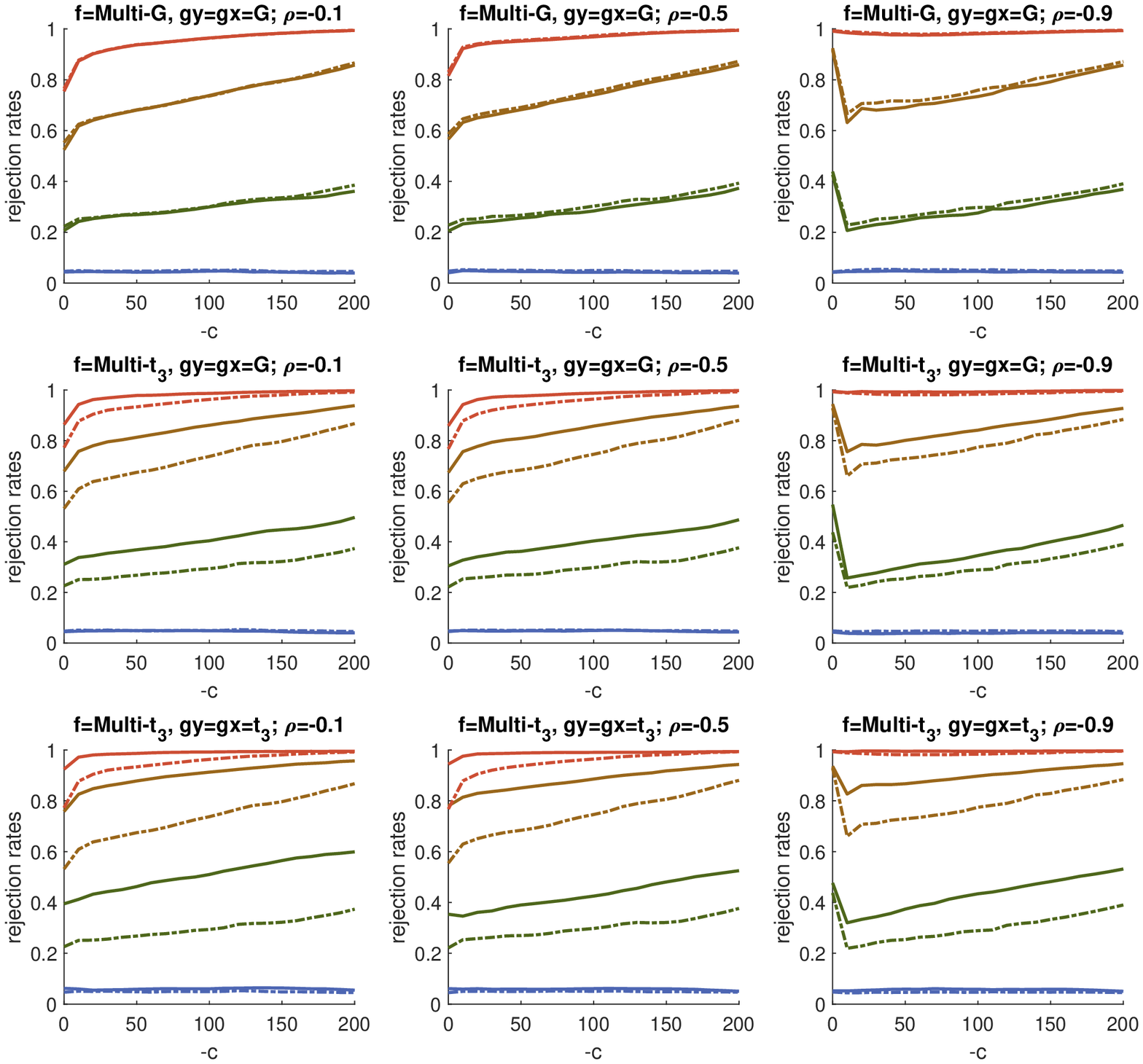}
\caption{Rejection rates of the WZ test (solid lines) and the EMW test (dashed lines) for different values of $\delta$ = 0, 1, 2, and 3, corresponding to lines in blue, green, brown, and red, respectively, under \emph{heteroskedasticity}. For all the four cases, $T = 200$.}
\label{figure:garch_t200}
\end{figure}

\begin{figure}[!htb] 
\hspace{-30mm}
\includegraphics[width=20cm]{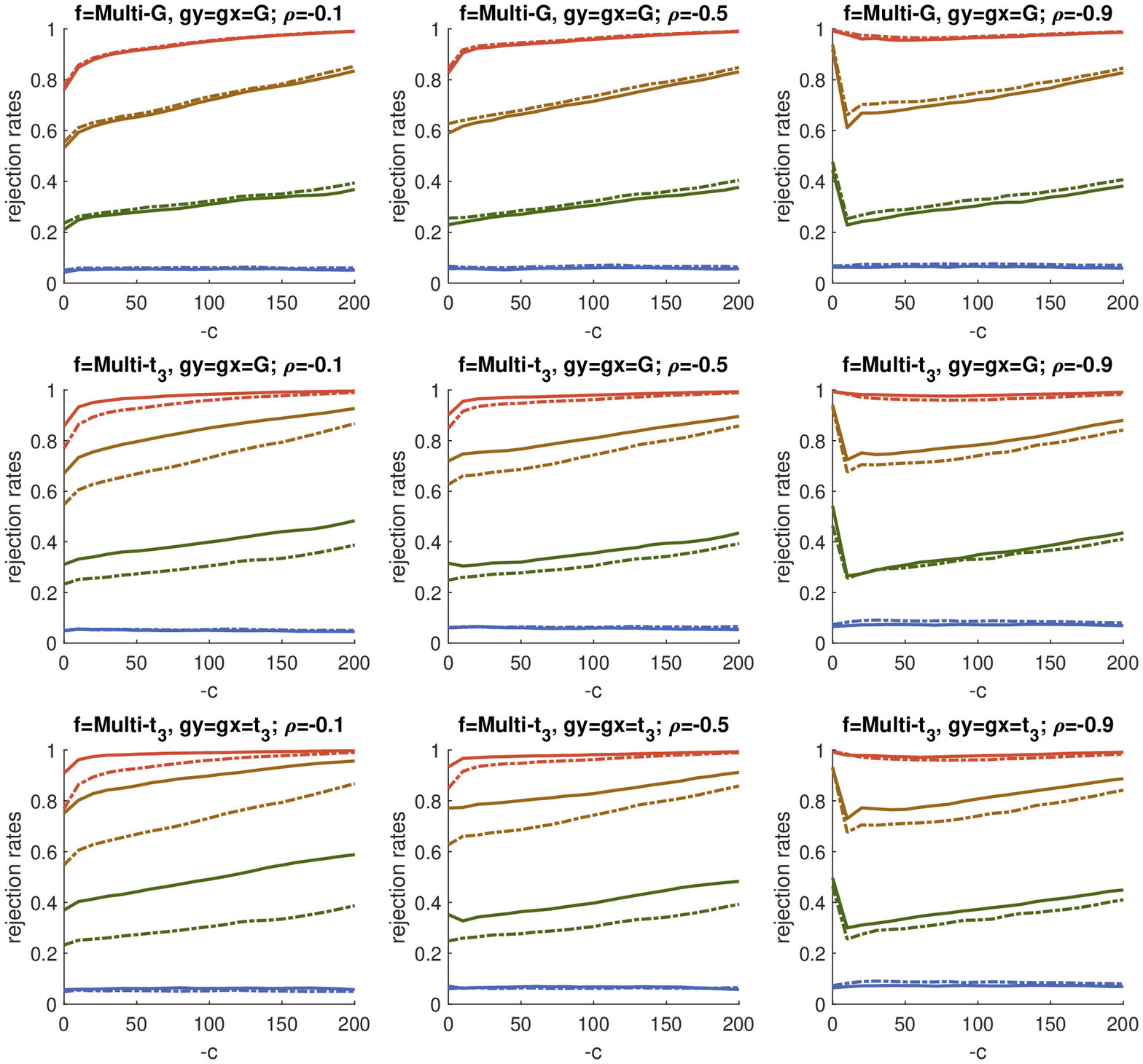}
\caption{Rejection rates of the WZ test (solid lines) and the EMW test (dashed lines) for different values of $\delta$ = 0, 1, 2, and 3, corresponding to lines in blue, green, brown, and red, respectively, under \emph{heteroskedasticity}. For all the four cases, $T = 200$.}
\label{figure:garchyx_t200}
\end{figure}

\end{document}